\newtheorem{defn}{Definition}
\newenvironment{proof}{\quad{\it Proof:}}{\hfill $\square$\par}
\newtheorem{Lemma}{Lemma}
\newtheorem{example}{Example}
\newtheorem{theo}{Theorem}
\newtheorem{obser}{Observation}
\def\BibTeX{{\rm B\kern-.05em{\sc i\kern-.025em b}\kern-.08em
    T\kern-.1667em\lower.7ex\hbox{E}\kern-.125emX}}
\newcommand{\kw}[1]{{\ensuremath {\mathsf{#1}}}\xspace}
\newcommand{\stitle}[1]{\vspace{0.5ex} \noindent{\bf #1}}
\long\def\comment#1{}
\newcommand{\etal}{\emph{et~al.}\xspace}
\newcommand\figref[1]{Fig.~\ref{#1}}
\newcommand\lamref[1]{Lemma~\ref{#1}}
\newcommand\tabref[1]{Table~\ref{#1}}
\newcommand\secref[1]{Section~\ref{#1}}
\newcommand{\baseboundalg}{\kw{BaseBSearch}}
\newcommand{\baseupdateprc}{\kw{UptSMap}}
\newcommand{\bound}{\kw{\overline {ub}}}
\newcommand{\optbound}{\kw{\widetilde {ub}}}
\newcommand{\topbound}{\kw{\widehat {tb}}}
\newcommand{\optboundalg}{\kw{OptBSearch}}
\newcommand{\egobwcalalg}{\kw{EgoBWCal}}
\newcommand{\vertexparaalg}{\kw{VertexPEBW}}
\newcommand{\vparainneralg}{\kw{PEgoBWCal}}
\newcommand{\edgeparaalg}{\kw{EdgePEBW}}
\newcommand{\insertedge}{\kw{LocalInsert}}
\newcommand{\deleteedge}{\kw{LocalDelete}}
\newcommand{\uptlocalsmap}{\kw{LocalUptSMap}}
\newcommand{\insertedgetop}{\kw{LazyInsert}}
\newcommand{\deleteedgetop}{\kw{LazyDelete}}
\newcommand{\pokec}{\kw{Pokec}}
\newcommand{\dblp}{\kw{DBLP}}
\newcommand{\youtube}{\kw{Youtube}}
\newcommand{\livejournal}{\kw{LiveJournal}}
\newcommand{\wikitalk}{\kw{WikiTalk}}
\newcommand{\db}{\kw{DB}}
\newcommand{\ir}{\kw{IR}}
\newcommand{\astroph}{\kw{Astroph}}
\newcommand{\enron}{\kw{Enron}}
\newcommand{\epinions}{\kw{Epinions}}
\newcommand{\euall}{\kw{Euall}}
\newcommand{\flickr}{\kw{Flickr}}
\newcommand{\gowalla}{\kw{Gowalla}}
\newcommand{\hepph}{\kw{Hepph}}
\newcommand{\notredame}{\kw{Notredame}}
\newcommand{\slashdot}{\kw{Slashdot}}
\newcommand{\wikivote}{\kw{Wikivote}}
\newcommand{\bbsalg}{\kw{BaseBS}}
\newcommand{\obsalg}{\kw{OptBS}}
\newcommand{\btalg}{\kw{TopBW}}
\newcommand{\ebtalg}{\kw{TopEBW}}
\begin{document}

\title{Efficient Top-$k$ Ego-Betweenness Search\\
}
\author{{Qi Zhang$\scriptsize^{\dag}$, Rong-Hua Li$\scriptsize^{\dag}$, Minjia Pan$\scriptsize^{\dag}$, Yongheng Dai$\scriptsize^{\ddag}$, Guoren Wang$\scriptsize^{\dag}$, Ye Yuan$\scriptsize^{\dag}$}
	\vspace{1.6mm}\\
	\fontsize{9}{9}\selectfont\itshape
	$\scriptsize^{\dag}$Beijing Institute of Technology, Beijing, China; $\scriptsize^{\ddag}$Diankeyun Technologies Co. , Ltd.; \\
	\fontsize{8}{8}\selectfont\ttfamily\upshape
	qizhangcs@bit.edu.cn; lironghuabit@126.com; panminjia\_cs@163.com; \\toyhdai@163.com; wanggrbit@126.com; yuan-ye@bit.edu.cn
}

\maketitle

\begin{abstract}
Betweenness centrality, measured by the number of times a vertex occurs on all shortest paths of a graph, has been recognized as a key indicator for the importance of a vertex in the network. However, the betweenness of a vertex is often very hard to compute because it needs to explore all the shortest paths between the other vertices. Recently, a relaxed concept called ego-betweenness was introduced which focuses on computing the betweenness of a vertex in its ego network. In this work, we study a problem of finding the top-$k$ vertices with the highest ego-betweennesses. We first develop two novel search algorithms equipped with a basic upper bound and a dynamic upper bound to efficiently solve this problem. Then, we propose local-update and lazy-update solutions to maintain the ego-betweennesses for all vertices and the top-$k$ results when the graph is updated, respectively. In addition, we also present two efficient parallel algorithms to further improve the efficiency. The results of extensive experiments on five large real-life datasets demonstrate the efficiency, scalability, and effectiveness of our algorithms.
\end{abstract}

\section{Introduction} \label{sec:introduction}

Betweenness centrality is a fundamental metric in network analysis~\cite{betweencent1977freeman,eigencent2008newman}. The betweenness centrality of a vertex $v$ is the sum of the ratio of the shortest paths that pass through $v$ between other vertices in a graph. Such a centrality metric has been successfully used in a variety of network analysis applications, such as social network analysis~\cite{DBLP:conf/semco/Ostrowski15}, biological network analysis~\cite{jeong2001lethalitybio}, communication network analysis~\cite{baldesi2017usetrans} and so on. More specifically, in social networks, a vertex with a high betweenness centrality is plausibly an influential user who can decide whether to share information or not~\cite{DBLP:conf/semco/Ostrowski15}. In protein interaction networks, the high-betweenness proteins represent important connectors that link some modular organizations~\cite{jeong2001lethalitybio}. In communication networks, the nodes with higher betweennesses might have more control over the network, thus attacking these nodes may cause severe damage to the network~\cite{baldesi2017usetrans}.

Although betweenness centrality plays a critical role in network analysis, computing betweenness scores for all vertices is notoriously expensive because it requires exploring the shortest paths between all vertices in a graph. The state-of-the-art algorithm for betweenness computation is the Brandes' algorithm \cite{brandes2001faster} which takes ${\mathcal O}(nm)$ time. Such a time complexity is acceptable only in small graphs with a few tens of thousands of vertices and edges, but it is prohibitively expensive on modern networks with millions of vertices and tens of millions of edges.



To avoid the high computational cost problem, Everett \etal \cite{everett2005ego} introduced a relaxed concept called ego-betweenness centrality which focuses on computing a vertex's betweenness in its ego network, where the ego network of a vertex $u$ is the subgraph induced by $u$ and $u$'s neighbors. More specifically, the ego-betweenness of a vertex $u$ is measured by the sum of the ratio of the shortest paths that pass through $u$ between $u$'s neighbors in the ego network. Everett \etal showed that the ego-betweenness centrality is highly correlated with the traditional betweenness centrality in networks, thus it can be considered as a good approximation of the traditional betweenness. Hence, like betweenness centrality, ego-betweenness can measure the importance of a node as a ``link'' between different parts of the graph. For vertices $v$ and $w$, if $u$ is the only vertex that connects $v$ and $w$ in $u$'s ego network, then $u$ is important to control the information flow between $v$ and $w$. On the other hand, there are alternative vertices to connect the two vertices and $u$ can be easily bypassed. A vertex with a high ego-betweenness indicates that it has higher control over its ego network and is not easily replaced by other vertices, thus it plays an important role in the graph. Moreover, real-life applications often require retrieving the top-$k$ vertices with the highest ego-betweenness scores, rather than the exact ego-betweenness scores for all vertices. Motivated by this, we in this paper study the problem of identifying the top-$k$ vertices in a graph with the highest ego-betweennesses.


To solve the top-$k$ ego-betweenness search problem, a straightforward algorithm is to calculate the ego-betweennesses for all vertices and then select the top-$k$ results. However, such a straightforward algorithm is very costly for large graphs, because the total cost for constructing the ego network for each vertex is very expensive in large graphs. To efficiently compute the top-$k$ vertices, the general idea of top-$k$ search frameworks \cite{15vldbjstrucdiv, chang2017scalable, zhang2020efficient} can be used, which explores the vertices based on a predefined ordering and then applies some upper-bounding rules to prune the unpromising vertices. Inspired by these algorithms, we first derive a basic upper bound and a dynamic upper bound of ego-betweenness. Then, we develop two top-$k$ search algorithms with those bounds to efficiently solve the top-$k$ ego-betweenness search problem. To handle dynamic graphs, we present local-update solutions to maintain ego-betweennesses for all vertices, and also develop lazy-update techniques to maintain the top-$k$ results. Additionally, we propose two efficient parallel algorithms to improve the efficiency of ego-betweenness computation. In summary, we make the following contributions.

{\kw{Top}-$k$~\kw{search~algorithms}.} We develop a basic algorithm with a static upper bound and an improved algorithm with a tighter and dynamically-updating upper bound to find the top-$k$ vertices with the highest ego-betweennesses. Both the algorithms consume $O(\alpha m d_{\max})$ time using $O( md_{\max})$ space in the worst case. Here $\alpha$ is the arboricity of the graph \cite{chiba1985arboricity} which is typically very small in real-life graphs \cite{12tcsarboricity}. We show that both algorithms can significantly prune the vertices that are definitely not contained in the top-$k$ results. Moreover, the improved algorithm can achieve more effective pruning performance due to the tighter and dynamically-updating upper bound.

{\kw{Ego}-\kw{betweenness~maintenance~and~parallel~algorithms}.} We develop local-update algorithms to maintain the ego-betweennesses for all vertices when the graph is updated. We also propose lazy-update techniques to maintain the top-$k$ results for dynamic graphs. To further improve the efficiency, we present two efficient parallel algorithms to compute all vertices' ego-betweennesses. Compared with the sequential algorithms, our parallel solutions can achieve a high degree of parallelism, thus improving the efficiency of ego-betweenness computation significantly.


{\kw{Extensive~experiments}.} We conduct comprehensive experimental studies to evaluate the proposed algorithms using five large real-world datasets. The results show that 1) our improved algorithm with a dynamic upper bound is roughly 3-23 times faster than the basic algorithm; 2) our maintenance algorithms can maintain the top-$k$ results in less than 0.3 seconds in a large graph with 3,997,962 vertices and 34,681,189 edges; 3) our best parallel algorithm can achieve near 16 speedup ratio when using 16 threads; 4) the top-$k$ results of ego-betweenness are highly similar to the top-$k$ results of traditional betweenness. Thus, our results indicate that the ego-betweenness metric can be seen as a very good approximation of the traditional betweenness metric, but it is much cheaper to compute by utilizing the proposed algorithms.

{\kw{Reproducibility}.} For reproducibility, the source code of this paper is released at github: \url{https://github.com/QiZhang1996/egobetweenness}.


\stitle{Organization.} We introduce some important notations and formulate our problem in \secref{sec:preliminaries}. \secref{sec:onlinealg} presents the top-$k$ search algorithms. The ego-betweenness maintenance algorithms are developed in \secref{sec:onlineupdate}. We propose two parallel algorithms to speed up the ego-betweenness computation in \secref{sec:parallelalg}. \secref{sec:experiments} reports the experimental results. We survey related studies in \secref{sec:relatedwork} and conclude this work in \secref{sec:conclusion}.

\section{Preliminaries} \label{sec:preliminaries}

Let $G = (V, E)$ be an undirected and unweighted graph with $n = | V |$ vertices and $m = | E |$ edges. We denote the set of neighbors of a vertex $u$ by $N(u)$, i.e., $N(u) = \{ v \in V | (u, v) \in E\}$, and the degree of $u$ by $d(u) = |{N(u)}|$. Similarly, the neighbors of an edge $(u, v)$, denoted by $N(u, v)$, are the vertices that are adjacent to both $u$ and $v$, i.e., $N(u, v) = \{ w \in V|(u, w) \in E, (v, w) \in E\}$. For a subset $S \subseteq V$, the subgraph of $G$ induced by $S$ is defined as ${G_S} = ({V_S}, {E_S})$ where ${V_S} = S$ and ${E_S} = \{(u, v) | u, v \in S, (u, v) \in E\}$.

We define a total order $\prec$ on $V$ as follows. For vertices $u$ and $v$ in $V$, we say $u \prec v$, if and only if 1) $d(u) > d(v)$ or 2) $d(u) = d(v)$ and $u$ has a larger ID than $v$. Based on such a degree ordering $\prec$, we can construct a directed graph $G^+$ from $G$ by orientating each undirected edge $(u, v) \in G$ to respect the total order $u \prec v$. We denote the out-neighborhood of $u$ in $G^+$ as $N^+(u) = \{v \in V | (u, v) \in E^+\}$.

We give an essential concept, called \emph{ego network}, as follows.

\begin{defn} \label{def:ego-net}
\kw{(Ego}~\kw{network)} For vertex $p$ in $G = (V, E)$, the \emph{ego network} of $p$, denoted by $G_{E(p)}$, is a subgraph of $G$ induced by the vertex set $N(p) \cup \{ p\} $.
\end{defn}

Given a graph $G = (V, E)$ and a vertex $p \in V$. We use $\bar{S}_{E(p)}$ to denote the edges between the neighbors of $p$, i.e., $\bar{S}_{E(p)} = \left\{(u, v) | u, v \in N(p), (u, v) \in E \right\}$. For vertices $u, v \in N(p)$ and $(u, v) \notin E$, we suppose that $u \prec v$. Let $\hat{S}_p(u, v)$, which does not include $p$, be the set of vertices that connect $u$ and $v$ in $G_{E(p)}$, i.e., $\hat{S}_p(u, v) = \left\{w | u, v, w \in N(p), (u, v) \notin E, (u, w) \in E, (v, w) \in E \right\}$. If there is only one vertex $p$ that links $u$ and $v$ in $G_{E(p)}$, we add the pair $(u, v)$ into the set $\ddot{S}_{E(p)}$, i.e., $\ddot{S}_{E(p)} = \{(u, v) | u, v, w \in N(p), (u, v) \notin E, (u, w) \notin E || (v, w) \notin E\}$. Denote by $\hat{S}_{E(p)}$ the set of all $\hat{S}_p(u, v)$s. We use $\bar{C}_p$ to represent the size of $\bar{S}_{E(p)}$, i.e., $\bar{C}_p = |\bar{S}_{E(p)}|$. Similarly, we denote $\hat{C}_p = |\hat{S}_{E(p)}|$ and $\ddot{C}_p = |\ddot{S}_{E(p)}|$.

\begin{figure}[t!]\vspace*{-0.2cm}
\centering
  \subfigure[$G$]{
  \label{fig:expgraph}
  \begin{minipage}{4.5cm}
  \centering
  \includegraphics[width=\textwidth]{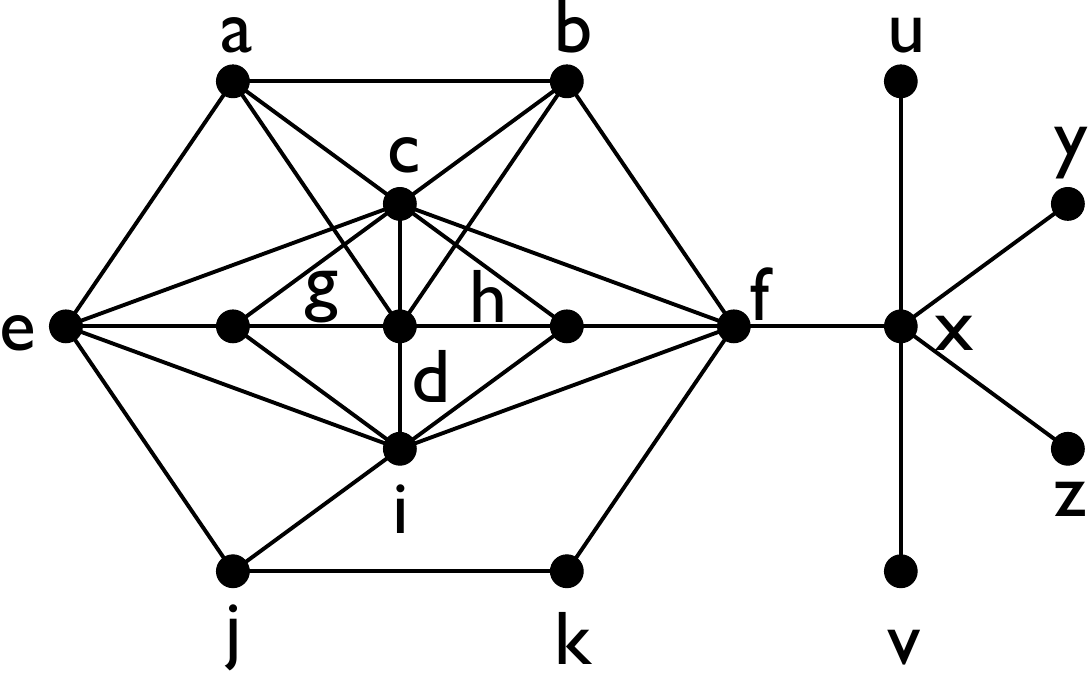}
  \end{minipage}
  }
  \subfigure[$G_{E(d)}$]{
  \label{fig:expego}
  \begin{minipage}{1.55cm}
  \centering
  \includegraphics[width=\textwidth]{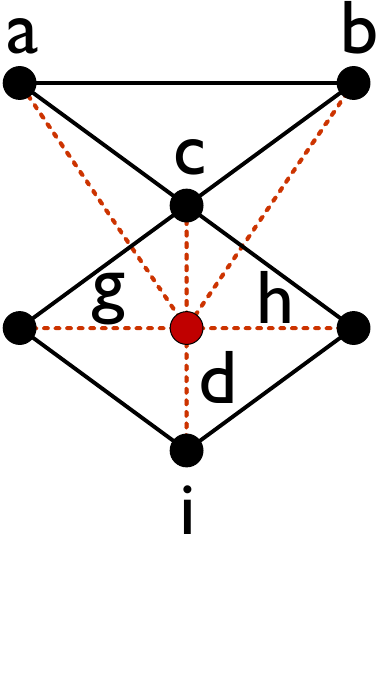}
  \end{minipage}
  }
\vspace*{-0.2cm}
\caption{Running example}
\label{fig:runexpgraph}
\vspace*{-0.2cm}
\end{figure}

\comment{
\begin{example}
Consider a graph $G$ in \figref{fig:expgraph}. Take the vertex $d$ as an example, the set of its neighbors $N(d)$ is $ \left\{ {a, b, c, g, h, i} \right\}$. \figref{fig:expego} shows the ego network $G_{E(d)}$ of $d$. Clearly, the edges colored black in \figref{fig:expego} belong to $\bar{S}_{E(d)}$, i.e., $\bar{S}_{E(d)} = \{(a, b), (a, c), (b, c), (c, g), (c, h),(g, i), (h, i)\}$. For vertex $c$ and vertex $i$ ($c \prec i$), there is no edge between $c$ and $i$, but $c$ can reach $i$ through vertices $d, g$ and $h$, thus $\hat{S}_d(c, i) = \{g, h\}$. We can easily check that $\hat{S}_{E(d)} = \{\hat{S}_d(a, g), \hat{S}_d(a, h), \hat{S}_d(b, g), \hat{S}_d(b, h), \hat{S}_d(c, i), \hat{S}_d(g, h)\}$ with $\hat{S}_d(a, g) = \hat{S}_d(a, h) = \hat{S}_d(b, g) = \hat{S}_d(b, h) = \{c\}$, $\hat{S}_d(c, i) = \{g, h\}$ and $\hat{S}_d(g, h) = \{c, i\}$. For vertices $a$ and $i$ ($i \prec a$), $d$ is the only one vertex that makes $a$ and $i$ reachable, thus the pair $(i, a)$ is added into $\ddot{S}_{E(d)}$. Analogously, we have $\ddot{S}_{E(p)} = \{(i, a), (i, b)\}$.
\end{example}
}

Given a vertex $p$ in $G$ and its ego network $G_{E(p)}$, for $u, v \in N(p)$, let $g_{uv}$ be the number of the shortest paths connecting $u$ and $v$ in $G_{E(p)}$ and $g_{uv}(p)$ be the number of those shortest paths that contain vertex $p$. Note that in $G_{E(p)}$, $g_{uv}(p)$ is either $0$ or $1$. Denote by $b_{uv}(p) = g_{uv}(p)/g_{uv}$ the probability that a randomly selected shortest path connecting $u$ with $v$ contains $p$ in $G_{E(p)}$. Based on the above notions, the definition of \emph{ego-betweenness} is following.


\begin{defn} \label{def:ego-betweenness}
\kw{(Ego}-\kw{betweenness)} For a vertex $p$ in $G$ , the ego-betweenness of $p$, denoted by ${C_B(p)}$, is defined as $C_B(p) = \sum_{u \prec v} b_{uv}(p), u, v \in {N(p)}$.
\end{defn}

\begin{example}
Consider a graph $G$ in \figref{fig:expgraph} and a vertex $d \in G$ with the ego network $G_{E(d)}$ illustrated in \figref{fig:expego}. For vertices $c$ and $i$, there are three shortest paths connecting $c$ and $i$ in $G_{E(d)}$, namely, $c \rightarrow g \rightarrow i$, $c \rightarrow h \rightarrow i$, and $c \rightarrow d \rightarrow i$, thus $g_{ci} = 3$ and $g_{ci}(d) = 1$, further $b_{ci}(d) = g_{ci}(d)/g_{ci} = 1/3$ holds. Analogously, we have $b_{hg}(d) = 1/3$, $b_{ga}(d) = b_{gb}(d) = b_{ha}(d) = b_{hb}(d) = 1/2$, $b_{ia}(d) = b_{ib}(d) = 1$ and the probabilities for other vertex pairs in $G_{E(d)}$ are equal to 0, thus ${C_B(d)} = 14/3$.
\end{example}

\stitle{Problem definition.} Given a graph $G$ and an integer $k$, the top-$k$ ego-betweenness search problem is to identify the $k$ vertices in $G$ with the highest ego-betweenness scores.

The following example illustrates the definition of our problem.

\begin{example}
Reconsider the graph $G$ shown in \figref{fig:expgraph}. Based on the definition of ego-betweenness, we can easily derive the ego-betweennesses of all vertices. For instance, we have $C_B(f) = 11$, $C_B(x) = 10$, and $C_B(i) = 8$. Suppose that $k = 1$, $f$ is the answer because it has the highest ego-betweenness score among all vertices. When $k = 3$, the answers are $f$, $x$ and $i$. This is because there is no other vertex with the ego-betweenness greater than $8$ in $G$.
\end{example}

In addition, real-world networks undergo dynamically updates. To this end, we also investigate the problem of top-$k$ ego-betweenness maintenance when the graph is updated.

\stitle{Challenges.} To solve the top-$k$ ego-betweenness search problem, a straightforward algorithm is to compute the ego-betweenness for each vertex, and then pick the top-$k$ vertices as the answers. Such an approach, however, is costly for large graphs. This is because the algorithm needs to explore the ego network $G_{E(p)}$ to compute the ego-betweenness for each vertex $p$. The total size of all ego networks could be very large, thus the straightforward algorithm might be very expensive for large graphs. Since we are only interested in the top-$k$ results, we do not need to compute all vertices’ ego-betweenness scores exactly. The challenges of the problem are: 1) how to efficiently prune the vertices that are definitely not contained in the top-$k$ results; 2) how to efficiently compute the ego-betweenness for each vertex; 3) how to maintain the top-$k$ vertices with the highest ego-betweennesses in dynamic networks. To tackle these challenges, we will develop two new online search algorithms with two non-trivial punning techniques to efficiently search the top-$k$ ego-betweenness vertices. Then, We also design local update techniques and lazy update techniques to handle frequent updates and maintain the top-$k$ results.

\section{Top-$k$ ego-betweenness search} \label{sec:onlinealg}
In this section, we first present a top-$k$ ego-betweenness search algorithm, called \baseboundalg, which is equipped with an upper-bounding strategy to prune the search space. Then, to further improve the efficiency, we propose the \optboundalg algorithm with a dynamic upper bound which is tighter than that of \baseboundalg. 

\subsection{The \baseboundalg algorithm} \label{subsec:onlineICDE}
Before introducing the \baseboundalg algorithm, we first give some useful lemmas which lead to an upper bound of ego-betweenness for pruning search space in \baseboundalg.

\begin{Lemma}
\label{lem:baselemma1}
For any vertex $p$ in $G$, we have $\bar{C}_p + \hat{C}_p + \ddot{C}_p = \frac{d(p)*(d(p)-1)}{2}$.
\end{Lemma}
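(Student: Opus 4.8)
The plan is to establish the identity by a straightforward counting argument: I will show that the three quantities $\bar{C}_p$, $\hat{C}_p$, and $\ddot{C}_p$ count the cells of a partition of the set of all unordered pairs of distinct neighbors of $p$. Since $p$ has $d(p)$ neighbors, there are exactly $\binom{d(p)}{2} = \frac{d(p)(d(p)-1)}{2}$ such pairs, so the identity follows immediately once the partition is verified.

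First I would fix an arbitrary unordered pair $\{u, v\}$ with $u, v \in N(p)$ and $u \neq v$, and classify it by two conditions: whether $(u, v) \in E$, and, when $(u, v) \notin E$, whether $u$ and $v$ share at least one common neighbor inside $N(p)$. This yields exactly three mutually exclusive and collectively exhaustive cases: (i) $(u, v) \in E$, so $\{u, v\}$ contributes to $\bar{S}_{E(p)}$; (ii) $(u, v) \notin E$ and $\hat{S}_p(u, v) \neq \emptyset$, i.e. some $w \in N(p)$ is adjacent to both $u$ and $v$, so the pair contributes to $\hat{S}_{E(p)}$; and (iii) $(u, v) \notin E$ with $\hat{S}_p(u, v) = \emptyset$, in which case $p$ is the only vertex of $G_{E(p)}$ linking $u$ and $v$, so the pair belongs to $\ddot{S}_{E(p)}$.

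Next I would argue that each of the three sets is precisely the collection of pairs falling into its corresponding case. For $\bar{S}_{E(p)}$ this is immediate from its definition as the set of edges among $N(p)$. For $\hat{S}_{E(p)}$, note that $\hat{C}_p$ counts the non-adjacent pairs whose common-neighbor set $\hat{S}_p(u, v)$ is non-empty, which is exactly case (ii). For $\ddot{S}_{E(p)}$, the defining condition encodes that $(u, v) \notin E$ and no neighbor of $p$ is simultaneously adjacent to $u$ and $v$, matching case (iii). Hence the three sets partition the $\binom{d(p)}{2}$ neighbor pairs, and summing their sizes gives $\bar{C}_p + \hat{C}_p + \ddot{C}_p = \frac{d(p)(d(p)-1)}{2}$.

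The argument is essentially routine; the only point requiring care is the bookkeeping in case (iii). The set-builder definition of $\ddot{S}_{E(p)}$ is terse, and its disjunction $(u, w) \notin E \,\lor\, (v, w) \notin E$ must be read as saying that \emph{no} $w \in N(p)$ is a common neighbor of $u$ and $v$ — i.e. as the complement, among the non-edges, of the pairs already captured by $\hat{S}_{E(p)}$ — rather than as an existential statement about a single failing $w$. This is the only reading consistent with the running example, where $\ddot{S}_{E(d)} = \{(i, a), (i, b)\}$. Once this interpretation is fixed, exhaustiveness and mutual exclusivity are clear, and the identity follows.
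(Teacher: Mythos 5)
Your proof is correct and takes the same route as the paper: the paper's own argument is exactly this observation that $\bar{S}_{E(p)}$, $\hat{S}_{E(p)}$, and $\ddot{S}_{E(p)}$ partition the $\binom{d(p)}{2}$ unordered pairs of neighbors of $p$, stated there in one line without the explicit case analysis. Your fleshed-out three-case verification (and your reading of the terse disjunction in the definition of $\ddot{S}_{E(p)}$ as ``no $w \in N(p)$ is a common neighbor,'' which is indeed the interpretation consistent with the running example and with exhaustiveness, since any two neighbors of $p$ are always linked through $p$ itself) is a faithful expansion of the same argument.
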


\begin{proof}
Clearly, the vertex pairs between vertex $p$'s neighbors are divided into three categories, namely, $\bar{S}_{E(p)}$, $\hat{S}_{E(p)}$ and $\ddot{S}_{E(p)}$. Therefore, the sum of $\bar{C}_p$, $\hat{C}_p$ and $\ddot{C}_p$ is the number of all vertex pairs between $N(p)$, i.e., $\bar{C}_p + \hat{C}_p + \ddot{C}_p = \frac{d(p)*(d(p)-1)}{2}$.
\end{proof}


\begin{Lemma}
\label{lem:baselemma2}
For any vertex $p$ in $G$, $C_B(p)=\frac{d(p)*(d(p)-1)}{2} - \bar{C}_p - \hat{C}_p + \sum_{(u, v)} \frac{1}{|\hat{S}_p(u, v)|+1} \le \bound(p) = \frac{d(p)*(d(p)-1)}{2}$ holds.
\end{Lemma}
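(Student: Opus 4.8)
The plan is to prove the equality by computing the contribution $b_{uv}(p)$ of each neighbor pair according to which of the three classes $\bar{S}_{E(p)}$, $\hat{S}_{E(p)}$, $\ddot{S}_{E(p)}$ it belongs to, and then to derive the upper bound by a term-by-term estimate of the resulting correction term. The one structural fact I would establish first is that inside the ego network $G_{E(p)}$, every shortest path between two \emph{non-adjacent} neighbors $u, v \in N(p)$ has length exactly $2$: since $u$ and $v$ are both adjacent to $p$, the path $u \to p \to v$ has length $2$, and no path of length $1$ exists because $(u,v) \notin E$. Hence each shortest $u$--$v$ path passes through a single intermediate vertex, which is either $p$ itself or a common neighbor $w \in \hat{S}_p(u,v)$; this is precisely what guarantees the clean count below.

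Next I would treat the three classes separately. For $(u,v) \in \bar{S}_{E(p)}$ the unique shortest path is the edge $(u,v)$ of length $1$, which does not contain $p$, so $g_{uv}(p) = 0$ and $b_{uv}(p) = 0$. For $(u,v) \in \hat{S}_{E(p)}$ the shortest paths are exactly the length-$2$ paths through $p$ and through each $w \in \hat{S}_p(u,v)$, giving $g_{uv} = |\hat{S}_p(u,v)| + 1$ and $g_{uv}(p) = 1$, hence $b_{uv}(p) = \frac{1}{|\hat{S}_p(u,v)|+1}$. For $(u,v) \in \ddot{S}_{E(p)}$ vertex $p$ is the only connector, so $g_{uv} = g_{uv}(p) = 1$ and $b_{uv}(p) = 1$. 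Summing over all neighbor pairs per Definition~\ref{def:ego-betweenness} then yields $C_B(p) = \ddot{C}_p + \sum_{(u,v)} \frac{1}{|\hat{S}_p(u,v)|+1}$, where the sum ranges over the pairs in $\hat{S}_{E(p)}$. Substituting the identity $\ddot{C}_p = \frac{d(p)(d(p)-1)}{2} - \bar{C}_p - \hat{C}_p$ from Lemma~\ref{lem:baselemma1} produces the claimed equality.

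Finally, for the inequality I would bound the correction term directly. The sum $\sum_{(u,v)} \frac{1}{|\hat{S}_p(u,v)|+1}$ contains exactly $\hat{C}_p$ summands, and since $|\hat{S}_p(u,v)| \ge 1$ for every pair in $\hat{S}_{E(p)}$, each summand is at most $\frac{1}{2} \le 1$, so the sum is at most $\hat{C}_p$. The $-\hat{C}_p$ term therefore cancels and we get $C_B(p) \le \frac{d(p)(d(p)-1)}{2} - \bar{C}_p \le \frac{d(p)(d(p)-1)}{2} = \bound(p)$, as required.

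I expect no genuine obstacle here: once the case analysis is in place the arithmetic is elementary. The only step demanding care is the structural observation that every shortest path between two non-adjacent neighbors of $p$ has length exactly $2$, since this is what forces $g_{uv}(p) \in \{0,1\}$ (as already noted before Definition~\ref{def:ego-betweenness}) and underlies the counting $g_{uv} = |\hat{S}_p(u,v)| + 1$ that drives the whole computation.
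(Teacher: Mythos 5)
Your proof is correct and takes essentially the same route as the paper's: partition the neighbor pairs into the three classes $\bar{S}_{E(p)}$, $\hat{S}_{E(p)}$, $\ddot{S}_{E(p)}$, evaluate $b_{uv}(p)$ on each class, substitute the identity from Lemma~\ref{lem:baselemma1}, and bound the correction sum $\sum_{(u,v)} \frac{1}{|\hat{S}_p(u,v)|+1}$ by $\hat{C}_p$ to cancel the $-\hat{C}_p$ term. Your explicit justification that every shortest path between non-adjacent neighbors of $p$ has length exactly $2$ (which underlies $g_{uv} = |\hat{S}_p(u,v)|+1$ and $g_{uv}(p) \in \{0,1\}$), and your slightly sharper per-summand bound of $\tfrac{1}{2}$, merely make explicit what the paper leaves implicit.
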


\begin{proof}
Based on Definition \ref{def:ego-betweenness}, $C_B(p)$ is closely related to the number of shortest paths between $u$ and $v$ in $G_{E(p)}$. First, for each $(u, v) \in \ddot{S}_{E(p)}$, there is only one vertex $p$ that can link $u$ and $v$, so $b_{uv}(p)$ is equal to $1$. Thus, $\ddot{C}_p$ is a part of $C_B(p)$ which equals $\frac{d(p)*(d(p)-1)}{2} - \bar{C}_p - \hat{C}_p$ according to \lamref{lem:baselemma1}. Second, for every vertex pair $(u, v) \in \hat{S}_{E(p)}$, $\hat{S}_p(u, v)$ is the set of vertices connecting $u$ with $v$ in $G_{E(p)}$ but does not include $p$, thus the probability $b_{uv}(p)$ is equal to $\frac{1}{|\hat{S}_p(u, v)|+1}$. To sum up, $C_B(p) = \frac{d(p)*(d(p)-1)}{2} - \bar{C}_p - \hat{C}_p + \sum_{(u, v)} \frac{1}{|\hat{S}_p(u, v)|+1}$. As $\hat{C}_p = |\hat{S}_{E(p)}| = \sum_{(u, v) \in \hat{S}_p(u, v)} 1 \ge \sum_{(u, v)} \frac{1}{|\hat{S}_p(u, v)|+1}$, thus $C_B(p) \le \bound(p)$ holds.
\end{proof}

\comment{
\begin{Lemma}
\label{lem:baselemma3}
For vertex $p$ in $G$, $C_B(p) \le \lfloor \frac{d(p)*(d(p)-1)}{2} \rfloor$ holds.
\end{Lemma}

\begin{proof}
The proof is intuitive according to \lamref{lem:baselemma2}, thus is omitted.
\end{proof}
}

\begin{algorithm}[t]
  \scriptsize
  \caption{\baseboundalg$(G, k)$}
  \label{alg:bboundalg}
  \KwIn{$G = (V, E)$, an integer $k \ge 1$.}
  \KwOut{The top-$k$ vertex set $R$.}
  \For{$u \in V$}
  {
    $\bound(u) \leftarrow \frac{d(u)*(d(u)-1)}{2}$; $C_{B}(u) \leftarrow \bound(u)$\;
  }
  $R \leftarrow \emptyset$\;
  Construct the oriented graph $G^+ = (V, E^+)$ of $G$\;
  Initialize an array $B$ with $B(i) = false, 0 \le i < n$\;
  \For{$u \in V$ according to the total order}
  {
    {\bf {if}} $|R| = k$ {\it{and}} $\min_{v \in R}C_{B}(v) \ge \bound(u)$ {\bf {then}} {\bf break}\;
    {\bf {for}} $v \in N^+(u)$ {\bf {do}} $B(v) \leftarrow true$\;
    \For{$v \in N^+(u)$}{
        \For{$w\in N^+(v)$}
    {
        \If{$B(w) = true$}
		{
            $\baseupdateprc(S_{u}, v, w)$; $\baseupdateprc(S_{v}, u, w)$\;
            {\bf {if}} $\nexists S_w(u, v)$ {\bf {then}} $S_{w}.{\kw{insert}}((u, v),0)$\;
	    }
    }

    }
        {\bf {for}} $v \in N^+(u)$ {\bf {do}} $B(v) \leftarrow false$\;
	\For{$((i, j), val) \in S_u$}
    {
        $C_{B}(u) \leftarrow C_{B}(u) - 1$\;
        {\bf {if}} $val \ne 0$ {\bf {then}} $C_{B}(u) \leftarrow C_{B}(u) + \frac{1}{val+1}$\;
    }
	Update $R$ based on $u$ and $C_B(u)$\;
  }
  {\bf return} $R$\;

  \vspace*{0.1cm}
  {\bf Procedure} $\baseupdateprc(S_u, v, w)$\\
  \For{$x \in N(u)$}
  {
    {\bf {if}} $(x, v) \in E$ {\it{and}} $\nexists S_u(x, v)$ {\bf {then}} $S_{u}.{\kw{insert}}((x, v),0)$\;
    {\bf {if}} $(x, w) \in E$ {\it{and}} $\nexists S_u(x, w)$ {\bf {then}} $S_{u}.{\kw{insert}}((x, w),0)$\;
    \If{$(x, v) \in E$ and $(x, w) \notin E$}
    {
        {\bf {if}} $\nexists S_{u}(x, w)$ {\bf {then}} $S_{u}.{\kw{insert}}((x, w), 1)$\;
        {\bf {else if}} $S_{u}(x, w).val \ne 0$ {\bf {then}} $S_{u}(x, w).val$++\;
    }
    \If{$(x, v) \notin E$ and $(x, w) \in E$}
    {
        Update $S_u$ and $S_w$ as lines 25-26\;
    }
  }
\end{algorithm}

Equipped with \lamref{lem:baselemma2}, we present a basic search approach, called \baseboundalg, which computes the vertices' ego-betweennesses in non-increasing order of their upper bounds. The main idea of \baseboundalg is that a vertex with a large upper bound may have a high chance contained in the top-$k$ results. Based on this idea, the exact computations for the vertices with small upper bounds will be postponed or even avoided, thus \baseboundalg can significantly improve the efficiency compared with the algorithm calculating all ego-betweennesses.

The pseudo-code of \baseboundalg is outlined in Algorithm \ref{alg:bboundalg}. For each vertex $u$, $S_u$ is a map to maintain the number of the shortest paths that do not go through $u$ for all neighbor pairs. Algorithm \ref{alg:bboundalg} works as follows. It first calculates the upper bound $\bound(u)$ for each vertex $u$ based on \lamref{lem:baselemma2} and initializes $C_B(u)$ as $\bound(u)$ (lines 1-2). Then, it sorts the vertices in non-increasing order with respect to their upper bounds, and picks an unexplored vertex $u$ with the maximum $\bound(u)$ to calculate $C_B(u)$ until the top-$k$ vertices are found (lines 6-19). During the processing of vertex $u$, if the result set $R$ has $k$ vertices and the $\min_{v \in R}C_{B}(v) \ge \bound(u)$ holds, the algorithm terminates (line 7). Otherwise, \baseboundalg computes $C_B(u)$ and identifies whether $u$ should be added into the answer set $R$ (lines 8-18). For vertex $u$, we explore the number of shortest paths between $u$'s neighbors by enumerating the triangles including $u$ and maintain them in the hash map $S_u$. In $S_u$, we always keep a vertex pair $(i, j)$ with $val = 0$ if $i$ and $j$ are connected in $G_{E(u)}$; on the other hand, $val$ records the number of vertices that link $i$ and $j$ but not contain $u$. When a $\triangle_{(u, v, w)}$ is found, we update the hash maps for $u$, $v$ and $w$ (lines 12-13). Note that \baseboundalg processes vertices in the order of the upper bounds (i.e., the total order), all triangles containing $u$ can be touched without omission after handling $u$ and $S_u$ maintains the number of the shortest paths correctly. Further, the algorithm calculates $C_B(u)$ according to \lamref{lem:baselemma2} and updates $R$ (lines 15-18). Finally, \baseboundalg outputs the answer set $R$.



\begin{example}
Consider a graph $G$ as shown in \figref{fig:expgraph} and an integer $k = 5$. The running process of Algorithm \ref{alg:bboundalg} on this graph is illustrated in \figref{fig:baseexp}. The algorithm computes the ego-betweennesses of $c, i, f, d, x, e, h, g, b, a$ in turn based on their upper bounds (i.e., the total order). After computing $C_B(a)$, the largest upper bound among the remaining vertices: $j, k, u, v, x, y, z$ is $\bound(j) = 3 < C_B(d) = 14/3$ ($d$ is the $5$-th element in $R$), thus Algorithm \ref{alg:bboundalg} terminates. Compared with calculating the ego-betweennesses of all vertices, \baseboundalg can save 6 ego-betweenness computations by utilizing the upper bound \bound.
\end{example}

\begin{figure}[t]\vspace*{-0.1cm}
	\centering
	\includegraphics[width=0.38\textwidth]{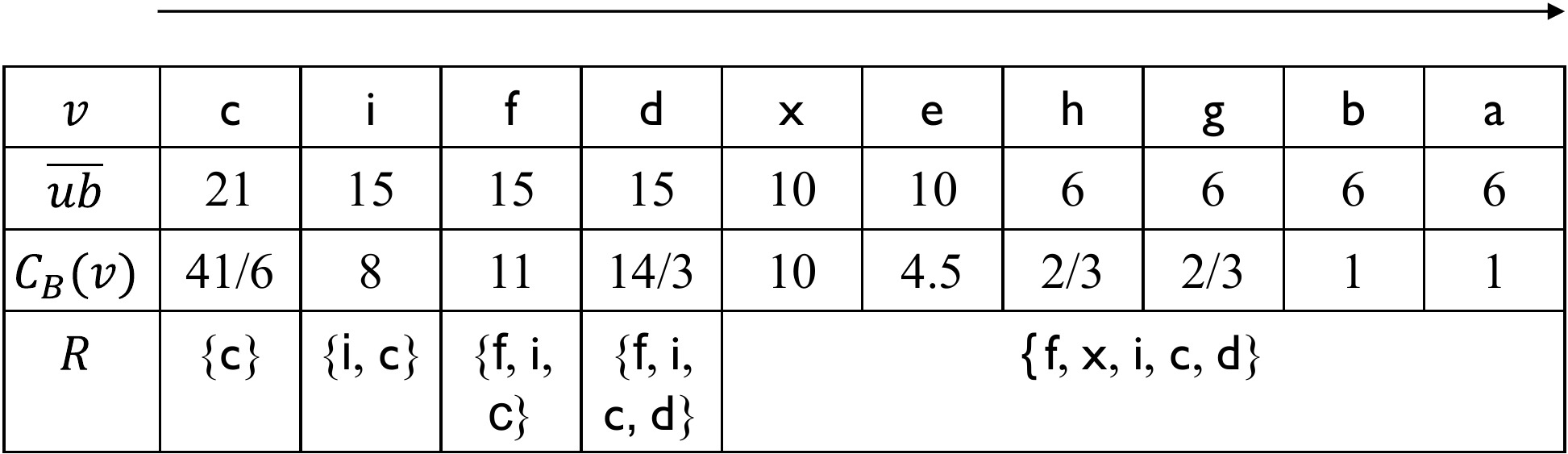}
    \vspace*{-0.2cm}
	\caption{The running process of \baseboundalg on $G$}
    \label{fig:baseexp}\vspace*{-0.2cm}
\end{figure}

\subsection{The \optboundalg algorithm} \label{subsec:onlineVLDB}
\baseboundalg may not be very efficient for top-$k$ search because the upper bound \bound is not very tight. To further improve the efficiency, we propose the \optboundalg algorithm with a dynamic upper bound \optbound which is tighter than \bound.

Recall that we calculate $C_B(u)$ with the information of the shortest paths which is derived by touching the triangles including vertex $u$. In this processing, some useful information about the number of shortest paths for $u$'s neighbors can also be obtained. We refer to those information as identified information which include some vertex pairs and edges. Below, we will use these identified information to derive a tighter and dynamically-updated upper bound of ego-betweenness.

Given a vertex $p$, let $*\bar{S}_{E(p)}$ be the collection of identified edges in $G_{E(p)}$ and $*\hat{S}_{E(p)}$ be the set of the currently identified vertex pairs whose property is the same as the pairs in $\hat{S}_{E(p)}$. For a vertex pair $(u, v)$ in $*\hat{S}_{E(p)}$, denote by $*\hat{S}_{p(u, v)}$ the set of identified vertices that link $u$ and $v$ but does not contain $p$. Let $*\bar{C}_p$ and $*\hat{C}_p$ be the size of $*\bar{S}_{E(p)}$ and $*\hat{S}_{E(p)}$, respectively. We develop a tighter upper bound of ego-betweenness \optbound in \lamref{lem:optlemma}.


\begin{Lemma}
\label{lem:optlemma}
For a vertex $p$ in $G$, $C_B(p) \le \optbound=\frac{d(p)*(d(p)-1)}{2} - *\bar{C}_p - *\hat{C}_p + \sum_{(u, v)} \frac{1}{|*\hat{S}_{p(u, v)}|+1}$ holds.
\end{Lemma}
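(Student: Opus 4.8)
The plan is to prove the inequality \emph{pairwise}, comparing, term by term, the true contribution of every neighbor pair to $C_B(p)$ against the contribution the same pair makes to $\optbound$. First I would start from the exact identity established in \lamref{lem:baselemma2} and read it as a sum over the $\frac{d(p)*(d(p)-1)}{2}$ unordered pairs of $N(p)$: each pair $(u,v)$ contributes exactly $b_{uv}(p)$ to $C_B(p)$, namely $0$ when $(u,v)\in\bar{S}_{E(p)}$, $1$ when $(u,v)\in\ddot{S}_{E(p)}$, and $\frac{1}{|\hat{S}_p(u,v)|+1}$ when $(u,v)\in\hat{S}_{E(p)}$. In all three cases $b_{uv}(p)\le 1$, which is the only global fact I will need about the true values.

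Next I would rewrite $\optbound$ in the same per-pair form. By the same counting argument as in \lamref{lem:baselemma1}, now applied to the \emph{identified} partition, the quantity $\frac{d(p)*(d(p)-1)}{2}-*\bar{C}_p-*\hat{C}_p$ is precisely the number of pairs that have not yet been touched (those neither recorded in $*\bar{S}_{E(p)}$ nor in $*\hat{S}_{E(p)}$). Hence $\optbound$ is exactly the sum in which each identified edge contributes $0$, each identified pair $(u,v)\in *\hat{S}_{E(p)}$ contributes $\frac{1}{|*\hat{S}_{p(u,v)}|+1}$, and each untouched pair contributes $1$. It then suffices to show, pair by pair, that the $\optbound$-contribution dominates the $C_B(p)$-contribution; summing the pointwise inequalities yields the lemma.

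The comparison splits into the three identified categories. (i) If $(u,v)\in *\bar{S}_{E(p)}$ then it is a genuine edge of $G_{E(p)}$, so its true contribution is $0$, matching its $\optbound$-contribution. (ii) If $(u,v)\in *\hat{S}_{E(p)}$ then, by the defining property of the identified set, $(u,v)$ genuinely lies in $\hat{S}_{E(p)}$ and the identified connectors form a subset $*\hat{S}_{p(u,v)}\subseteq\hat{S}_p(u,v)$; therefore $|*\hat{S}_{p(u,v)}|\le|\hat{S}_p(u,v)|$ and $\frac{1}{|*\hat{S}_{p(u,v)}|+1}\ge\frac{1}{|\hat{S}_p(u,v)|+1}=b_{uv}(p)$. (iii) If $(u,v)$ is untouched, its $\optbound$-contribution is $1\ge b_{uv}(p)$ regardless of its true category. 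Adding these three pointwise bounds over all pairs gives $C_B(p)\le\optbound$.

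The step that really carries the argument — and the one I expect to need the most care — is the soundness/monotonicity claim underlying case (ii) (and, to a lesser degree, case (i)): that every piece of ``identified'' information is a faithful partial view of the truth, so that identified edges are real edges and the identified connector set is genuinely contained in $\hat{S}_p(u,v)$ with no spurious or over-counted members. This is an invariant of how $*\bar{S}_{E(p)}$ and $*\hat{S}_{E(p)}$ are accumulated while touching the triangles incident to $p$'s neighbors, and I would justify it by appealing to that accumulation process; once the containment $*\hat{S}_{p(u,v)}\subseteq\hat{S}_p(u,v)$ (and $*\bar{S}_{E(p)}\subseteq\bar{S}_{E(p)}$) is granted, the arithmetic of the three cases is immediate.
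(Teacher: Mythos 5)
Your proof is correct, and it takes a recognizably different route from the paper's. The paper argues at the aggregate level: it invokes the exact formula of \lamref{lem:baselemma2} and the monotonicity facts $*\bar{C}_p \le \bar{C}_p$, $*\hat{C}_p \le \hat{C}_p$, $|*\hat{S}_{p(u,v)}| \le |\hat{S}_{p(u,v)}|$, then asserts $\sum_{(u,v)} \frac{1}{|*\hat{S}_{p(u,v)}|+1} \ge \sum_{(u,v)} \frac{1}{|\hat{S}_{p(u,v)}|+1}$ and concludes. Read literally, that intermediate inequality compares sums over \emph{different} index sets ($*\hat{S}_{E(p)}$ versus $\hat{S}_{E(p)}$), and it is false in isolation --- for instance, when no pairs have yet been identified the left-hand sum is empty while the right-hand sum is positive; the paper's conclusion is only rescued by the compensating term $\hat{C}_p - *\hat{C}_p$, which the proof does not make explicit. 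Your per-pair decomposition handles exactly this: an untouched pair contributes $1$ to $\optbound$ and at most $b_{uv}(p) \le 1$ to $C_B(p)$ (your case (iii)), which is precisely the bookkeeping that makes the count terms and the sum terms interact correctly. So your argument is not just a reformulation --- it is a pointwise refinement that closes a genuine sloppiness in the paper's one-line aggregate comparison, at the modest cost of having to state the faithfulness invariant ($*\bar{S}_{E(p)} \subseteq \bar{S}_{E(p)}$, $*\hat{S}_{E(p)} \subseteq \hat{S}_{E(p)}$, $*\hat{S}_{p(u,v)} \subseteq \hat{S}_p(u,v)$) explicitly; the paper simply absorbs that invariant into the phrase ``by definition,'' which is defensible since ``identified'' information is by construction a partial record of true triangles, but your flagging it as the load-bearing step is the right instinct.
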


\begin{proof}
By definition, we have $*\bar{C}_p \le \bar{C}_p$, $*\hat{C}_p \le \hat{C}_p$ and $|*\hat{S}_{p(u, v)}| \le |\hat{S}_{p(u, v)}|$. Further, $\sum_{(u, v)} \frac{1}{|*\hat{S}_{p(u, v)}|+1} \ge \sum_{(u, v)} \frac{1}{|\hat{S}_{p(u, v)}|+1}$ holds. According to \lamref{lem:baselemma2}, we can obtain $C_B(p) \le \optbound = \frac{d(p)*(d(p)-1)}{2} - *\bar{C}_p - *\hat{C}_p + \sum_{(u, v)} \frac{1}{|*\hat{S}_{p(u, v)}|+1}$.
\end{proof}


\begin{algorithm}[t]
  \scriptsize
  \LinesNumbered
  \caption{\optboundalg$(G, k, \theta)$}
  \label{alg:optboundalg}
  \KwIn{$G = (V, E)$, an integer $k \ge 1$, a gradient ratio $\theta \ge 1$.}
  \KwOut{The top-$k$ vertex set $R$.}
  $H \leftarrow \emptyset$; $R \leftarrow \emptyset$\;
  Initialize an array $B$ with $B(i) = false, 0 \le i < n$\;
  \For{$v \in V$}
  {
      $\optbound(v) \leftarrow \frac{d(v)*(d(v)-1)}{2}$; $C_{B}(v) \leftarrow \optbound(v)$; $H.push(v, \optbound(v))$\;
  }
  \While{$H \ne \emptyset$}
  {
      $(v^*, \topbound) \leftarrow H.pop()$\;
	  Compute $\optbound(v^*)$ according to \lamref{lem:optlemma}\;
	  \If{$\theta \cdot \optbound(v^*) < \topbound$}
	  {
        \If{$|R| < k$ {\it{or}} $\optbound(v^*) > \min_{v \in R}C_{B}(v)$}{$H.push(v^*, \optbound(v^*))$\;}
        {\bf continue}\;
		
	  }
      {\bf {if}} $|R| = k$ {\it{and}} $\topbound \le \min_{v \in R}C_{B}(v)$ {\bf {then}} {\bf break}\;
	  \egobwcalalg($G, v^*, B$)\;
      {\bf {if}} $|R| < k$ {\bf {then}} $R \leftarrow R \cup \{ v^* \}$\;
	  \ElseIf{$C_{B}(v^*) > \min_{v \in R}C_{B}(v)$}
	  {
        $u \leftarrow \arg\min_{v \in R}C_{B}(v)$; $R \leftarrow (R - \{ u \}) \cup \{ v^* \}$\;
	  }
      $B(v^*) \leftarrow true$\;
  }
  {\bf return} $R$\;
\end{algorithm}

\begin{figure*}[t!]\vspace*{-0.2cm}
\centering
  \subfigure[\scriptsize{Pop out $f$}]{
  \label{fig:optexp1}
  \begin{minipage}{5cm}
  \centering
  \includegraphics[width=\textwidth]{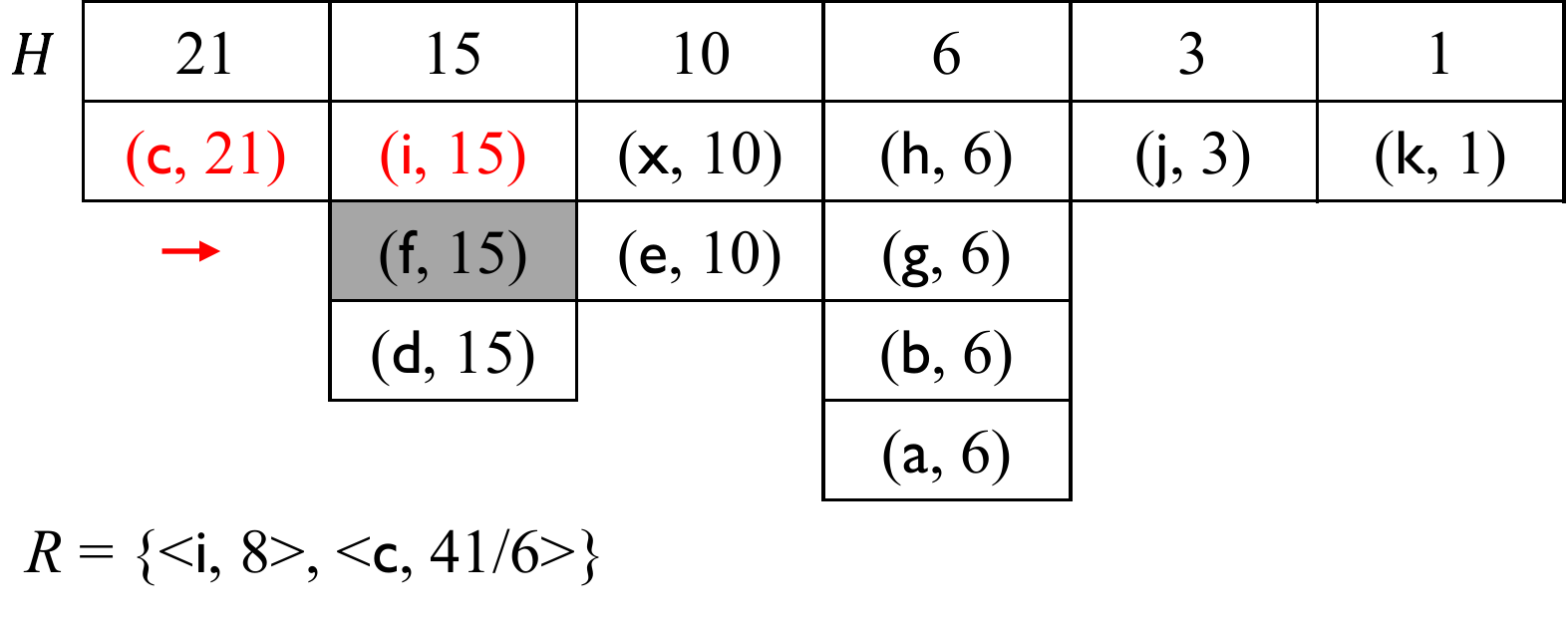}
  \end{minipage}
  }
  \subfigure[\scriptsize{Pop out $d$}]{
  \label{fig:optexp2}
  \begin{minipage}{5cm}
  \centering
  \includegraphics[width=\textwidth]{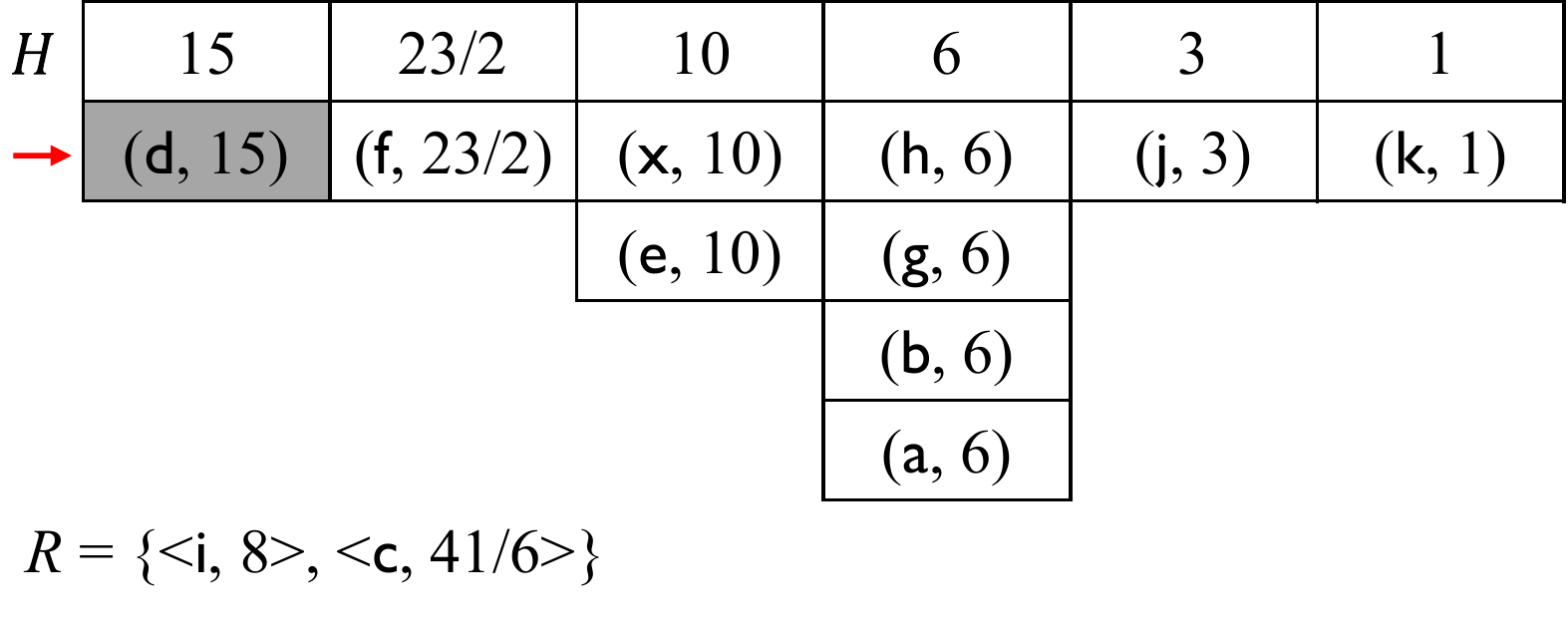}
  \end{minipage}
  }
  \subfigure[\scriptsize{Pop out $e$}]{
  \label{fig:optexp3}
  \begin{minipage}{5cm}
  \centering
  \includegraphics[width=\textwidth]{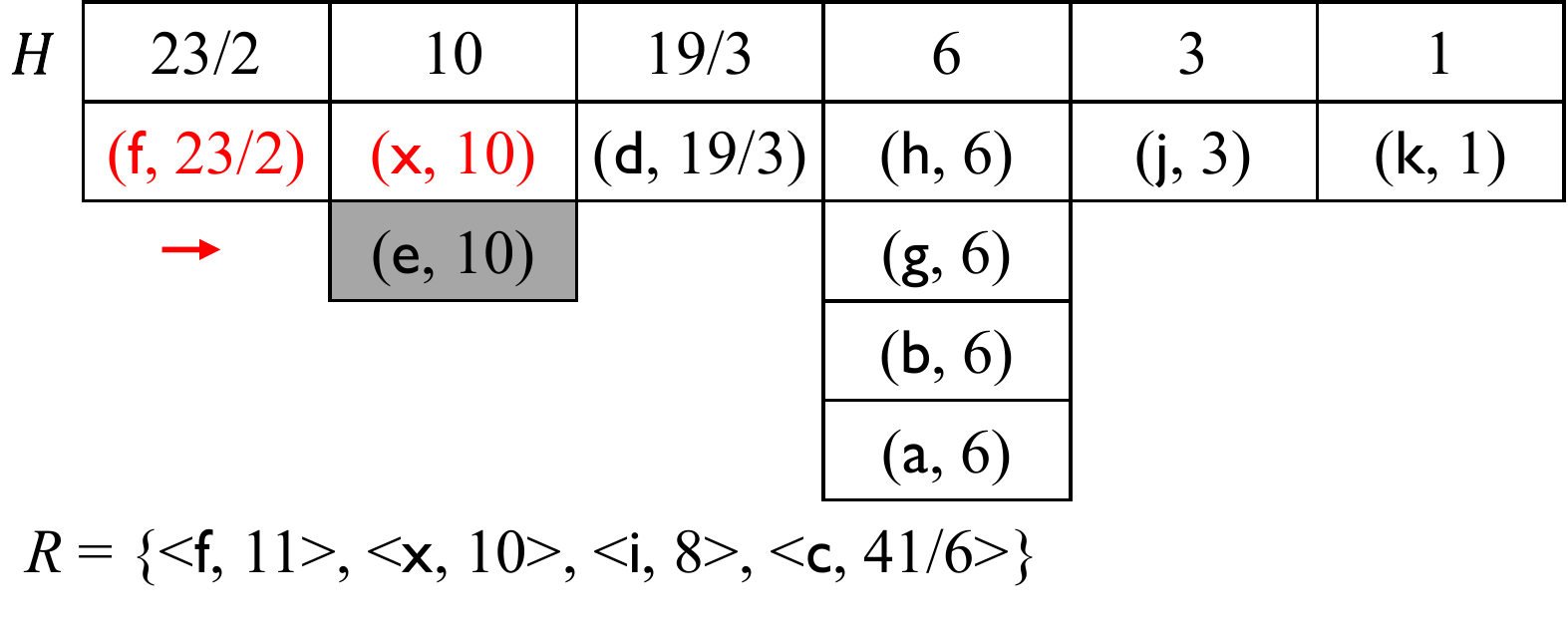}
  \end{minipage}
  }
  \vspace*{-0.2cm}

  \subfigure[\scriptsize{Pop out $h, g, b, a$}]{
  \label{fig:optexp4}
  \begin{minipage}{5cm}
  \centering
  \includegraphics[width=\textwidth]{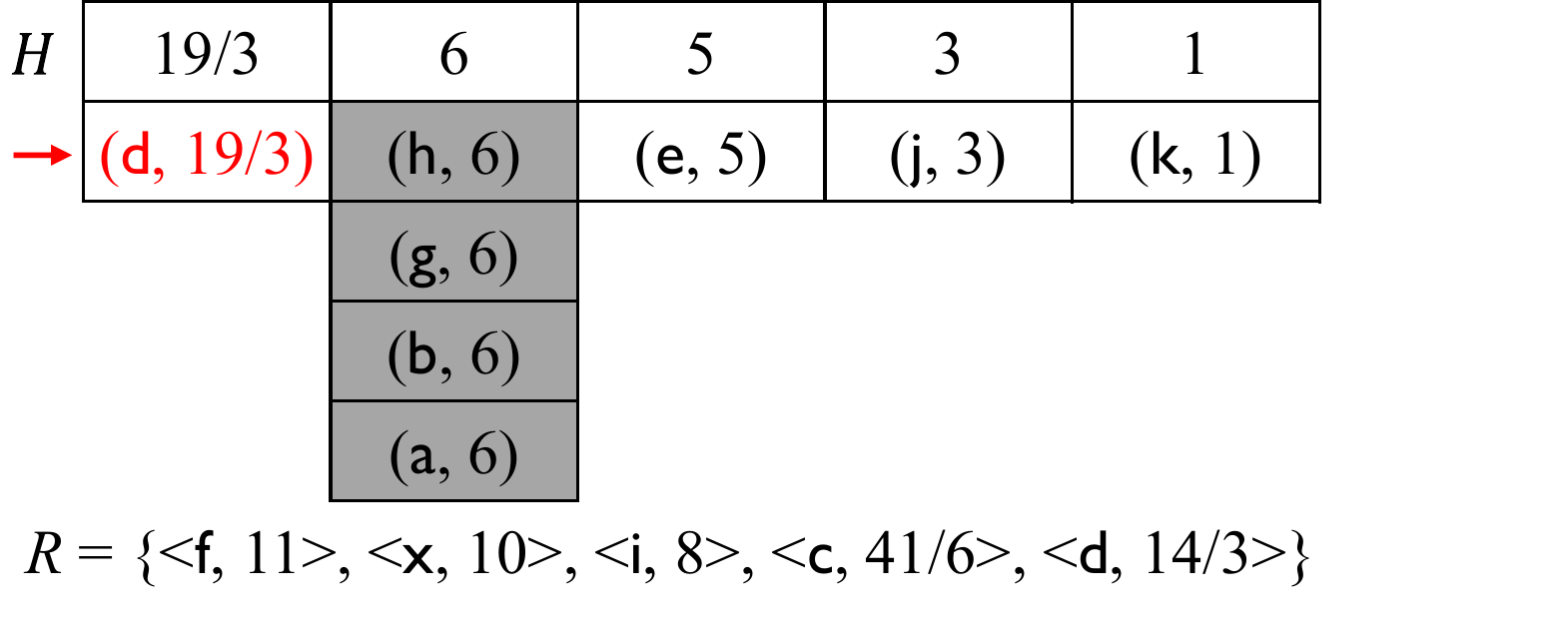}
  \end{minipage}
  }
  \subfigure[\scriptsize{Pop out $e$}]{
  \label{fig:optexp5}
  \begin{minipage}{5cm}
  \centering
  \includegraphics[width=\textwidth]{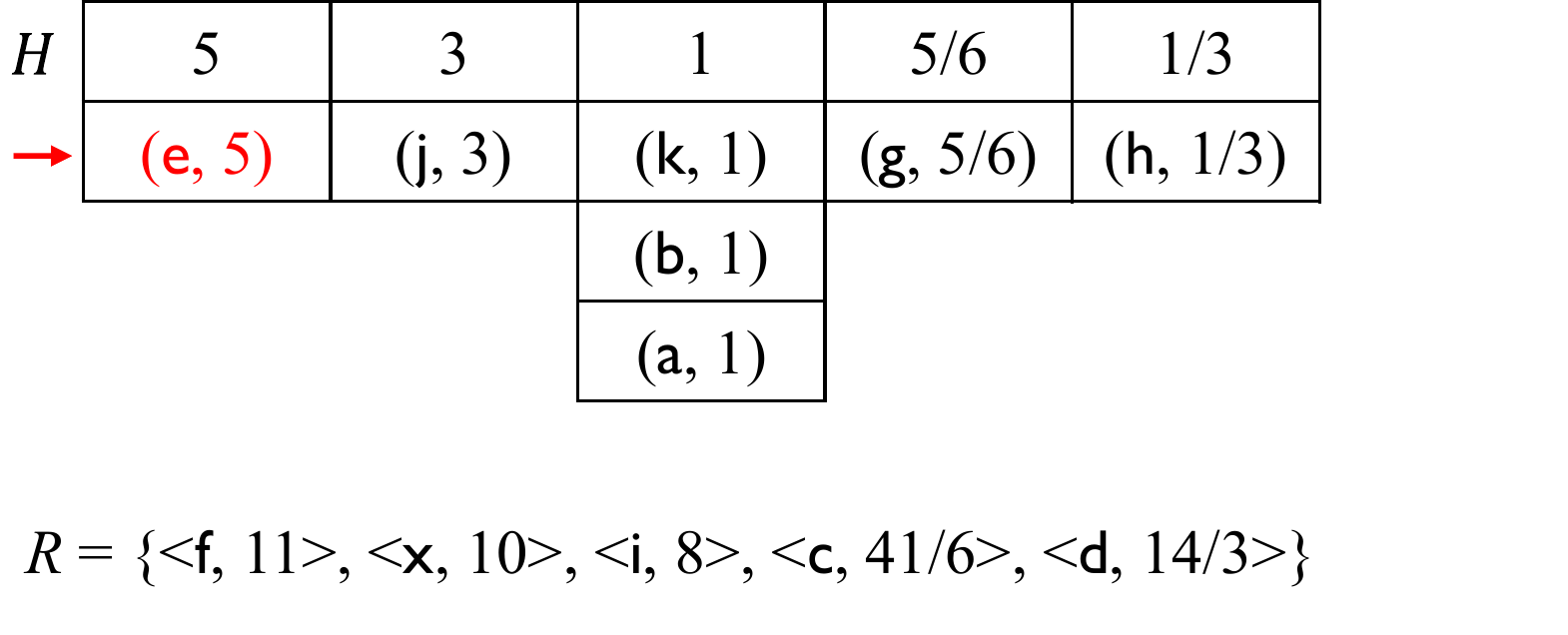}
  \end{minipage}
  }
  \subfigure[\scriptsize{Pop out $j$}]{
  \label{fig:optexp6}
  \begin{minipage}{5cm}
  \centering
  \includegraphics[width=\textwidth]{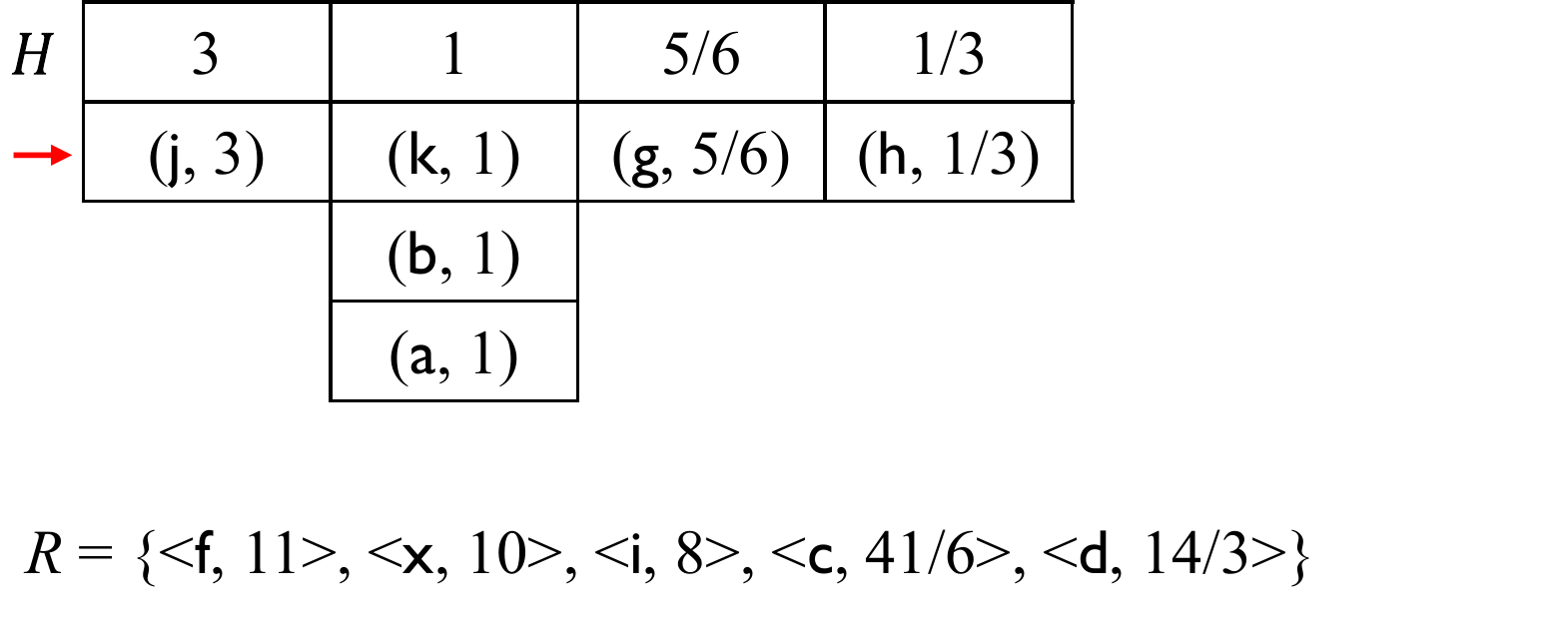}
  \end{minipage}
  }
\vspace*{-0.2cm}
\caption{The running process of \optboundalg on $G$}
\label{fig:optexp}\vspace*{-0.2cm}
\end{figure*}

Note that the upper bound \optbound in \lamref{lem:optlemma} will be dynamically updated during the execution of the top-$k$ search algorithm, because $*\bar{C}_p$, $*\hat{C}_p$ and $|*\hat{S}_{p(u, v)}|$ will be updated when calculating vertices' ego-betweennesses exactly. The \optboundalg framework with such a dynamic upper bound \optbound is depicted in Algorithm \ref{alg:optboundalg}. It first calculates $\optbound(v)$ and $C_B(v)$ for each vertex $v$, and pushes $v$ with the initial bound $\optbound(v)$ into a sorted list $H$ (lines 3-4). Then, the \optboundalg iteratively finds the top-$k$ results (lines 5-17). It pops the vertex $v^*$ with the largest upper bound value \topbound from $H$. As the number of shortest paths between $v^*$'s neighbors may be updated, the algorithm calculates $\optbound(v^*)$ based on \lamref{lem:optlemma}. \optboundalg then compares $\optbound(v^*)$ with the old bound $\topbound$ by employing a parameter $\theta \ge 1$ to avoid frequently calculating the upper bounds and updating $H$. When $\theta \cdot \optbound(v^*) < \topbound$, that means $\optbound(v^*)$ is substantially smaller than $\topbound$. If $|R| < k$ or $\optbound(v^*) > \min_{v \in R}C_{B}(v)$, we push $v^*$ to $H$ again with the tighter bound $\optbound(v^*)$ (line 10). Otherwise, $v^*$ does not belong to the top-$k$ answers and thus can be pruned. In both cases, the algorithm needs to pop the next vertex from $H$. If the early termination condition (line 12) is not satisfied, the algorithm performs \egobwcalalg to compute $C_B(v^*)$ exactly and updates $R$ based on $C_B(v^*)$ (lines 13-17). Note that we use an array $B$ to record the vertices whose ego-betweennesses have been calculated, which can reduce redundant computations in the \egobwcalalg procedure.

\begin{algorithm}[t]
  \scriptsize
  \LinesNumbered
  \caption{\egobwcalalg$(G, u, B)$}
  \label{alg:egobwdalg}
  \KwIn{$G = (V, E)$, vertex $u$, an array $B$.}
  \KwOut{$C_B(u)$.}
  Initialize $DN$ and $EN$ according to $B$\;
  Initialize an array $V_{is}$ with $V_{is}(i) = false, 0 \le i < n$\;
  {\bf {for}} $i \in N(u)$ {\bf {do}} $rd(i) \leftarrow \emptyset$\;
  \For{$((i, j), val) \in S_u$}
  {
    \If{$val = 0$}
	{	
        $rd(i) \leftarrow rd(i) \cup  \{ j \}$; $rd(j) \leftarrow rd(j) \cup  \{ i \}$\;
    }
  }

  \For{$i \in DN$}
  {
    {\bf {for}} $p \in rd(i)$ {\bf {do}} $V_{is}(p) \leftarrow true$\;
    \For{$j \in DN - \{ i \}$}
    {
        \If{$V_{is}(j) = false$}
		{
            \For{$p \in rd(j)$}
            {
                \If{$V_{is}(p) = true$ and $B(p) = false$}
        		{

                    $S_{u}(i, j).val$++; $S_{p}(i, j).val$++\;
                }

            }
        }
    }

  }
  \For{$i \in EN$}
  {
     \For{$j \in EN - \{ i \}$}
     {
        \If{$(i, j)\in E$}
		{
            $S_{u}.{\kw{insert}}((i, j), 0)$; $S_{i}.{\kw{insert}}((u, j), 0)$\; $S_{j}.{\kw{insert}}((u, i), 0)$\;
            \For{$k \in rd(j)$}
            {
                \If{$\nexists S_{u}(i, k)$}
                {
                    $S_{u}.{\kw{insert}}((i, k), 1); S_{j}.{\kw{insert}}((i, k), 1)$;
                }
                \ElseIf{$S_{u}(i, k).val \ne 0$}
                {
                    $S_{u}(i, k).val$++; $S_{j}(i, k).val$++\;
                }
            }
            Update $S_u, S_i$ by $rd(i)$ as lines 19-23\;
            $rd(i) \leftarrow rd(i) \cup \{j\}$; $rd(j) \leftarrow rd(j) \cup \{i\}$\;
	    }
    }
  }
  Calculate $C_{B}(u)$ as lines 15-17 of Algorithm \ref{alg:bboundalg}\;
  {\bf return} $C_{B}(u)$\;
\end{algorithm}

Algorithm \ref{alg:egobwdalg} outlines the \egobwcalalg procedure. Like \baseboundalg, a key issue is maintaining the number of the shortest paths in $S_u$ correctly by finding the triangles containing $u$. To avoid reduction, a simple but efficient approach is to record those enumerated triangles and update $S_u$ by deriving the shortest paths from these triangles. To this end, for each neighbor $i$ of $u$, Algorithm \ref{alg:egobwdalg} uses $rd(i)$ to store such vertices that are contained in the touched triangles $\triangle_{(i, *, u)}$. It first initializes $rd(i)$ for every $i \in N(u)$ with the current $S_u$ as $S_u(i, j).val$ equals $0$ indicates a visited triangle $\triangle_{(i, j, u)}$ (lines 3-6). Then, the procedure handles $u$'s neighbors to maintain $S_u$ according to whether they have been processed (lines 7-25). Specifically, if $B(i)=true$, we put $i$ into the set $DN$ and call it a processed vertex; otherwise, $i$ is added into the set $EN$ where stores the vertices to be processed. For the vertices $i, j \in DN$, \egobwcalalg finds their common neighbors (denote by $p$) based on $rd(i)$ and $rd(j)$ and updates the number of the shortest paths between $i$ and $j$ for $S_u$ and $S_p$ (lines 7-13). On the other hand, given $i, j \in EN$, the procedure enumerates new triangles and maintains related hash maps with $rd(i)$ and $rd(j)$ (lines 14-25). Note that with the discovery of new triangles, \egobwcalalg also updates the related $rd(i)$s to avoid reduction (line 25). Finally, \egobwcalalg calculates $C_B(u)$ with the same method as used in \baseboundalg.

\begin{example}
Reconsider the graph $G$ in \figref{fig:expgraph}. Suppose that $k = 5$ and $\theta = 1$. The running process of Algorithm \ref{alg:optboundalg} is illustrated in \figref{fig:optexp}. The vertices colored red are computed their ego-betweennesses exactly and the vertices in gray grids need to update their upper bounds and push back into $H$ again. The algorithm pushes all vertices with the initial upper bounds into $H$ and then processes them based on $H$. First, it pops $c$ with the largest upper bound $\topbound = 21$ and calculates $\optbound(c)$ and $C_B(c)$. Due to $R = \emptyset$, $c$ is added into $R$ and \optboundalg does the same operation for $i$. Then, $f$ is popped with $\topbound = 15$ and Algorithm \ref{alg:optboundalg} calculates $\optbound(f)$ as shown in \figref{fig:optexp1}. Since $\optbound(f)=23/2$ is substantially smaller than $\topbound$ based on $\theta = 1$, we push $(f, 23/2)$ into $H$ again and pops $d$ as the next processing vertex in \figref{fig:optexp2}. The tighter bound $\optbound(d)=19/3$ is less than $15$, thus \optboundalg pushes $d$ into $H$ again with $\optbound(d)$. In the following three iterations, \optboundalg computes $C_B(f)$ and $C_B(x)$ and adds them into $R$, and then processes $e$ as shown in \figref{fig:optexp3}. $e$ is pushed into $H$ with $\optbound(e)=4$ and the algorithm pops $d$ to calculate $C_B(d)$ and adds $d$ into $R$ in \figref{fig:optexp4}. Due to $\topbound = 6 > C_B(d)$, $h$ is popped and we calculate $\optbound(h)$ to update $H$. Similarly, we push $g, b, a$ into $H$ again with $\optbound(g), \optbound(b), \optbound(a)$ as shown in \figref{fig:optexp5}. When $e$ is processed, $C_B(e)=9/2 < C_B(d)$ and $|R|=k=5$ hold, thus $e$ is not an answer of top-$k$ results. When pops $j$ in \figref{fig:optexp6}, the algorithm safely prunes $j$ since $\optbound(j) < C_B(d)$. Obviously, the remaining vertices can also be pruned. In \optboundalg, we invoke \egobwcalalg six times to calculate the ego-betweennesses, while \baseboundalg performs ten ego-betweenness computations.
\end{example}

\comment{
\begin{algorithm}[t]
  \caption{\egobwcalalg$(G, v)$}
  \label{alg:egobwdalgooo}
  \KwIn{$G = (V, E)$, vertex $v$.}
  \KwOut{$C_B(v)$.}

  $S \leftarrow \emptyset$\;
  \For{$i \in N(v)$}
  {
    {\bf {if}} $B(i) = true$ {\bf {then}} $DN \leftarrow DN \cup \{ i \}$\;
    {\bf {else}} $EN \leftarrow EN \cup \{ i \}$\;
    $S \leftarrow S \cup \{ i \}$\;
  }
  \For{each $i \in S$}
  {
     \For{each $j \in (S - \{ i \})$}
     {
	    \For{each $k \in (S - \{ i, j\})$}
        {
            \If{$(i, j)\in E$}
			{
		        $S_{v}.insert((i, j), 0)$\;
		    }
			\If{$(j, k)\in E$}
			{
		        $S_{v}.insert((j, k), 0)$\;
		    }
			\If{$(i, k)\in E$}
			{
		        $S_{v}.insert((i, k), 0)$\;
		    }
	        \If{$(i, j) \in E$ and $(i, k) \in E$ and $(j, k)\notin E$}
			{
		        \If{$S_{i}.find((j, k)) = empty$}
				{
			       $S_{i}.insert((j, k), 0)$\;
			    }
  				\If{$S_{v}.find((j, k)) = empty$}
				{
			       $S_{v}.insert((j, k), 0)$\;
			    }
				$S_{i}.get((j,k))++;S_{v}.get((j,k))++$\;
		    }
			\If{$(i, j) \in E$ and $(j, k) \in E$ and $(i, k) \notin E$}
			{
		        the same as line12-16\;
		    }
            \If{$(i,k)\in E$ and $(j, k) \in E$ and $(i, j) \notin E$}
			{
		        the same as line12-16\;
		    }
        }
	 }
  }
  \For{each (key,value) $\in S_{v}$}
  {
    $C_{B}(v)--$\;
    \If{$value \ne 0$}
    {
      $C_{B}(v) \leftarrow C_{B}(v) + \frac{1}{value+1}$\;
    }
  }
  {\bf return} $C_{B}(v)$\;
\end{algorithm}
}

\subsection{ Analysis of the proposed algorithms} \label{subsec:onlinecplex}

Below, we mainly analyze the correctness of Algorithm \ref{alg:bboundalg}. The correctness analysis of Algorithm \ref{alg:optboundalg} is similar to that of Algorithm \ref{alg:bboundalg}, thus we omit it for brevity.

\begin{theo}
Given a graph $G = (V, E)$ and an integer $k$, Algorithm \ref{alg:bboundalg} correctly computes the top-$k$ vertices with the highest ego-betweennesses.
\end{theo}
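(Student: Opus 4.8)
The plan is to establish two independent properties of \baseboundalg and combine them: first, that every ego-betweenness value the algorithm actually evaluates is \emph{exact}, and second, that the upper-bound-based early termination never discards a genuine top-$k$ vertex. Together these yield that the returned set $R$ is precisely the $k$ vertices of largest ego-betweenness.

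For the exactness property I would argue that, at the moment line~15 reads $S_u$, the map encodes the complete connectivity structure of the ego network $G_{E(u)}$. The key is a triangle-enumeration invariant. Because vertices are visited in the total order $\prec$ and $G^+$ orients each edge from its $\prec$-smaller endpoint to its $\prec$-larger one, every triangle $\triangle_{(a,b,c)}$ with $a \prec b \prec c$ is enumerated exactly once, namely while its $\prec$-smallest vertex $a$ is processed (lines~9--11, where $b,c \in N^+(a)$ and the flag $B$ certifies the closing edge). During that single enumeration, lines~12--13 push the triangle's contribution into all three maps $S_a$, $S_b$, $S_c$ at once. I would then prove by induction on the processing order that, when $u$ is handled, every triangle containing $u$ has already updated $S_u$: the triangles in which $u$ is $\prec$-smallest are covered by $u$'s own loop, while those in which $u$ is the middle or largest vertex were covered earlier, while the $\prec$-smaller apex of the triangle was being processed. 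Granting that \baseupdateprc maintains the semantics ``$val=0$ iff the pair is an edge of $\bar S_{E(u)}$, and otherwise $val=|\hat S_u(i,j)|$'', lines~15--17 then evaluate exactly $\frac{d(u)(d(u)-1)}{2} - \bar C_u - \hat C_u + \sum \frac{1}{|\hat S_u(i,j)|+1}$, which by \lamref{lem:baselemma2} equals $C_B(u)$.

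For the pruning-safety property I would first note that $\bound(u)=\frac{d(u)(d(u)-1)}{2}$ is monotone non-decreasing in $d(u)$, so visiting vertices in the total order $\prec$ visits them in non-increasing order of $\bound$; hence once the main loop reaches $u$, every still-unprocessed vertex $u'$ satisfies $\bound(u') \le \bound(u)$. When the break at line~7 fires we have $|R|=k$ and $\bound(u)\le \min_{v\in R} C_B(v)$, so for every unprocessed $u'$, \lamref{lem:baselemma2} gives $C_B(u') \le \bound(u') \le \bound(u) \le \min_{v\in R} C_B(v)$. Thus no unprocessed vertex can outrank the current $k$-th member of $R$, and since $R$ has been maintained (line~18) as the exact top-$k$ over the already-processed vertices, whose values are exact by the first property, $R$ is the global top-$k$. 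The complementary case, in which the loop runs to completion without breaking, computes $C_B$ for all vertices and is trivially correct.

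The main obstacle is the exactness invariant, and within it the bookkeeping of \baseupdateprc: I expect the bulk of the work to lie in verifying that the $val$ counters track the number of common neighbors (excluding $u$) with neither omission nor double counting, given that one pair may receive increments from several distinct triangles discovered at different times. I would discharge this using the same induction on processing order, checking the four branches of \baseupdateprc against the four adjacency patterns among $x$, $v$, $w$. By contrast, the pruning argument is short once the monotonicity of \bound and \lamref{lem:baselemma2} are in hand.
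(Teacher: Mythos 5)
Your proposal is correct and follows essentially the same route as the paper: your pruning-safety paragraph is, almost word for word, the paper's entire proof of this theorem (for every unprocessed $u'$, $C_B(u') \le \bound(u') \le \bound(u) \le \min_{v\in R}C_B(v)$, using the monotonicity of $\bound$ along the processing order together with \lamref{lem:baselemma2}). The exactness invariant you plan to verify---each triangle is enumerated exactly once at its $\prec$-smallest vertex, so $S_u$ is complete when line~15 runs---is asserted only informally in the paper's prose description of Algorithm~\ref{alg:bboundalg} rather than proved, so on that point your proposal is more thorough than the published proof, not different from it.
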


\begin{proof}
Recall that Algorithm \ref{alg:bboundalg} iteratively processes the vertices based on their upper bounds (\lamref{lem:baselemma2}). When a vertex $u$ is handled, if the answer set $R$ has $k$ vertices and $\min_{v \in R}C_{B}(v) \ge \bound(u)$, then $C_{B}(u) \le \bound(u) \le \min_{v \in R}C_{B}(v)$ holds. For any vertex $w \in V$ with a smaller degree, we have $C_{B}(w) \le \bound(w) \le \bound(u) \le \min_{v \in R}C_{B}(v)$. Therefore, the algorithm can safely prune the remaining vertices and terminate, thereby the set $R$ exactly contains the top-$k$ answers.
\end{proof}

Below, we analyze the time and space complexity of Algorithm~\ref{alg:bboundalg} and Algorithm~\ref{alg:optboundalg}. Let $d_{\max}$ be the maximum degree of the vertices in $G$, and $\alpha$ be the arboricity of $G$ \cite{64arboricity, 12tcsarboricity}.

\begin{theo}
 In the worst case, both Algorithm~\ref{alg:bboundalg} and Algorithm~\ref{alg:optboundalg} take $O(\alpha m d_{\max})$ time using $O(d_{\max} m)$ space.
\end{theo}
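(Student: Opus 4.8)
The plan is to separate the running time into two dominant kinds of work---triangle enumeration and the per-neighbor map updates triggered by each triangle---and then argue that all remaining bookkeeping (heap operations, top-$k$ maintenance, and the closing sweeps over each $S_u$) is subsumed. Three facts would be established up front. First, the arboricity lemma of Chiba and Nishizeki \cite{chiba1985arboricity}: since the degree orientation sends each edge to its lower-degree endpoint, we have $|N^+(v)| \le d(v) = \min(d(u),d(v))$ for every $(u,v)\in E^+$, hence $\sum_{(u,v)\in E^+}|N^+(v)| \le \sum_{(u,v)\in E}\min(d(u),d(v)) = O(\alpha m)$; consequently the number of triangles is $O(\alpha m)$ and they are enumerated once each in $O(\alpha m)$ time. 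Second, the elementary inequality $\sum_{u\in V} d(u)^2 \le d_{\max}\sum_{u\in V} d(u) = 2m\,d_{\max}$. Third, $\sum_{u\in V} t_u = 3T = O(\alpha m)$, where $t_u$ is the number of triangles through $u$ and $T$ is the total.

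For the space bound I would observe that every entry of a map $S_u$ is a pair of neighbors of $u$, so $|S_u| = O(d(u)^2)$, and by the second fact the maps occupy $\sum_u |S_u| = O(m\,d_{\max})$ in total; the auxiliary structures ($B$, $V_{is}$, $DN$, $EN$, and the lists $rd(i)$) need only $O(n+m)$, which is dominated. This yields $O(m\,d_{\max})$ for both algorithms.

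For the time of \baseboundalg, constructing $G^+$ is $O(m)$, and the nested loops over $N^+(u)$ and $N^+(v)$ touch each triangle once, so by the first fact there are $O(\alpha m)$ discovered triangles; each triggers two calls to \baseupdateprc, whose body scans $N(u)$ in $O(d_{\max})$ time, giving $O(\alpha m\,d_{\max})$ for this phase. The closing sweep that reads every $((i,j),val)\in S_u$ to evaluate $C_B(u)$ via \lamref{lem:baselemma2} costs $\sum_u|S_u| = O(m\,d_{\max})$, and maintaining $R$ adds only $O(n\log k)$; the $O(\alpha m\,d_{\max})$ term dominates. For \optboundalg the argument is parallel, with the work concentrated in \egobwcalalg. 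I would show that a single call on $u$ costs $O(d(u)^2 + t_u\,d(u))$: building the lists $rd(i)$ and the final evaluation each scan $S_u$ in $O(d(u)^2)$ time; the $DN$--$DN$ phase costs $O(|DN|\sum_j|rd(j)|) = O(d(u)\,t_u)$ because $\sum_j|rd(j)| = 2t_u$; and the $EN$--$EN$ phase contributes $O(d(u)^2)$ for iterating neighbor pairs plus $O(t_u\,d(u))$ for processing the at most $t_u$ edges found among unprocessed neighbors. Summing over all calls and applying the second and third facts gives $O(m\,d_{\max} + \alpha m\,d_{\max}) = O(\alpha m\,d_{\max})$.

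The main obstacle---and the step I would treat most carefully---is controlling the heap in Algorithm~\ref{alg:optboundalg}: a vertex may be popped and re-pushed several times before its bound is recomputed, so I must argue that the associated $O(\log n)$ heap operations and $O(d(v^*)^2)$ bound recomputations do not exceed the dominant term. This follows because each recomputation via \lamref{lem:optlemma} is itself a scan of $S_{v^*}$ already charged to the $O(m\,d_{\max})$ budget, while the $\theta$-gated re-pushing on line~10 changes the count of pops only by a constant factor. The other delicate point is verifying that the degree orientation genuinely realizes the $\min(d(u),d(v))$ bound of the first fact, which is exactly the inequality $|N^+(v)|\le\min(d(u),d(v))$ noted above; once that is in place, the two headline bounds $O(\alpha m\,d_{\max})$ time and $O(m\,d_{\max})$ space follow by combining the phase estimates.
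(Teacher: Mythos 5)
Your proposal is correct and follows essentially the same route as the paper's proof: enumerate each triangle exactly once in $O(\alpha m)$ time via the degree orientation, charge $O(d_{\max})$ of map-update work per triangle, and bound the space by $\sum_{u \in V} d(u)^2 \le 2m\,d_{\max}$. You actually supply more detail than the paper where it is thinnest---the paper dismisses Algorithm~\ref{alg:optboundalg} with ``the analysis is similar,'' whereas you give an explicit per-call accounting of \egobwcalalg and try to control the heap overhead; the one loose spot is your claim that the $\theta$-gated re-pushing changes the pop count by only a constant factor (for $\theta = 1$ nothing forces geometric decay of the bounds, so repeated recomputations of $\optbound(v^*)$ are not obviously absorbed), but this is a point the paper's own proof does not address at all.
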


\begin{proof}
We mainly show the complexity of Algorithm~\ref{alg:bboundalg}, and the complexity analysis of Algorithm~\ref{alg:optboundalg} is similar. First, in lines 6-18 of Algorithm~\ref{alg:bboundalg}, the algorithm needs to enumerate each triangle once which takes $O(\alpha m)$ time. Note that when a triangle $\triangle_{(u, v, w)}$ is enumerated, the algorithm requires to maintain $S_{u}, S_{v}, S_{w}$. The time overhead of the update operator can be bounded by $O(d(u))\le O(d_{\max})$. Hence, the time complexity of Algorithm \ref{alg:bboundalg} is $O(\alpha md_{\max})$. Second, we analyze the space complexity of Algorithm~\ref{alg:bboundalg}. Clearly, the space overhead is dominated by the size of the map structure $S_{u}$. For $u \in V$, the map structure $S_{u}$ contains $O(d(u)^2)$ vertex pairs, thus the space complexity is $O(\sum_{u \in V}d(u)^2) \le O(d_{\max} m)$.
\end{proof}

Although Algorithm~\ref{alg:bboundalg} and Algorithm~\ref{alg:optboundalg} have the same worst-case time complexity, the practical performance of Algorithm~\ref{alg:optboundalg} is much faster than that of Algorithm~\ref{alg:bboundalg} due to the dynamic and tight upper bound, which is also confirmed in our experiments.

\section{The update algorithms} \label{sec:onlineupdate}

Real-world networks are often frequently updated. In this section, we develop local update algorithms to maintain the ego-betweennesses for all vertices when the graph is updated. We also propose lazy update techniques to efficiently maintain the top-$k$ results. We mainly focus on the cases of edge insertion and deletion, as vertex insertion and deletion can be seen as a series of edge insertions and deletions.

Our update algorithms are based on the following key observation.


\begin{obser}\label{obs:insertVertex}
After inserting/deleting an edge $(u, v)$ into/from $G$, the ego-betweennesses of the vertices in $N(u, v) \cup \{u, v\}$ need to be updated, and the ego-betweennesses of the vertices that are not in $N(u, v) \cup \{u, v\}$ remain unchanged.
\end{obser}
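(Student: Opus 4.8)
The plan is to reduce the claim to a statement purely about ego networks. By Definition~\ref{def:ego-betweenness}, the ego-betweenness $C_B(p)$ is a function only of the shortest-path structure inside the ego network $G_{E(p)}$, and by Definition~\ref{def:ego-net} this network is the subgraph of $G$ induced by the vertex set $N(p)\cup\{p\}$. Hence $C_B(p)$ can change under an edge update only if the induced subgraph $G_{E(p)}$ itself changes. So it suffices to show that for every vertex $p \notin N(u,v) \cup \{u,v\}$ the ego network $G_{E(p)}$ is identical before and after inserting (resp.\ deleting) $(u,v)$, while for $p \in N(u,v) \cup \{u,v\}$ the ego network can change.

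First I would fix a vertex $p \notin \{u,v\}$ and argue about the vertex set of $G_{E(p)}$. Since the toggled edge has endpoints $u$ and $v$, and $p$ is neither, inserting or deleting $(u,v)$ leaves the adjacency of $p$ untouched, so $N(p)$---and therefore the vertex set $N(p)\cup\{p\}$ of $G_{E(p)}$---is unchanged. It then remains to track the induced edge set. The only edge whose presence is toggled by the update is $(u,v)$, and this edge lies inside $G_{E(p)}$ \ie is an induced edge of $G_{E(p)}$, if and only if both endpoints $u$ and $v$ belong to $N(p)\cup\{p\}$. As $p\neq u,v$, this is equivalent to $u\in N(p)$ and $v\in N(p)$, that is, $p$ is a common neighbor of $u$ and $v$, which is exactly the condition $p\in N(u,v)$. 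Consequently, for $p\notin N(u,v)\cup\{u,v\}$ at least one of $u,v$ lies outside $N(p)\cup\{p\}$, the toggled edge is not induced in $G_{E(p)}$, the induced subgraph is unchanged, and $C_B(p)$ stays fixed.

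Next I would dispose of the remaining vertices. For $p=u$ (symmetrically $p=v$), inserting or deleting $(u,v)$ adds or removes $v$ from $N(u)$, so the vertex set $N(u)\cup\{u\}$ of $G_{E(u)}$ changes and $C_B(u)$ must be recomputed. For $p\in N(u,v)$, the argument above shows $(u,v)$ is an induced edge of $G_{E(p)}$ whose presence is toggled, so $G_{E(p)}$ changes and $C_B(p)$ may change. Together these cases establish the dichotomy in the statement.

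The proof carries no genuine technical obstacle; it is a clean induced-subgraph case analysis, so the only points requiring care are the two places where one appeals to \emph{locality}. The first is the reduction in the opening paragraph: one must confirm that $C_B(p)$ depends on $G_{E(p)}$ alone, \ie that the shortest-path counts of Definition~\ref{def:ego-betweenness} are computed entirely within $G_{E(p)}$, so that an unchanged induced subgraph forces an unchanged score. The second is verifying that the set $N(u,v)$ is the same before and after the update, which holds because membership of a third vertex $w$ in $N(u,v)$ depends only on the edges $(u,w)$ and $(v,w)$, neither of which is affected by toggling $(u,v)$; thus the set named in the statement is unambiguous.
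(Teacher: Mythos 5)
Your proof is correct, and it is organized differently from the paper's. The paper's proof argues only the ``affected'' half of the dichotomy, and does so by concretely enumerating what changes: inserting $(u,v)$ adds the vertex $v$ together with the edges $\{(v,w) \mid w \in N(u,v)\}$ to $G_{E(u)}$ (symmetrically for $G_{E(v)}$), and adds the single edge $(u,v)$ to $G_{E(w)}$ for each common neighbor $w \in N(u,v)$; the claim that all other vertices are untouched is left implicit. You instead reduce everything to one clean criterion: since $C_B(p)$ is determined by the induced subgraph $G_{E(p)}$ alone, and the update toggles exactly one edge, $G_{E(p)}$ changes for $p \notin \{u,v\}$ if and only if $(u,v)$ is induced in $G_{E(p)}$, i.e., iff $p \in N(u,v)$, while for $p \in \{u,v\}$ the vertex set itself changes. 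This buys you the converse direction -- the ``remain unchanged'' statement, which is precisely what licenses restricting the update work to $N(u,v) \cup \{u,v\}$ and which the paper never explicitly proves -- as well as the observation that $N(u,v)$ is the same before and after the update, a well-definedness point the paper does not raise. What the paper's enumeration buys in exchange is operational content: its explicit description of \emph{which} vertices and edges enter or leave each affected ego network is exactly the information the subsequent update rules (Lemmas~\ref{obs:insertuv}--\ref{obs:deletecnbr}) exploit, whereas your argument establishes that a recomputation is needed without describing the change. Both proofs are sound; yours is the more complete proof of the observation as stated.
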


\begin{proof}
Here, we prove the edge insertion case and the proof for edge deletion is similar. The insertion of $(u, v)$ causes the insertions of vertex $v/u$ and a series of edges $\{(v, w) | w \in N(u, v) \}/\{(u, w) | w \in N(u, v) \}$ into $u/v$'s ego network $G_{E(u)}/G_{E(v)}$, thus the ego-betweennesses of $u$ and $v$ need to be updated. In addition, for a common neighbor $w \in N(u, v)$, there is a new edge $(u, v)$ in $G_{E(w)}$, thus the ego-betweenness of $w$ should be re-computed.
\end{proof}

\subsection{Local-update for edge insertion} \label{sec:updateinsert}

We present the update rules for the vertices $u$, $v$ and $w \in N(u, v)$ when inserting an edge $(u, v)$. For brevity, let $L = N(u, v)$ denote the common neighbors of $u$ and $v$ and $S_u(x, y)$ be the number of vertices that link $x$ and $y$ but does not include $u$. Unless otherwise specified, $S_u(x, y)$ represents the value after inserting the edge.

\begin{algorithm}[t]
  \scriptsize
  \LinesNumbered
  \caption{\insertedge}
  \label{alg:insertuptalg}
  \KwIn{$G = (V, E)$, ego-betweenness array $ C_B$, an inserted edge $(u, v)$.}
  \KwOut{the updated $\cal C_B$.}
  Insert $(u, v)$ into $G$\;
  ${\mathcal S} \leftarrow \uptlocalsmap(G, (u, v))$\;
  $L \leftarrow N(u) \cap N(v)$\;
  \For{$(x, y) \in S_{u}$}
  {
    \If{$x = v$ or $y = v$}
    {
        $ {C_B}(u) \leftarrow { C_B}(u) + 1/(S_{u}(x, y)+1)$\;
    }
    \Else
    {
        ${ C_B}(u) \leftarrow { C_B}(u) + 1/(S_{u}(x, y)+1) - 1/S_{u}(x, y) $\;
    }
  }
  Update ${C_B}(v)$ as lines 4-8\;
  \For{$x \in L$}
  {
    \For{$(y, z) \in S_{x}$}
    {
        \If{($y = u$ and $z = v$) or ($y = v$ and $z = u$)}
        {
            ${C_B}(x) \leftarrow { C_B}(x) - 1/(S_{x}(y, z)+1)$\;
        }
        \Else
        {
            ${C_B}(x) \leftarrow { C_B}(x)+ 1/(S_{x}(y, z)+1) - 1/S_{x}(y, z) $\;
        }
    }
  }
  {\bf return} ${C_B}$\;
\end{algorithm}

\begin{Lemma}
\label{obs:insertuv}
Consider an inserted edge $(u, v)$, the updated ego-betweenness of $u$ is: $C_B(u) = C_B(u)+\sum_{x, y \in L, (x, y) \notin E}(1/(S_u(x, y)+1)-1/S_u(x, y))+ \sum_{x \in N(u), x \notin L} 1/(S_u(v, x)+1)$. The case of updating $C_B(v)$ is similar.
\end{Lemma}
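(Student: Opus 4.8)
The plan is to track, pair by pair, how the betweenness contributions of $u$ change when the edge $(u,v)$ is inserted, and to show that the two sums in the statement account exactly for the newly created and the re-weighted contributions. First I would recall from \lamref{lem:baselemma2} (equivalently, Definition \ref{def:ego-betweenness}) that $C_B(u) = \sum_{x \prec y,\, x,y \in N(u)} b_{xy}(u)$, where $b_{xy}(u) = 0$ whenever $(x,y) \in E$ and $b_{xy}(u) = 1/(S_u(x,y)+1)$ otherwise, since $S_u(x,y)$ counts precisely the connectors of $x$ and $y$ in $G_{E(u)}$ that avoid $u$ and the ``$+1$'' accounts for the path through $u$ itself. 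The key structural observation is that inserting $(u,v)$ modifies $u$'s ego network in a controlled way: the vertex $v$ joins $N(u)$, and inside $G_{E(u)}$ the new vertex $v$ is adjacent only to $u$ and to the common neighbors $w \in L = N(u)\cap N(v)$; in particular no edge between two old neighbors of $u$ is created.

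Next I would split the change $C_B^{\text{new}}(u) - C_B^{\text{old}}(u)$ into two disjoint parts according to whether a pair is newly created or merely re-weighted. The \emph{new pairs} are those of the form $(v,x)$ with $x$ an old neighbor of $u$. If $x \in L$ then $(v,x)\in E$ and $b_{vx}(u)=0$, contributing nothing; if $x \notin L$ then $(v,x)\notin E$ and the connectors of $v$ and $x$ avoiding $u$ are exactly $\{w \in L : (x,w)\in E\}$, so this pair contributes $1/(S_u(v,x)+1)$. Summing over $x \in N(u)\setminus\{v\}$ with $x \notin L$ yields the term $\sum_{x \in N(u),\, x \notin L} 1/(S_u(v,x)+1)$.

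For the \emph{re-weighted pairs} I would argue that an old pair $(x,y)$ with $x,y \in N(u)$ changes its betweenness only if $v$ becomes a new connector of $x$ and $y$ in $G_{E(u)}$; since $v$ is adjacent only to $L$, this happens exactly when $x,y \in L$ and $(x,y)\notin E$. For such a pair the connector count increases by exactly one (only $v$ is added), so writing $S_u(x,y)$ for the post-insertion value, the old betweenness was $1/S_u(x,y)$ and the new one is $1/(S_u(x,y)+1)$, a change of $1/(S_u(x,y)+1) - 1/S_u(x,y)$; pairs with $(x,y)\in E$ keep $b_{xy}(u)=0$ and contribute nothing. This produces the first sum $\sum_{x,y\in L,\,(x,y)\notin E}\bigl(1/(S_u(x,y)+1) - 1/S_u(x,y)\bigr)$. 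Adding the two contributions to $C_B^{\text{old}}(u)$ gives the claimed formula, and the case of $C_B(v)$ follows by exchanging the roles of $u$ and $v$.

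The main obstacle I anticipate is the bookkeeping in the re-weighting step: one must verify that the connector count of each affected pair goes up by \emph{exactly} one, and that the formula remains valid in the boundary case where $(x,y)$ previously belonged to $\ddot{S}_{E(u)}$ (only $u$ connected them, so $S_u(x,y)=0$ before, giving $b_{xy}(u)=1$). Checking that $1/(S_u(x,y)+1) - 1/S_u(x,y)$ correctly reduces $b_{xy}(u)$ from $1$ to $1/2$ in that case, i.e.\ that the post-insertion value $S_u(x,y)=1$ is the one substituted, is where care is needed to avoid an off-by-one error.
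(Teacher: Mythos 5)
Your proposal is correct and follows essentially the same route as the paper's proof: a case split between the newly created pairs $(v,x)$ (adjacent when $x \in L$, contributing $1/(S_u(v,x)+1)$ when $x \notin L$) and the re-weighted pairs $x,y \in L$ with $(x,y) \notin E$, where $v$ adds exactly one connector and $S_u(x,y)$ denotes the post-insertion count. Your additional checks --- that $v$ is adjacent only to $u$ and $L$ in $G_{E(u)}$ so no other pair is affected, and that the formula handles the former $\ddot{S}_{E(u)}$ pairs ($b_{xy}(u)$ dropping from $1$ to $1/2$) --- are sound and in fact make the bookkeeping more explicit than the paper's version.
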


\begin{proof}
For vertex $u$, after inserting an edge $(u, v)$ into $G$, $v$ is a new neighbor and is added into $G_{E}(u)$. For $x, y \in L$ and $(x, y) \notin E$, $C_B(u)$ has included the contribution of vertex pair $(x, y)$, thus we should update this part. $v$ is a new vertex that connects $x$ and $y$, and the number of the shortest paths between $x$ and $y$ only adds 1, thus we can calculate $S_u(x, y)$ and reveal the previous contribution to update $C_B(u)$, i.e., $C_B(u) = C_B(u) + 1/(S_u(x, y)+1)-1/S_u(x, y)$. In addition, for $x \in L$, $x$ and $v$ are connected, thus it does not contribute to $C_B(u)$. For $x \notin L$, $(v, x)$ is a new vertex pair which makes $C_B(u)$ increase, thus we need to compute $S_u(v, x)$ and update $C_B(u)$ by adding $1/(S_u(v, x)+1)$.
\end{proof}

\begin{Lemma}
\label{obs:insertcnbr}
Consider an inserted edge $(u, v)$, the updated ego-betweenness of $w \in L$ is: $C_B(w) = C_B(w) - 1/(S_w(u, v)+1) + \sum_{x \in N(w) \cap N(u)-\{v\}, (x, v) \notin E} (1/(S_w(x, v)+1)-1/S_w(x, v)) + \sum_{x \in N(w) \cap N(v)-\{u\}, (x, u) \notin E} (1/(S_w(x, u)+1)-1/S_w(x, u))$.
\end{Lemma}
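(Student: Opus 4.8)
The plan is to track the change in $w$'s ego network $G_{E(w)}$ caused by the insertion of $(u,v)$ and to read off the resulting change in $C_B(w)$ term by term. The starting observation is that, because $w \in L = N(u,v)$, both $u$ and $v$ already lie in $N(w)$; hence inserting $(u,v)$ does not alter the vertex set of $G_{E(w)}$ but merely adds a single new edge between the two already-present neighbors $u$ and $v$. I would therefore rewrite $C_B(w)$, using the expansion behind \lamref{lem:baselemma2}, as a sum over the \emph{non-adjacent} neighbor pairs, $C_B(w) = \sum_{x,y \in N(w),\,(x,y)\notin E} \frac{1}{S_w(x,y)+1}$, where $S_w(x,y)$ counts the connectors of $x$ and $y$ other than $w$ (adjacent pairs contribute $0$, and $\ddot{S}$-pairs are the special case $S_w = 0$ contributing $1$). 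The whole argument then reduces to identifying exactly which summands change.

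First I would handle the pair $(u,v)$ itself: before the update it is a non-adjacent pair contributing $\frac{1}{S_w(u,v)+1}$, and after the update it becomes an edge and contributes $0$; since the set of common neighbors of $u$ and $v$ other than $w$ is untouched by the insertion, $S_w(u,v)$ is unchanged and this accounts for the term $-\frac{1}{S_w(u,v)+1}$. Next I would argue that the only other affected summands are those for which the new edge $(u,v)$ creates a fresh length-$2$ path. Since every shortest path inside an ego network has length at most $2$, a pair can only be touched if the edge $(u,v)$ serves as the final hop of a new two-step path; concretely, $u$ becomes a new connector of $(x,v)$ exactly when $x \in N(w) \cap N(u)$, $x \neq v$ and $(x,v) \notin E$, and symmetrically $v$ becomes a new connector of $(x,u)$. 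For each such pair the number of connectors rises by one, so writing $S_w$ for the post-insertion value, the old contribution $\frac{1}{S_w(x,v)}$ is replaced by $\frac{1}{S_w(x,v)+1}$, giving the increments $\frac{1}{S_w(x,v)+1}-\frac{1}{S_w(x,v)}$ and $\frac{1}{S_w(x,u)+1}-\frac{1}{S_w(x,u)}$ that appear in the two sums.

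Summing the single isolated change and the two families of increments yields the claimed expression. The step I expect to require the most care is the exhaustiveness claim in the middle: I must verify that no other pair's shortest-path count moves -- ruling out, for instance, length-$3$ walks through the new edge being mistaken for genuine shortest paths -- and that the two increment families neither overlap with each other nor double-count a pair, which follows because one family ranges over pairs containing $v$ and the other over pairs containing $u$, while $x = u$ or $x = v$ is excluded by the membership conditions and the $-\{v\}$ / $-\{u\}$ restrictions. I would also note that the formula degenerates correctly when a touched pair was previously a $\ddot{S}$-pair with old $S_w = 0$, so no separate case analysis for single-connector pairs is needed.
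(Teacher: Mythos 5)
Your proposal is correct and follows essentially the same route as the paper's proof: isolating the lost contribution of the now-adjacent pair $(u,v)$, then the two families of pairs $(x,v)$ and $(x,u)$ for which $u$ (resp.\ $v$) becomes a new connector, with $S_w$ taken as the post-insertion count so each touched pair's contribution changes from $1/S_w$ to $1/(S_w+1)$. Your additional checks---the rewriting of $C_B(w)$ as $\sum_{(x,y)\notin E} 1/(S_w(x,y)+1)$, the exhaustiveness argument that no other pair is affected, and the non-overlap of the two sums---are sound and make explicit what the paper leaves implicit.
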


\begin{proof}
For vertex $w \in L$, the insertion of $(u, v)$ causes the direct connection between $u$ and $v$ in $G_{E}(w)$ which makes $C_B(w)$ decrease. We need to compute $S_w(u, v)$ before the insert operation and update $C_B(w)$ as: $C_B(w) = C_B(w) - 1/(S_w(u, v)+1)$. In addition, for $x \in N(w) \cap N(v)$ and $(u, x) \notin E$, $v$ now is a new vertex that links $u$ and $x$, thus we calculate $S_w(u, x)$ and update $C_B(w)$ as: $C_B(w) = C_B(w) + 1/(S_w(u, x)+1)-1/S_w(u, x)$. Analogously, for $x \in N(w) \cap N(u)$ and $(v, x) \notin E$, $u$ is a new vertex that connects $v$ and $x$, we update $C_B(w)$ as the above operation.
\end{proof}

Equipped with the above lemmas, we propose a local update algorithm, called \insertedge, to maintain the ego-betweennesses for handling edge insertion. The pseudo-code of \insertedge is illustrated in Algorithm~\ref{alg:insertuptalg}. \insertedge first inserts the edge $(u, v)$ into $G$ (line 1). Then, it invokes the \uptlocalsmap (Algorithm~\ref{alg:uptlocalsmapalg}) to recompute the number of shortest paths of the affected vertex pairs in the ego networks of $u, v$ and their common neighbors (Observation~\ref{obs:insertVertex}). Finally, \insertedge updates the ego-betweennesses for affected vertices. For the endpoints $u, v$ of the inserted edge, we calculate $C_B(u)$ and $C_B(v)$ based on \lamref{obs:insertuv} (lines 4-9); On the other hand, for the common neighbor $w$, \insertedge updates $C_B(w)$ according to \lamref{obs:insertcnbr} (lines 10-15).

\begin{algorithm}[t]
  \scriptsize
  \LinesNumbered
  \caption{\uptlocalsmap}
  \label{alg:uptlocalsmapalg}
  \KwIn{$G = (V, E)$, an edge $(u, v)$.}
  \KwOut{The map set $\mathcal S$.}
  $L \leftarrow N(u) \cap N(v)$\;
  {\bf {for}} $x \in N(u) \backslash L$ {\bf {do}} $S_{u}.{\kw{insert}}((x, v), 0)$\;
  {\bf {for}} $x \in N(v) \backslash L$ {\bf {do}} $S_{v}.{\kw{insert}}((x, u), 0)$\;
  {\bf {for}} $x \in L$ {\bf {do}} $S_{x}.{\kw{insert}}((u, v), 0)$\;
  \For{$p \in L$}
  {
    \For{$x \in N(u) \cap N(p)$}
    {
        \If{$(x, v) \notin E$ and $x \neq v$}
        {
            $S_{u}(x, v)$++\;
            \For{$y \in N(x) \cap N(v) \cap N(p)$}
            {
                {\bf {if}} $\nexists S_{p}(x, v)$ {\bf {then}} $S_{p}.{\kw{insert}}((x, v), 0)$\;
                $S_{p}(x, v)$++\;
            }
        }

    }
    Update $S_v$, $S_p$ as lines 6-11\;
    \For{$q \in L$}
    {
        \If{$(p, q) \notin E$ and $q \prec p$ }
        {
            \For{$y \in N(u) \cap N(p) \cap N(q)$}
            {
                {\bf {if}} $\nexists S_{u}(p, q)$ {\bf {then}} $S_{u}.insert((p, q), 0)$\;
                $S_{u}(p, q)$++\;
            }
            Update $S_v$ as lines 15-17\;
            \comment
            {
            \For{$N(v) \cap N(p) \cap N(q)$}
            {
                {\bf {if}} $\nexists S_{v}(p, q)$ {\bf {then}} $S_{v}.insert((p, q), 0)$\;
                $S_{v}(p, q)$++\;
            }
            }
        }
        \If{$(p, q) \in E$ and $p \prec q$}
        {
            $S_{p}(u, v)$++; $S_{q}(u, v)$++\;
            \comment
            {
            {\bf {if}} $\nexists S_{q}(u, v)$ {\bf {then}} $S_{q}.insert((u, v), 0)$\;
            $S_{q}(u, v)$++\;
            }
        }
    }
  }
  ${\mathcal S} \leftarrow \{ S_x | x \in L \cup \{u, v\}\}$\;
  {\bf return} ${\mathcal S}$\;
\end{algorithm}

\begin{figure}[t]\vspace*{-0.2cm}
\centering
  \subfigure[$G_{E(k)}$]{
  \label{fig:insertegok}
  \begin{minipage}{2.5cm}
  \centering
  \includegraphics[width=\textwidth]{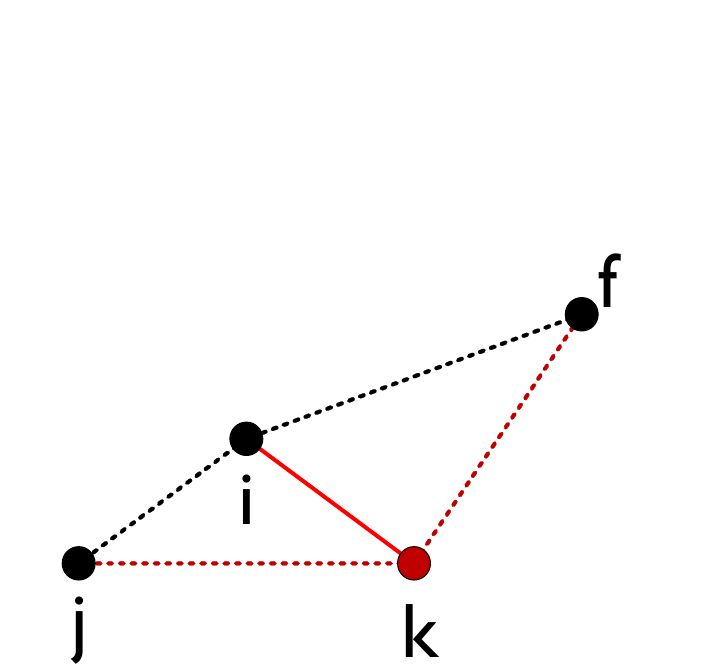}
  \end{minipage}
  }
  \subfigure[$G_{E(i)}$]{
  \label{fig:insertegoi}
  \begin{minipage}{2.5cm}
  \centering
  \includegraphics[width=\textwidth]{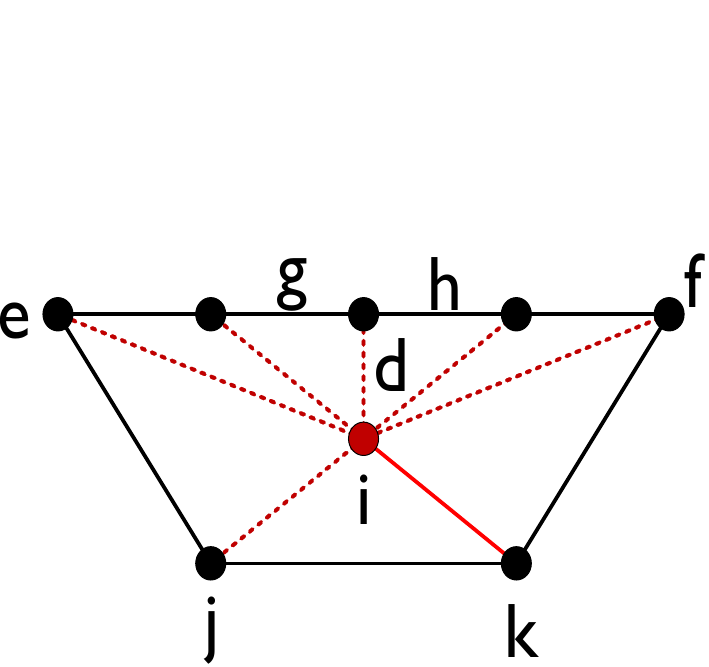}
  \end{minipage}
  }
  \subfigure[$G_{E(f)}$]{
  \label{fig:insertegof}
  \begin{minipage}{2.5cm}
  \centering
  \includegraphics[width=\textwidth]{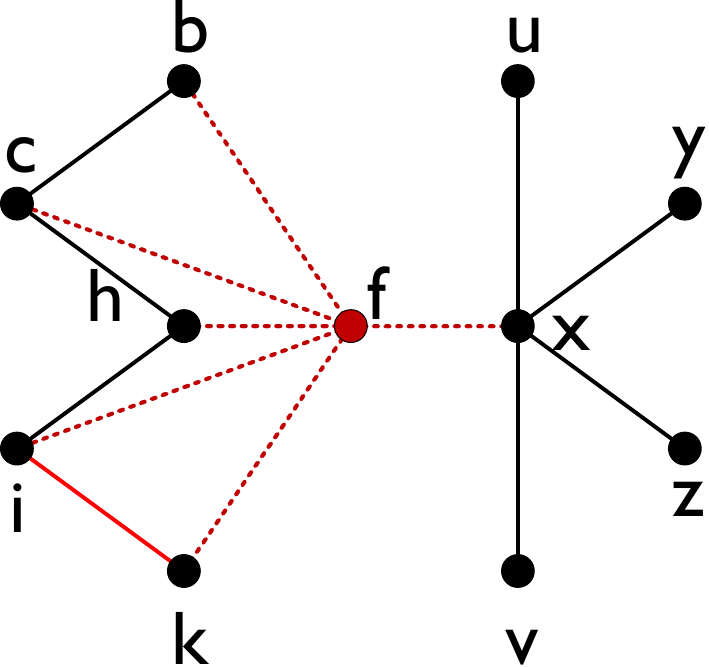}
  \end{minipage}
  }
  \vspace*{-0.2cm}
\caption{Running example}
\label{fig:insertedge}
\vspace*{-0.2cm}
\end{figure}

\begin{example}
Reconsider the graph $G$ in \figref{fig:expgraph}. Suppose that we insert an edge $(i, k)$ into $G$. Clearly, the ego-betweennesses of $i, k$ and their common neighbor change based on Observation \ref{obs:insertVertex}. \figref{fig:insertegok} and \figref{fig:insertegoi} depict the ego networks of $k$ and $i$, respectively. In \figref{fig:insertegok}, the new pairs, i.e., $(f, i)$ and $(j, i)$, are generated due to the connection of $i$ and $k$, thus $C_B(k)$ changes. According to \lamref{obs:insertuv}, the new $C_B(k)$ is $C_B(k) = C_B(k) + 1/(S_k(f, j)+1) - 1/S_k(f, j) = 1 + 1/(1+1) - 1/1= 1/2$. Similarly, we can easily check that the updated ego-betweenness of $i$ is $C_B(i) = 10.5$ from $G_{E(i)}$. For the common neighbor $f$, its ego network $G_{E(f)}$ is shown in \figref{fig:insertegof}. After the insertion of $(i, k)$, $C_B(f)$ decreases from 11 to 9.5. This is because $i$ is a neighbor of $k$ and they no longer need intermediate vertices to reach each other. In addition, the shortest paths for some vertex pairs may pass through $i$ or $j$ and the number of shortest paths of these pairs increases, thus makes $C_B(f)$ decrease.
\end{example}

\subsection{Local update for edge deletion} \label{sec:updatedeletion}

\comment{
\begin{algorithm}[t]
  \caption{\deleteedge}
  \label{alg:deleteuptalg}
  \KwIn{$G = (V, E)$, ego-betweenness array $C_B$, a deleted edge $(u, v)$.}
  \KwOut{the updated $C_B$.}
  ${\cal S} \leftarrow \uptlocalsmap(G, (u, v))$\;
  $L \leftarrow N(u) \cap N(v)$\;
  \For{$(x, y) \in S_{u}$}
  {
    \If{$x = v$ or $y = v$}
    {
        $C_{B}(u) \leftarrow C_{B}(u) - 1/(S_{u}(x, y)+1)$\;
    }
    \Else
    {
        $C_{B}(u) \leftarrow C_{B}(u) - 1/(S_{u}(x, y)+1) + 1/S_{u}(x, y)$\;
    }
  }
  Update $C_{B}(v)$ as lines 3-7\;
  \For{$x \in L$}
  {
    \For{$(y, z) \in S_{x}$}
    {
        \If{($y = u$ and $z = v$) or ($y = v$ and $z = u$)}
        {
            $C_{B}(x) \leftarrow C_{B}(x) + 1/(S_{x}(y, z)+1)$\;
        }
        \Else
        {
            $C_{B}(x) \leftarrow C_{B}(x) - 1/(S_{x}(y, z)+1) + 1/S_{x}(y, z)$\;
        }
    }
  }
  Delete $(u, v)$ from $G$\;
  {\bf return} $C_{B}$\;
\end{algorithm}
}

Here we consider the case of deleting an edge $(u, v)$ from $G$. When $(u, v)$ is deleted, only the vertices in $L \cup \{u, v\}$ need to update their ego-betweennesses according to Observation~\ref{obs:insertVertex}. Below we introduce the update rules for $u$, $v$ and $w \in L$. Since the proofs of the following lammas are similar to that of \lamref{obs:insertuv} and \lamref{obs:insertcnbr}, we omit them due to the space limitation.


\begin{Lemma}
\label{obs:deleteuv}
Consider a deleted edge $(u, v)$, the updated ego-betweenness of $u$ is: $C_B(u) = C_B(u) + \sum_{x, y \in L, (x, y) \notin E} (1/S_u(x, y) - 1/(S_u(x, y)+1)) - \sum_{x \in N(u), x \notin L} 1/(S_u(v, x)+1) $. The case of updating $C_B(v)$ is similar.
\end{Lemma}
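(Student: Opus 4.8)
The plan is to mirror the insertion analysis of \lamref{obs:insertuv}, reversing the direction of every contribution change. Deleting $(u,v)$ removes $v$ from $u$'s ego network $G_{E(u)}$, so by the same reasoning that underlies Observation~\ref{obs:insertVertex} only the neighbor pairs that involve $v$, or that were bridged through $v$, can alter $C_B(u)$; every other pair keeps its old shortest-path count and hence its old contribution $b_{xy}(u)$. Throughout I would take $S_u(x,y)$ to be the connector count while the edge $(u,v)$ is still present (the \emph{pre-deletion} value); with this convention the two correction sums in the statement carry exactly the signs shown.

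There are two disjoint families of affected pairs to handle. First, take $x,y\in L$ with $(x,y)\notin E$. Since $x,y\in N(u)\cap N(v)$, vertex $v$ was one of the connectors of $(x,y)$ in $G_{E(u)}$, so its removal drops the connector count from $S_u(x,y)$ to $S_u(x,y)-1$ while the pair itself survives (the path $x\!-\!u\!-\!y$ remains, so no disconnection occurs). The contribution therefore changes from $1/(S_u(x,y)+1)$ to $1/\big((S_u(x,y)-1)+1\big)=1/S_u(x,y)$, a net increase of $1/S_u(x,y)-1/(S_u(x,y)+1)$; summing over all such pairs gives the first correction term, added to $C_B(u)$ because losing an alternative route makes $u$ more pivotal. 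Second, take $x\in N(u)$ with $x\notin L$ (so $x\ne v$ and $(v,x)\notin E$): the pair $(v,x)$ existed in $G_{E(u)}$ before deletion and, being non-adjacent, contributed $1/(S_u(v,x)+1)$; after deletion $v$ leaves the ego network and the pair vanishes entirely, so this contribution must be subtracted. Pairs $(v,x)$ with $x\in L$ need no treatment, since $v$ and $x$ were directly adjacent and contributed $0$. Collecting the two families yields precisely the claimed expression, and the formula for $C_B(v)$ follows verbatim after interchanging the roles of $u$ and $v$.

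An essentially bookkeeping-free alternative is to observe that inserting $(u,v)$ into the post-deletion graph restores the original graph. Applying \lamref{obs:insertuv} to that insertion expresses $C_B^{\text{before}}(u)$ in terms of $C_B^{\text{after}}(u)$ plus the insertion corrections, and solving for $C_B^{\text{after}}(u)$ reproduces the displayed identity with its signs flipped. This route is valid because the post-insertion connector counts coincide with the pre-deletion counts $S_u$, so the two sign conventions agree.

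The step I expect to be the main obstacle is exactly this convention matching: one must be certain whether each $S_u(x,y)$ refers to the pre- or post-deletion count, and that every affected pair is counted once. Concretely, I would verify that a pair among common neighbors loses exactly one connector (namely $v$, not any additional vertex), that pairs $(v,x)$ split cleanly into the adjacent case (zero contribution, ignored) and the non-adjacent case (contribution removed), and that the first sum ranges only over non-edges $(x,y)$ so that directly connected common-neighbor pairs, which always contribute $0$, are correctly excluded. Once these case distinctions are pinned down, the algebra is identical to \lamref{obs:insertuv}, which is why it can be omitted.
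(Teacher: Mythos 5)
Your proof is correct and takes essentially the paper's approach: the paper explicitly omits this proof as being ``similar to'' that of \lamref{obs:insertuv}, and your case analysis --- common-neighbor pairs $(x,y)\notin E$ losing exactly the one connector $v$ (with the pair surviving via $x$--$u$--$y$), vanishing non-adjacent pairs $(v,x)$ for $x\notin L$, and adjacent pairs contributing $0$ throughout --- is precisely that insertion argument with the signs reversed, and you also correctly pin down the paper's stated convention that $S_u(x,y)$ here denotes the pre-deletion value. Your alternative derivation by inverting \lamref{obs:insertuv} is a valid consistency check rather than a genuinely different route.
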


\comment{
\begin{proof}
For vertex $u$, after inserting an edge $(u, v)$ into $G$, $v$ is a new neighbor and is added into $G_{Ego}(u)$. For $x \in L$, $x$ and $v$ is linked, thus it does not make any contribution to $C_B(u)$. For $x \notin L$, the $(v, x)$ is a new vertex pair which makes $C_B(u)$ increase, thus we need to recompute $S_u(v, x)$ and update $C_B(u)$ as: $C_B(u) = C_B(u) + 1/(S_u(v, x)+1)$. In addition, for $x, y \in L$ and $(x, y) \notin E$, $C_B(u)$ has included the contribution of pair $(x, y)$, thus we should subtract this part and plus the newly contribution to $C_B(u)$. $v$ is a new vertex that connects $x$ and $y$, the number of the shortest paths between $x$ and $y$ only adds 1, thus we can calculate $S_u(x, y)$ and reveal the previous contribution for updating $C_B(u)$, i.e., $C_B(u) = C_B(u) + 1/(S_u(x, y)+1)-1/S_u(x, y)$.
For vertex $u$, $v$ is not a neighbor, thus $v$ is removed from $G_{Ego}(u)$ and there are some pairs deleted. For $w \notin L$, the pair $(v, w)$ delete, compute and delete $CB = CB - 1/{(b_{wx})}$. In addition, for $w \in L$ and $x \in L$ and $w, x$ are the neighbors of $v$ in $G_{Ego}(u)$, if there is no edge between $w$ and $x$, compute the before number of the shortest path between $w$ and $x$, $b_{wx}$, $CB = CB - 1/{(b_{wx})} + 1/(b_{wx}-1)$. can get bound by degree for $u$ and $v$.
\end{proof}
}

\begin{Lemma}
\label{obs:deletecnbr}
Consider a deleted edge $(u, v)$, the updated ego-betweenness of $w \in L$ is: $C_B(w) = C_B(w) + 1/(S_w(u, v)+1) + \sum_{x \in {N(w) \cap N(u)-\{v\}}, (x, v) \notin E} (1/S_w(x, v) - 1/(S_w(x, v)+1)) + \sum_{x \in {N(w) \cap N(v)-\{u\}}, (x, u) \notin E} (1/S_w(x, u) - 1/(S_w(x, u)+1))$.
\end{Lemma}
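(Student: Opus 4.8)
The plan is to compute the change in $C_B(w)$ directly from Definition~\ref{def:ego-betweenness}, exploiting the fact that in an ego network every pair of non-adjacent neighbors is joined only by length-two paths. Since $w\in L$ forces $u,v\in N(w)$, deleting $(u,v)$ from $G$ amounts to removing the single edge $(u,v)$ from $G_{E(w)}$. So I would first record the structural fact already used in \lamref{lem:baselemma2}: for a neighbor pair $(x,y)$ with $(x,y)\notin E$, the shortest paths between $x$ and $y$ in $G_{E(w)}$ all have length two and pass through the common neighbors of $x$ and $y$; exactly one of them passes through $w$, hence $b_{xy}(w)=1/(S_w(x,y)+1)$, whereas if $(x,y)\in E$ then $b_{xy}(w)=0$. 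Throughout I take $S_w(\cdot,\cdot)$ to denote the value \emph{before} the deletion, mirroring \lamref{obs:deleteuv}.

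Next I would argue that only three families of pairs have their contribution changed, by asking which length-two paths in $G_{E(w)}$ actually use the edge $(u,v)$. A path $x\!-\!m\!-\!y$ uses $(u,v)$ only when its middle vertex is $u$ or $v$ and the far endpoint is the other; together with the direct pair $(u,v)$ this yields exactly: (i) the pair $(u,v)$ itself; (ii) pairs $(x,v)$ with $x\in N(w)\cap N(u)-\{v\}$ and $(x,v)\notin E$, whose length-two path $x\!-\!u\!-\!v$ disappears; and (iii) the symmetric pairs $(x,u)$ with $x\in N(w)\cap N(v)-\{u\}$ and $(x,u)\notin E$. Every other pair keeps both its adjacency status and its common-neighbor count, so its contribution is unchanged; this is the enumeration underlying Observation~\ref{obs:insertVertex}.

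I would then tally the three contributions. For (i), deleting $(u,v)$ does not remove any common neighbor of $u$ and $v$, so its count stays $S_w(u,v)$; the pair flips from adjacent (contributing $0$) to non-adjacent (contributing $1/(S_w(u,v)+1)$), giving the $+1/(S_w(u,v)+1)$ term. For each pair in (ii), $u$ was a common neighbor of $x$ and $v$ before the deletion but is not afterwards, so its intermediate count drops from $S_w(x,v)$ to $S_w(x,v)-1$ and its contribution rises from $1/(S_w(x,v)+1)$ to $1/S_w(x,v)$, contributing $1/S_w(x,v)-1/(S_w(x,v)+1)$; family (iii) is identical with the roles of $u$ and $v$ exchanged. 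Summing the three families gives exactly the claimed expression. In effect, this is just the insertion update of \lamref{obs:insertcnbr} run backwards, which is why each signed term is negated relative to that lemma.

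The main obstacle I expect is not the arithmetic but the bookkeeping in the second step: one must verify that the enumeration of affected pairs is both exhaustive and non-overlapping, in particular that no pair with both endpoints distinct from $u$ and $v$ is touched (its length-two paths cannot use the edge $(u,v)$), that the exclusions $-\{v\}$ and $-\{u\}$ and the non-edge conditions $(x,v)\notin E$, $(x,u)\notin E$ correctly isolate the pairs whose count actually changes, and that the removed intermediate ($u$ for family (ii), $v$ for family (iii)) is genuinely counted in the \emph{before}-deletion value $S_w$, which holds since $u,v\in N(w)$.
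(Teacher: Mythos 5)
Your proposal is correct and takes essentially the paper's route: the paper actually omits this proof, stating it is analogous to \lamref{obs:insertuv} and \lamref{obs:insertcnbr}, and your argument is precisely the proof of \lamref{obs:insertcnbr} (pair $(u,v)$ flipping adjacency status, plus the two symmetric families of pairs losing $u$ resp.\ $v$ as an intermediary) run in reverse, with the signs negated accordingly. Your extra care on the convention for $S_w(u,v)$ --- observing that the common-neighbor count of $u$ and $v$ is unchanged by deleting the edge between them, so the before- and after-deletion values coincide for that pair --- correctly reconciles your ``before deletion'' convention with the paper's remark that $S_w(u,v)$ in this lemma denotes the post-deletion value.
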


\comment{
\begin{proof}
For vertex $w \in L$, the insertion of $(u, v)$ causes the direct connection between $u$ and $v$ in $G_{Ego}(w)$ which makes $C_B(u)$ decrease. We need to compute $S_w(u, v)$ before the insert operation and update $C_B(u)$ as: $C_B(w) = C_B(w) - 1/(S_w(u, v)+1)$. In addition, for $x \in N(w) \cap N(v)$ and $(u, x) \notin E$, $v$ is a new vertex that connects $u$ and $x$, thus we calculate $S_w(u, x)$ and update $C_B(u)$ as: $C_B(u) = C_B(u) + 1/(S_w(u, x)+1)-1/S_w(u, x)$. Analogously, for $x \in N(w) \cap N(u)$ and $(v, x) \notin E$, $u$ is a new vertex that connects $v$ and $x$, we update $C_B(w)$ as the above operation.
Consider $w \in L$. After deleting an edge $(u, v)$, $u$ and $v$ are not connected in $G_{Ego}(w)$, thus compute after the shortest path between $u$ and $v$, $C_B(w) =  C_B(w) + 1/b_{uv}$. $x$ is $u$'s neighbor but not $v$'s neighbor, compute before $b_{xv}$, now $CB = CB - 1/{(b_{yu})} + 1/(b_{yu}-1)$. $y$ is $v$'s neighbor but not $u$'s neighbor, compute before $b_{yu}$, $CB = CB - 1/{(b_{yu})} + 1/(b_{yu}-1)$. increase.
\end{proof}
}

Note that $S_u(x, y)$ in \lamref{obs:deleteuv} and \lamref{obs:deletecnbr} represents the value before deleting the edge. In particular, $S_w(u, v)$ in \lamref{obs:deletecnbr} is the value after the deleting update. Based on these lemmas, we present a local update algorithm, called \deleteedge, to maintain the ego-betweennesses when an edge $(u, v)$ is deleted. The framework of \deleteedge is similar to that of \insertedge. We only need to make the following minor changes. For $u$ and $v$, \deleteedge modifies line 6 and line 8 of Algorithm~\ref{alg:insertuptalg} to $C_{B}(u) \leftarrow C_{B}(u) - 1/(S_{u}(x, y)+1)$ and $C_{B}(u) \leftarrow C_{B}(u)+ 1/S_{u}(x, y) - 1/(S_{u}(x, y)+1)$ based on \lamref{obs:deleteuv}. According to \lamref{obs:deletecnbr}, \deleteedge calculates $C_B(x)$ for a common neighbor $x$ as $C_{B}(x) \leftarrow C_{B}(x) + 1/(S_{x}(y, z)+1)$ and $C_{B}(x) \leftarrow C_{B}(x)+ 1/S_{x}(y, z) - 1/(S_{x}(y, z)+1) $ corresponding to line 13 and line 15 of Algorithm~\ref{alg:insertuptalg}. Note that \deleteedge first performs \uptlocalsmap (Algorithm~\ref{alg:uptlocalsmapalg}) before deleting $(u, v)$ and then updates the ego-betweennesses of the affected vertices. Finally, it removes $(u, v)$ from $G$ and terminates. We omit the pseudo-code of \deleteedge due to the space limit.

\begin{figure}[t]\vspace*{-0.2cm}
\centering
  \subfigure[$G_{E(c)}$]{
  \label{fig:deleteegoc}
  \begin{minipage}{2.5cm}
  \centering
  \includegraphics[width=\textwidth]{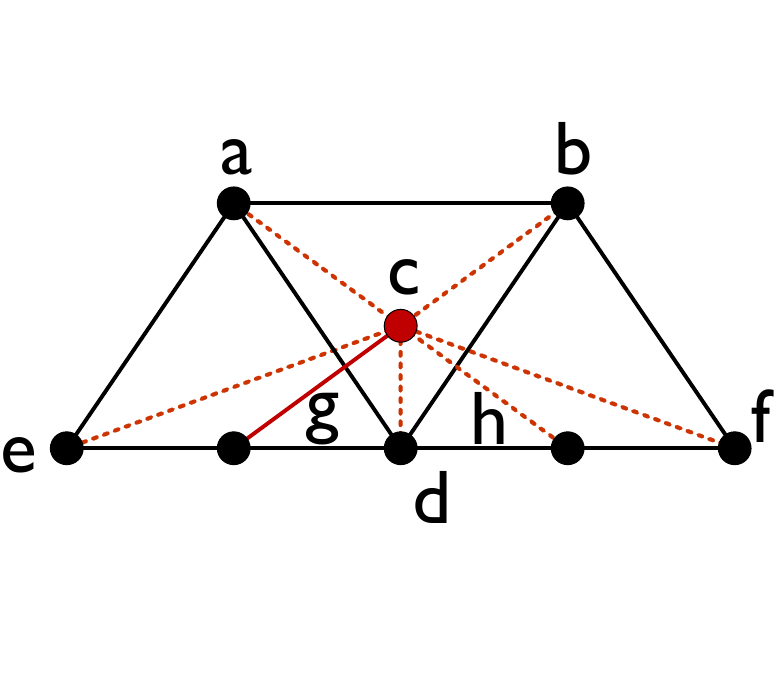}
  \end{minipage}
  }
  \subfigure[$G_{E(g)}$]{
  \label{fig:deleteegog}
  \begin{minipage}{2.5cm}
  \centering
  \includegraphics[width=\textwidth]{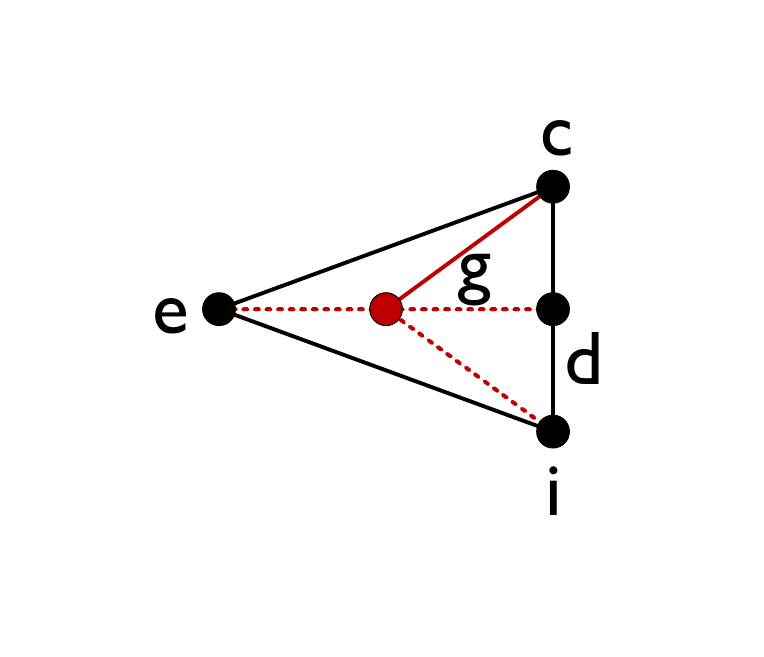}
  \end{minipage}
  }
  \subfigure[$G_{E(e)}$]{
  \label{fig:deleteegoe}
  \begin{minipage}{2.5cm}
  \centering
  \includegraphics[width=\textwidth]{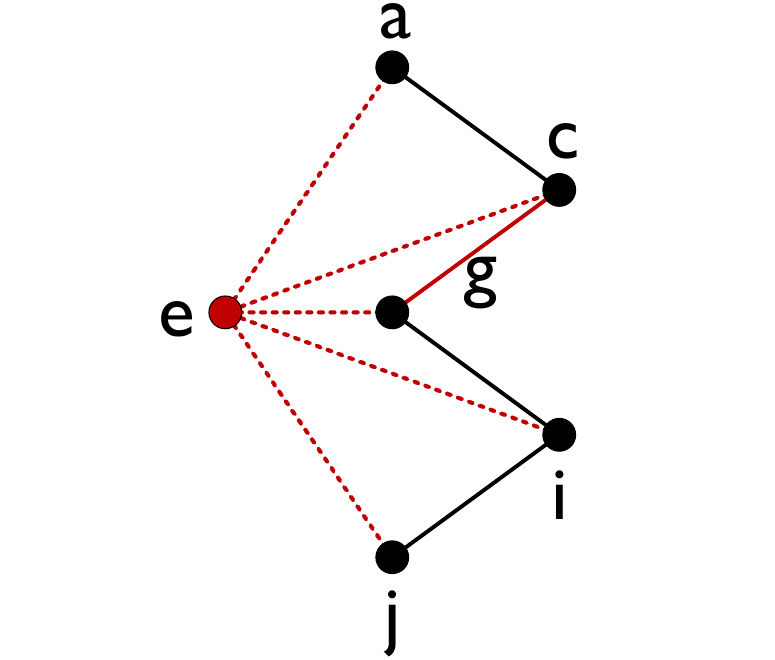}
  \end{minipage}
  }
  \vspace*{-0.2cm}
\caption{Running example}
\label{fig:deleteedge}\vspace*{-0.2cm}
\end{figure}

\begin{example}
Reconsider the graph $G$ in \figref{fig:expgraph}. Suppose that we delete an edge $(c, g)$ from $G$. The ego networks of $c, g$ and their common neighbor change; further their ego-betweennesses need to be updated. For vertices $c$ and $g$, their ego networks are depicted in \figref{fig:deleteegoc} and \figref{fig:deleteegog}. Since $c$ and $g$ are disconnected, the pair $(c, i)$ in \figref{fig:deleteegog} no longer exists and the number of shortest paths for the vertex pair $(e, d)$ changes. According to \lamref{obs:deleteuv}, $C_B(g)$ should be updated as $C_B(g) = C_B(g) - 1/(S_g(c, i)+1) + 1/S_g(e, d) - 1/(S_g(e, d)+1) = 2/3 - 1/3 + 1/2 - 1/3 = 1/2$. Analogously, the $C_B(c)$ changes from $41/6$ to $55/6$ which can be easily checked from \figref{fig:deleteegoc}. For the common neighbor $e$, its ego network $G_{E(e)}$ is shown in \figref{fig:deleteegoe}. After deleting $(c, g)$, $C_B(e)$ is still equal to $4.5$ according to \lamref{obs:deletecnbr}.
\end{example}

\subsection{Updating the top-$k$ results} \label{sec:updatetopk}
Here we present lazy update techniques to maintain the top-$k$ results when the graph is updated. The lazy-update techniques for edge insertion and edge deletion are designed by maintaining a sorted list of the vertices. Specifically, the sorted list, denoted by $H$, contains all vertices in $G$. For each vertex $u$ in $H$, $u$ associates with two variables, namely, $H(u).C_B$ and $H(u).F_G$, which represent the ego-betweenness $C_B(u)$ and the update state of $u$. If $H(u).F_G$ equals true, that means $H(u).C_B$ is not the exact value and should be re-calculated. Otherwise, $H(u).C_B$ is accurate. We calculate $H(u).C_B$ for each vertex $u$ and initialize $H(u).F_G$ as false, and then sort all vertices in non-increasing order of their ego-betweennesses to obtain $H$. Equipped with $H$, the lazy update techniques for edge insertion and edge deletion are as follows.

\stitle{Lazy update for edge insertion.} Consider an insertion edge $(u, v)$ and a common neighbor $w \in N(u) \cap N(v)$. The calculations of $C_B(u)$, $C_B(v)$ and $C_B(w)$ are described in \lamref{obs:insertuv} and \lamref{obs:insertcnbr}, respectively. Obviously, $S_{*}(*, *)+1 > S_{*}(*, *)$ holds, thus we have $1/(S_{u*}(*, *)+1) < 1/S_{*}(*, *)$. For vertex $w$, the parts $\sum_{x \in N(w) \cap N(u)-\{v\}, (x, v) \notin E} (1/(S_w(x, v)+1)-1/S_w(x, v))$ and $\sum_{x \in N(w) \cap N(v)-\{u\}, (x, u) \notin E} (1/(S_w(x, u)+1)-1/S_w(x, u))$ are both less than $0$, and $1/(S_w(u, v)+1)$ is subtracted from $C_B(w)$, thus $C_B(w)$ tends to decrease. However, for vertex $u$ (as well as $v$), the part $\sum_{x \in N(u), x \notin L} 1/(S_u(v, x)+1)$ increases, but the part $\sum_{x, y \in L, (x, y) \notin E} (1/(S_u(x, y)+1)-1/S_u(x, y))$ decreases, thus the changes of $C_B(u)$ and $C_B(v)$ are unclear. Nevertheless, an interesting finding is that with the insertion operation, the degrees of $u$ and $v$ increase and the upper bounds of $C_B(u)$ and $C_B(v)$ also increase. Based on these findings, we can implement a lazy update rule to maintain the top-$k$ results for edge insertion.

\begin{algorithm}[t]
  \scriptsize
  \LinesNumbered
  \caption{\insertedgetop}
  \label{alg:insertupttopalg}
  \KwIn{$G = (V, E)$, $H$, an inserted edge $(u, v)$, top-$k$ result set $R$.}
  \KwOut{the updated $R$.}
  \If{$u \in R$}
  {
    Compute $H(u).C_B$; $H(u).F_G \leftarrow false$\;
    \If{$H(u).C_B < \min_{p \in R \backslash \{u\}}C_{B}(p)$}
    {
        \While{$true$}
        {
            $y \leftarrow \arg\max_{p \in H-R}H(p).C_B$\;
            \If{$H(y).F_G = false$ and $H(u).C_B < H(y).C_B$}
            {
                $R \leftarrow (R - \{ u \}) \cup \{ y \}$; {\bf break}\;
            }
            {\bf {else}} Compute $H(y).C_B$; $H(y).F_G \leftarrow false$\;
        }
    }
  }
  \Else{
    $\bound(u) \leftarrow \frac{d(u)*(d(u)-1)}{2}$\;
    \If{$\bound(u) > \min_{p \in R}C_{B}(p)$}
    {
        Compute $H(u).C_B$; $H(u).F_G \leftarrow false$\;
        \If{$H(u).C_B > \min_{p \in R}C_{B}(p)$}
        {
            $y \leftarrow \arg\min_{p \in R}C_{B}(p)$\;
            $R \leftarrow (R - \{ y \}) \cup \{ u \}$\;
        }
    }
    {\bf {else}} $H(u).F_G \leftarrow true$\;
  }
  Update $H$ and $R$ according to $v$ as lines 1-16\;
  $L \leftarrow N(u) \cap N(v)$\;
  \For{$x \in L$}
  {
      {\bf {if}} $x \in R$ {\bf {then}} Update $H$ and $R$ according to $x$ as lines 2-8\;
      \comment{
      \If{$x \in R$}
      {
        Update $H$ and $R$ according to $x$ as lines 2-8\;
        \comment{
        Calculate $C_{B}(x)$\;
        $H(x).cb = C_{B}(x)$; $H(x).flag = false$\;
        \If{$C_{B}(x) < \min_{p \in R/\{x\}}C_{B}(p)$}
        {
            \For{$y \in H - R$}
            {
                \If{$H(y).flag = false$}
                {
                    $R \leftarrow R - \{x\}$; $R \leftarrow R \cup \{y\}$\;
                    {\bf break}\;
                }
                \Else{
                    Calculate $C_{B}(y)$\;
                    $H(y).cb = C_{B}(y)$\;
                    $H(y).flag = false$\;
                }
            }
        }
      }
      }
      }
      {\bf {else}} $H(x).F_G \leftarrow true$\;
  }
  {\bf return} $R$\;
\end{algorithm}

The lazy update algorithm to handle edge insertion, called \insertedgetop, is shown in Algorithm~\ref{alg:insertupttopalg}. For the endpoint $u$ of the inserted edge, \insertedgetop first identifies whether $u$ is included in the top-$k$ result set $R$. If $u \in R$, it calculates the ego-betweenness $H(u).C_B$ and sets $H(u).F_G$ to false to indicate the correctness of $H(u).C_B$. As $H(u).C_B$ is updated, we need to determine whether $u$ still belongs to $R$. If $H(u).C_B >= \min_{p \in R \backslash \{u\}}C_{B}(p)$ holds, $u$ is still included in the top-$k$ result set $R$. On the other hand, \insertedgetop compares $H(u).C_B$ with the ego-betweenness of the $(k+1)$-th element in the sorted list $H$ (lines 4-8). Let $y$ denote the $(k+1)$-th vertex in $H$. If $H(y).F_G$ is false, that means $y$ is the vertex with the highest ego-betweenness that is not contained in $R$. \insertedgetop compares $H(u).C_B$ with $H(y).C_B$ and maintains $R$ (lines 6-7). If $H(y).F_G = true$ holds, \insertedgetop computes $H(y).C_B$ and updates $H(y).F_G$, and then performs the next loop (line 8). While $u \notin R$, we derive the new upper bound $\bound(u)$ of $H(u).C_B$ to determine whether $H(u).C_B$ needs to be computed exactly (lines 10-16). If $\bound(u) <= \min_{p \in R}C_{B}(p)$ holds, it means that $H(u).C_B$ is not greater than $\min_{p \in R}C_{B}(p)$, thus $u$ is still not an answer of the top-$k$ results and \insertedgetop can avoid calculating the correct $H(u).C_B$ and only updates $H(u).F_G$ to true (line 16). Otherwise, \insertedgetop calculates $H(u).C_B$ and identifies whether $u$ should be inserted into $R$ (lines 12-15). Likewise, we perform the same operation for the other endpoint $v$ (line 17). Then, \insertedgetop handles the common neighbors of $u$ and $v$ (lines 18-21). For vertex $x \in L$, the algorithm judges whether $x$ is included in $R$. If yes, it updates $H$ and $R$ as the operations of $u$ (line 20). On the other hand, because $C_B(x)$ is decreasing, $x$ is still not in the top-$k$ result set and thus \insertedgetop avoids computing the exact $H(x).C_B$ and only sets $H(x).F_G$ to true (line 21). Note that the ego-betweennesses of the vertices in $R$ are always correct. Finally, \insertedgetop returns the top-$k$ vertices with the highest ego-betweennesses correctly.

\begin{example}
Reconsider the graph $G$ in \figref{fig:expgraph}. Before inserting the edge $(i, k)$ into $G$, we have $C_B(i) = 8$ and $C_B(k) = 1$. After the insertion, $C_B(i)$ is equal to $10.5$ and $C_B(k)$ is $0.5$. For the common neighbor $f$, $C_B(f)$ decreases from $11$ to $9.5$. Clearly, the change of the ego-betweennesses for the ends of the insertion edge is uncertain while it is decreasing for the common neighbors. Suppose that $k = 1$ and the current result set is $R = \{f\}$. For vertex $k$, it is not included in $R$ and its new bound is $(3*2)/2 = 3 < C_B(f) = 11$, thus the calculation of $C_B(k)$ can be skipped and we only set $H(k).FG$ to true. For vertex $i$, the new bound is $(7*6)/2 = 21 > C_B(f) = 11$, thus we need to calculate the new $C_B(i) = 10.5$ and update $R = \{i\}$. In this case, consider the common neighbor $f$, it is not included in $R$. Since $C_B(f)$ is not incremental, it definitely not in the top-1 result after inserting $(i, k)$, thus we can avoid calculating $C_B(f)$ and updating the results $R$.
\end{example}

\comment{
\begin{algorithm}[t]
  \caption{\deleteedgetop}
  \label{alg:deleteupttopalg}
  \KwIn{$G = (V, E)$, $H$, a deleted edge $(u, v)$, top-$k$ result set $R$.}
  \KwOut{the updated $R$.}
  \If{$u \in R$}
  {
    Update $H$ and $R$ as lines 2-8 of Algorithm~\ref{alg:insertupttopalg}\;
  }
  \Else{
    $\overline {bound}(u) \leftarrow \lfloor \frac{d(u)*(d(u)-1)}{2} \rfloor$\;
    \If{$\overline {bound}(u) > \min_{p \in R}C_{B}(p)$}
    {
        Compute $H(u).C_B$; $H(u).F_G \leftarrow false$\;
        \While{$H(\arg\min_{p \in R}C_{B}(p)).F_G = true$}
        {
            $y \leftarrow \arg\min_{p \in R}C_{B}(p)$\;
            Compute $H(y).C_B$; $H(y).F_G \leftarrow false$\;
        }
        \If{$H(u).C_B > \arg\min_{p \in R}C_{B}(p)$}
        {
            $y \leftarrow \arg\min_{p \in R}C_{B}(p)$\;
            $R \leftarrow (R - \{ y \}) \cup \{ u \}$\;
        }
    }
    \Else{
        $H(u).F_G \leftarrow true$\;
    }
  }
  Update $H$ and $R$ according to $v$ as lines 1-14\;
  $L \leftarrow N(u) \cap N(v)$\;
  \For{$x \in L$}
  {
      {\bf {if}} $x \notin R$ {\bf {then}} Update $H$ and $R$ according to $x$ as lines 12-20\;
      \If{$x \notin R$}
      {
        Update $H$ and $R$ according to $x$ as lines 12-20\;
      }
      {\bf {else}} $H(x).F_G \leftarrow true$\;
      \Else{
        $H(x).F_G \leftarrow true$\;
      }
  }
  {\bf return} $R$\;
\end{algorithm}
}

\stitle{Lazy update for edge deletion.} Consider the deletion edge $(u, v)$ and a common neighbor $w \in N(u) \cap N(v)$. Like the edge insertion, the changes of $C_B(u)$, $C_B(v)$ and $C_B(w)$ are as follows. $C_B(w)$ is definitely non-decreasing while $C_B(u)$ and $C_B(v)$ are uncertain. Fortunately, after deleting $(u, v)$, the degrees of $u$ and $v$ decrease and also the upper bounds of $C_B(u)$ and $C_B(v)$ decrease. Based on this, we can implement a lazy update algorithm which is very similar to edge insertion.

Our lazy update algorithm for handling edge deletion, called \deleteedgetop, can be easily devised by slightly modifying Algorithm~\ref{alg:insertupttopalg}. Like lines 14-15 of Algorithm~\ref{alg:insertupttopalg}, \deleteedgetop needs to find the vertex $y$ with the lowest ego-betweenness in the top-$k$ results. Armed with our lazy update technique, the ego-betweennesses of the vertices in $R$ are not all correct, thus \deleteedgetop must find $y \in R$ with the lowest ego-betweenness and $H(y).F_G=false$. The other steps of \deleteedgetop are similar to those of \insertedgetop. Due to the space limit, the pseudo-code of \deleteedgetop is omitted.

\begin{example}
Let us still consider the graph $G$ in \figref{fig:expgraph}. Before deleting the edge $(c, g)$ from $G$, we have $C_B(c) = 41/6$, $C_B(g) = 2/3$ and $C_B(e) = 9/2$. After the deletion, the new ego-betweennesses for $c$, $g$, and $e$ are $55/6, 1/2, 9/2$, respectively. Obviously, the change of the ego-betweennesses for the ends of the deletion edge is uncertain while it is non-decreasing for the common neighbors. Suppose that $k = 1$ and we can check that the current $R = \{f\}$. For vertex $g$, its new bound is equal to $(3*2)/2 = 3 < C_B(f) = 11$ and $g \notin R$, thus we do not need to calculate the new ego-betweenness for $g$ and only set $H(g).FG$ to true. For vertex $c$, its new bound is $(6*5)/2 = 15 > C_B(f) = 11$, thus we calculate the new $C_B(c) = 55/6$ and the top-1 answer is still $f$. When $k = 12$, the top-$k$ results before deleting the edge $(c, g)$ is the set $V-\{u, v, y, z\}$. In this case, the common neighbor $e$ is included in $R$. Since $C_B(e)$ is non-decreasing after deleting $(c, g)$ , it is definitely still contained in the top-$12$ results, thus we can avoid updating the answer set $R$.
\end{example}

\comment{
\begin{algorithm}[t]
  \scriptsize
  \caption{\vertexparaalg$(G)$}
  \label{alg:enumcoloralg}
  \KwIn{$G = (V, E)$.}
  \KwOut{$C_B(u)$ for all $u \in V$.}
  Construct the oriented graph $G^+ =( V, E^+)$ of $G$\;
  \For{parallel $u \in V$}
  {
    $\vparainneralg(u)$\;
  }
  \For{parallel $u \in V$}
  {
    $C_{B}(u) \leftarrow \lfloor \frac{d(u)*(d(u)-1)}{2} \rfloor$\;
    Compute $C_{B}(u)$ as lines 15-17 in Algorithm \ref{alg:bboundalg}\;
  }
  {\bf return} $C_B(u)$ for all $u \in V$\;

  \vspace*{0.1cm}
  {\bf Procedure} $\vparainneralg(u)$\\
  \For{$i \in N(u)$}
  {
    $rd(i) \leftarrow \emptyset$\;
    {\bf {if}} $i \in N^+(u)$ {\bf {do}} $EN \leftarrow EN \cup \{i\}$\;
    {\bf {if}} $i \in N^-(u)$ {\bf {do}} $DN \leftarrow DN \cup \{i\}$\;
  }
  \For{$i \in DN$}
  {
    {\bf {for}} $j \in N(i) \cap N(u)$ {\bf {do}} Initialize $rd(i)$ and $rd(j)$\;
  }
  Lock and update related $S_{*}$ as lines 6-24 in Algorithm \ref{alg:egobwdalg}\;
  \comment{
  \For{$i \in DN$}
  {
    Find $p^*$ as lines 6-10 in Algorithm \ref{alg:egobwdalg}\;
    \If{$V_{is}(p^*) = true$ and $p^* \in EN$}
	{
        Lock and update $S_{u}$, $S_{p^*}$ as line 12 in Algorithm \ref{alg:egobwdalg}\;
    }
  }
  \For{$i \in EN$}
  {
    \For{$j \in EN-\{i\}$ and $(i, j) \in E$}
    {
        Lock and update $S_{u}$, $S_{i}$, $S_{j}$ as lines 12-23 in Algorithm \ref{alg:egobwdalg}\;
        Update $rd(i)$, $rd(j)$ as line 24 in Algorithm \ref{alg:egobwdalg}\;
    }
  }
  }
\end{algorithm}
}

\section{The parallel algorithms} \label{sec:parallelalg}
In this section, we propose parallel ego-betweenness algorithms to improve the scalability of ego-betweenness computation. We first introduce a vertex-based parallel algorithm and then propose an edge-based parallel algorithm to further improve efficiency.

\comment{
\begin{algorithm}[t]
  \caption{\enumegoalg$(G)$}
  \label{alg:enumegoalg}
  \KwIn{$G = (V, E)$.}
  \KwOut{$C_B(u)$ for all $u \in V$.}
  \For{parallel $u \in V$}
  {
    $C_B(u) \leftarrow 0$\;
    \For{$v \in N(u)$}
    {
       \For{$w \in N(u)$}
       {
            \If{$(v, w) \notin E$}
            {
                $C_B(u) \leftarrow C_B(u) + 1/(N(v) \cap N(w) \cap N(u))$\;
            }
       }
    }
  }
  {\bf return} $C_{B}$\;
\end{algorithm}
}

\subsection{A vertex-based parallel algorithm}
The ego-betweenness of a vertex is defined on its ego network which can be calculated independently, thus a straightforward parallel solution is to process each vertex in parallel. However, such a simple solution may be inefficient, especially for large graphs. When processing each vertex independently, we need to construct its ego network and explore the \emph{diamond} structures (a diamond denotes two triangles that have a common edge), which makes the same \emph{diamond} enumerated multiple times, resulting in repetitive calculations. To solve this problem, we propose a vertex-based parallel algorithm as follows.

As can be seen from Algorithm \ref{alg:bboundalg} and Algorithm \ref{alg:optboundalg}, we explore the \emph{diamond} structures by searching triangles, thus we can employ a parallel triangle enumeration to calculate the ego-betweennesses for all vertices. The main idea is that every triangle in $G$ has a unique orientation based on the total ordering, and only be enumerated when processing the highest-ranked vertex in this triangle. When a triangle $\triangle_{(u, v, w)}$ is found, we utilize it to explore \emph{diamond}s and maintain $S_u$, $S_v$, and $S_w$ which record the number of shortest paths between their neighbors. Note that we should lock the map $S$ when it is updated to ensure the correctness of the parallel algorithm. To avoid frequent locking operations, we employ the idea of Algorithm \ref{alg:optboundalg} to divide the neighbors of a vertex into the in-neighbors and out-neighbors for delaying the updates of $S$, which can also search a triangle once. As all triangles are enumerated, that is, the information of the number of shortest paths is correctly maintained in the maps, we calculate the ego-betweenness for each vertex in parallel according to \lamref{lem:baselemma2}. We refer to this parallel implementation as \vertexparaalg and omit the pseudo-code due to the space limit.

\subsection{An edge-based parallel algorithm}
In practice, \vertexparaalg might still be inefficient, because the out-degrees of the vertices typically exhibit a skew distribution, resulting in the workloads of different threads are unbalanced. A better solution is to enumerate triangles for each directed edge in parallel. This is because the distribution of the number of common outgoing neighbors of the directed edges is typically not very skew, thus improving the parallelism of the algorithm. We refer to such an edge-parallel algorithm as \edgeparaalg. In the experiments, we will compare the efficiency of \vertexparaalg and \edgeparaalg.

\comment{
\begin{algorithm}[t]
  \caption{\edgeparaalg$(G)$}
  \label{alg:enumsmapalg}
  \KwIn{$G = (V, E)$.}
  \KwOut{$C_B(v)$ for all $v \in V$.}
  \For{parallel $u \in V$}
  {
    Initialize an array $B$ with $B(i) = false, 0 \le i < n$\;
    {\bf {for}} $v \in N(u)$ {\bf {do}} $B(v) \leftarrow true$\;
    \For{$v \in N(u)$}
    {
        {\bf {if}} $v > u$ {\bf {then break}}\;
        $L \leftarrow \emptyset$\;
        \For{$w \in N(v)$}
        {
            \If{$B(w) = true$}
		    {
                $L \leftarrow L \cup \{w\}$\;
                \If{$w < u$ and $w < v$}
                {
                    Lock and update $S_{u}$, $S_{v}$, $S_{w}$ as lines 12-14 in Algorithm \ref{alg:bboundalg}\;
                }
	        }
        }
        \For{$p \in L$}
        {
            \For{$q \in L$}
		    {

                \If{$(p, q) \notin E$ and $p \neq q$}
                {
                    Lock and update $S_{u}$, $S_{v}$ as lines 12-14 in Algorithm \ref{alg:bboundalg}\;
                }
	        }
        }
    }
  }
  {\bf return} $C_B(u)$ for all $u \in V$\;
\end{algorithm}
}

\section{Experiments} \label{sec:experiments}

\begin{table}[t!]\vspace*{-0.2cm}
    \scriptsize
	\centering
	\caption{Datasets}
    \label{tab:datasets}
	\vspace*{-0.2cm}
	\setlength{\tabcolsep}{1mm}{
		\begin{tabular}{c|c|c|c|c}
			\hline
			Dataset     & $n$       & $m$           & $d_{\max}$ & Desecription\\ \hline \hline
            \youtube     &1,134,890	 &2,987,624	     &28,754		 & Social network\\
            \wikitalk   &2,394,385	 &4,659,565	     &100,029		 & Communication network\\
            \dblp        &1,843,617	 &8,350,260      &2,213		   & Collaboration network\\
            \pokec       &1,632,803  &22,301,964       &14,854		& Social network\\
            \livejournal &3,997,962	 &34,681,189	 &14,815		 & Social network\\
			\hline
		\end{tabular}
	}
\vspace*{-0.2cm}
\end{table}

\begin{figure*}[t!]\vspace*{-0.2cm}
\centering
  \subfigure[\youtube (vary $k$)]{
  \label{fig:exp-cmpbaseopt-time-youtube}
  \begin{minipage}{3.2cm}
  \centering
  \includegraphics[width=\textwidth]{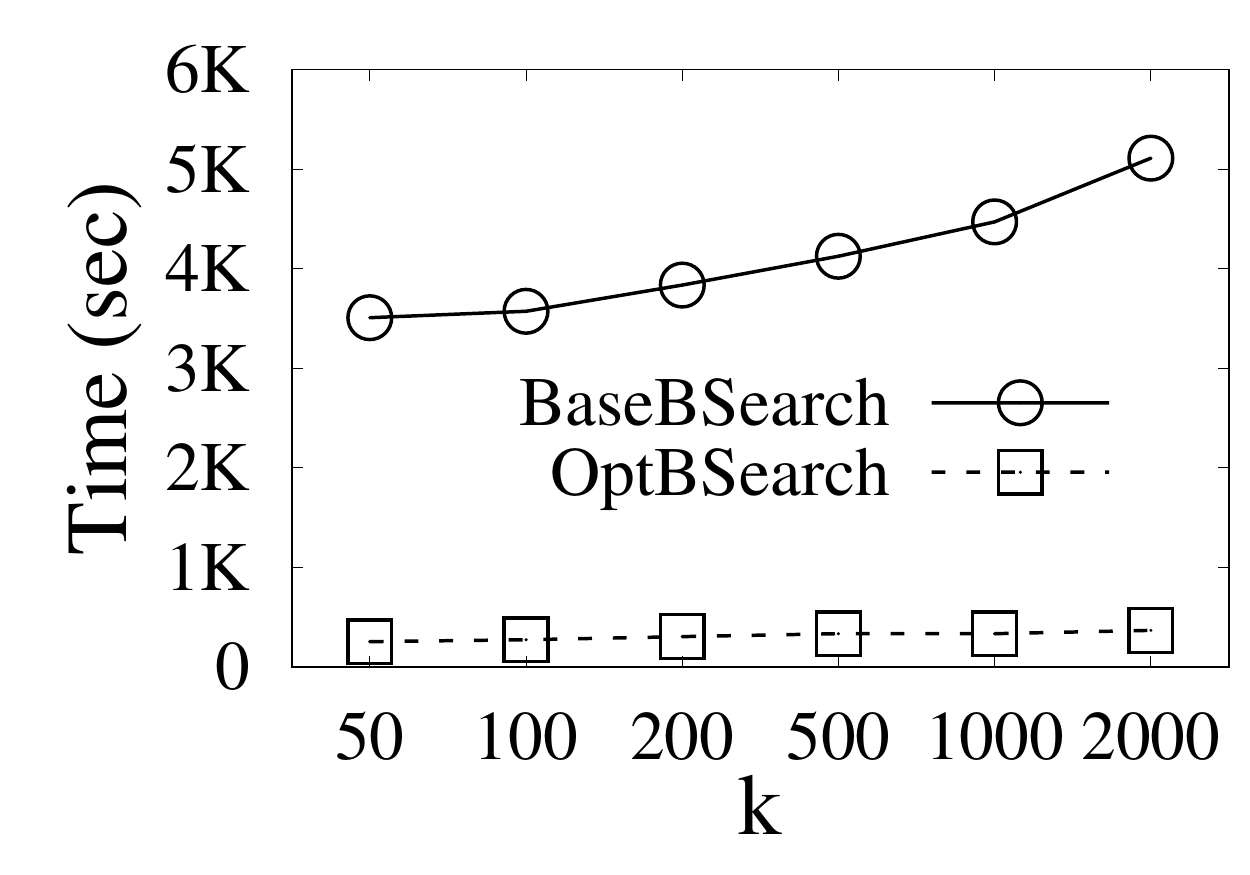}
  \end{minipage}
  }
  \subfigure[\wikitalk (vary $k$)]{
  \label{fig:exp-cmpbaseopt-time-wikitalk}
  \begin{minipage}{3.2cm}
  \centering
  \includegraphics[width=\textwidth]{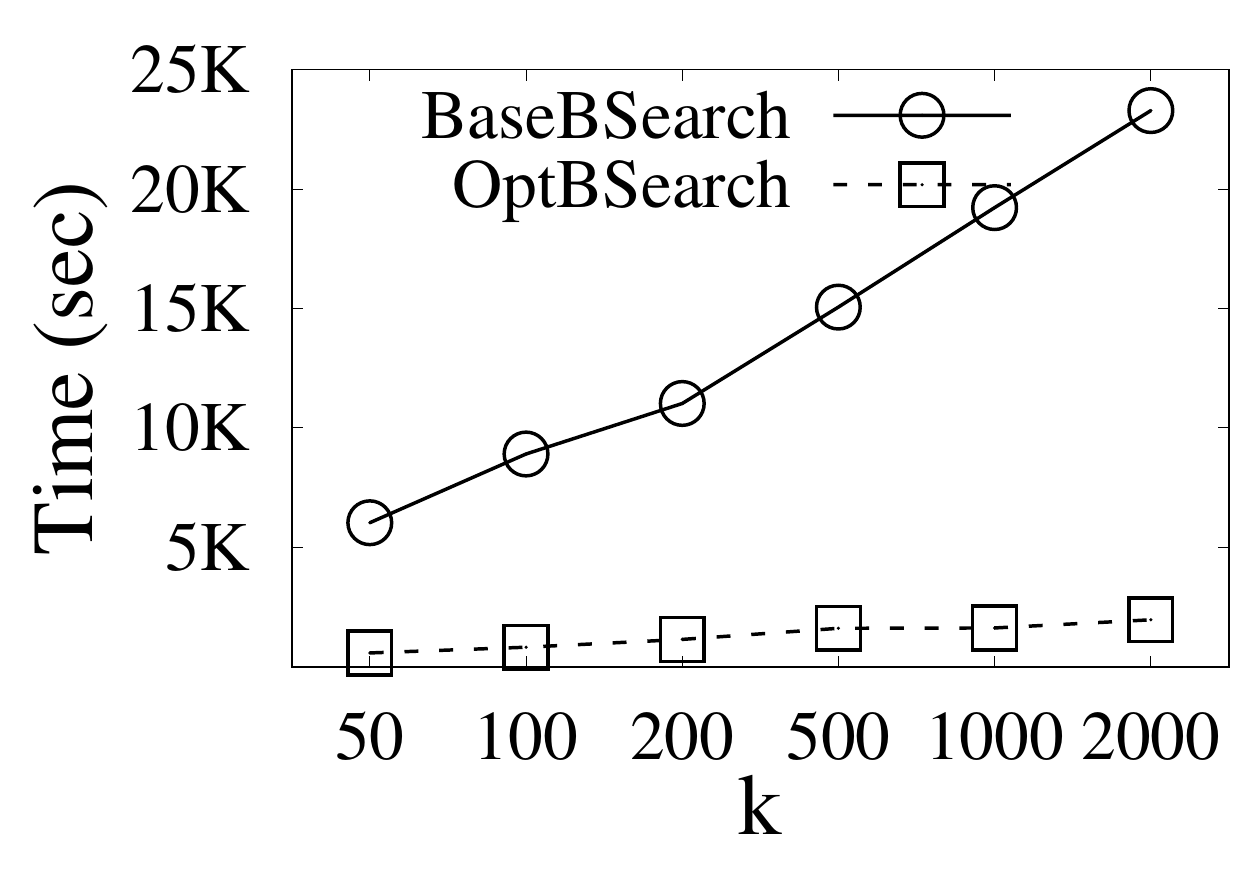}
  \end{minipage}
  }
  \subfigure[\dblp (vary $k$)]{
  \label{fig:exp-cmpbaseopt-time-dblp}
  \begin{minipage}{3.2cm}
  \centering
  \includegraphics[width=\textwidth]{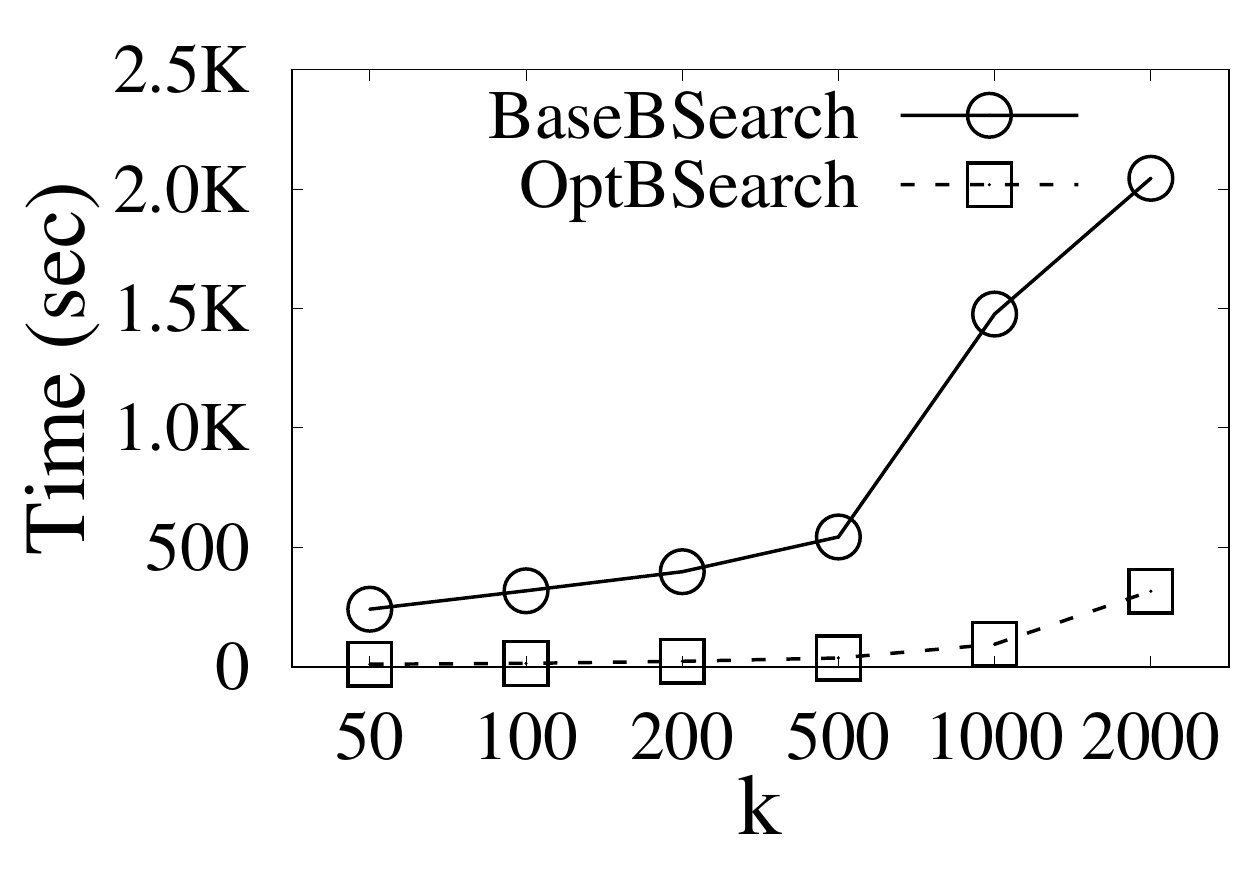}
  \end{minipage}
  }
  \subfigure[\pokec (vary $k$)]{
  \label{fig:exp-cmpbaseopt-time-pokec}
  \begin{minipage}{3.2cm}
  \centering
  \includegraphics[width=\textwidth]{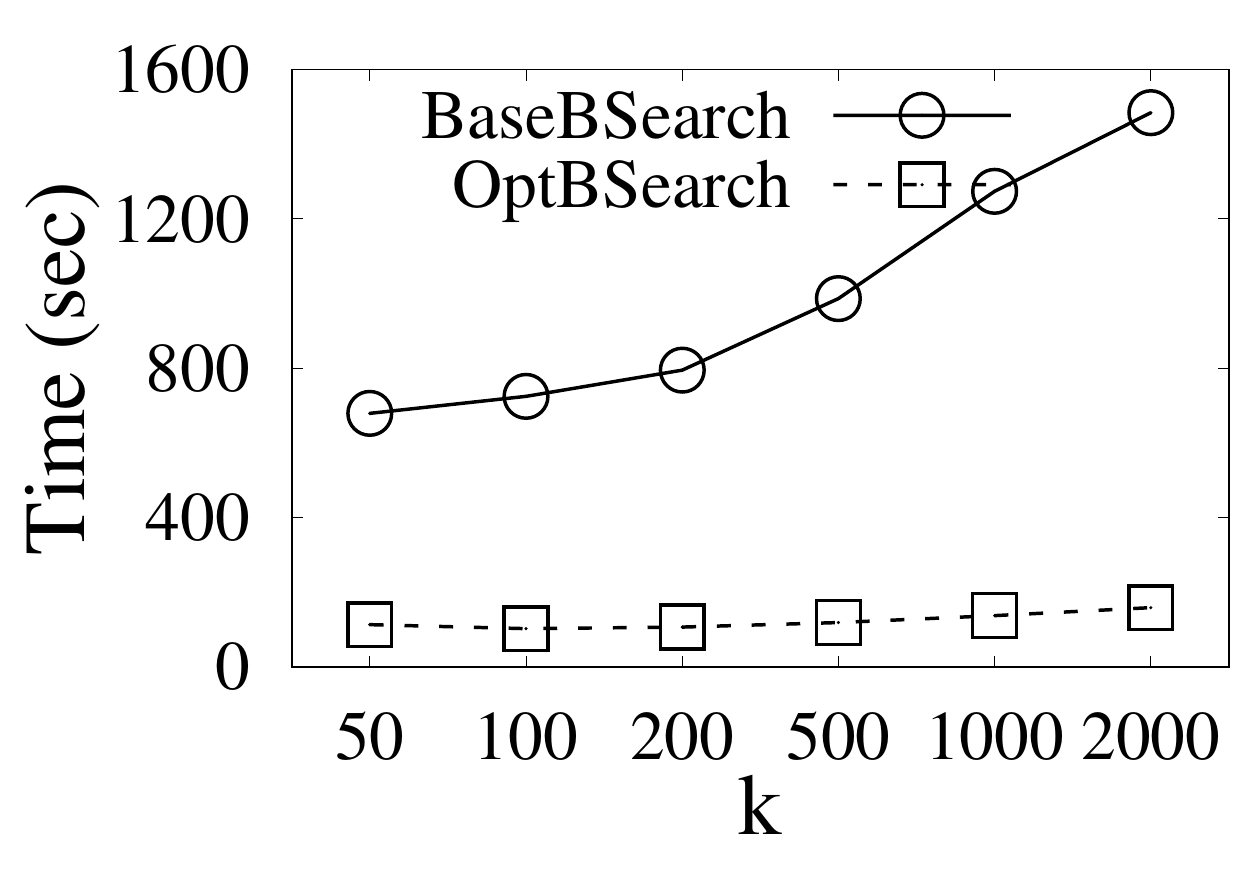}
  \end{minipage}
  }
  \subfigure[\livejournal (vary $k$)]{
  \label{fig:exp-cmpbaseopt-time-livejournal}
  \begin{minipage}{3.2cm}
  \centering
  \includegraphics[width=\textwidth]{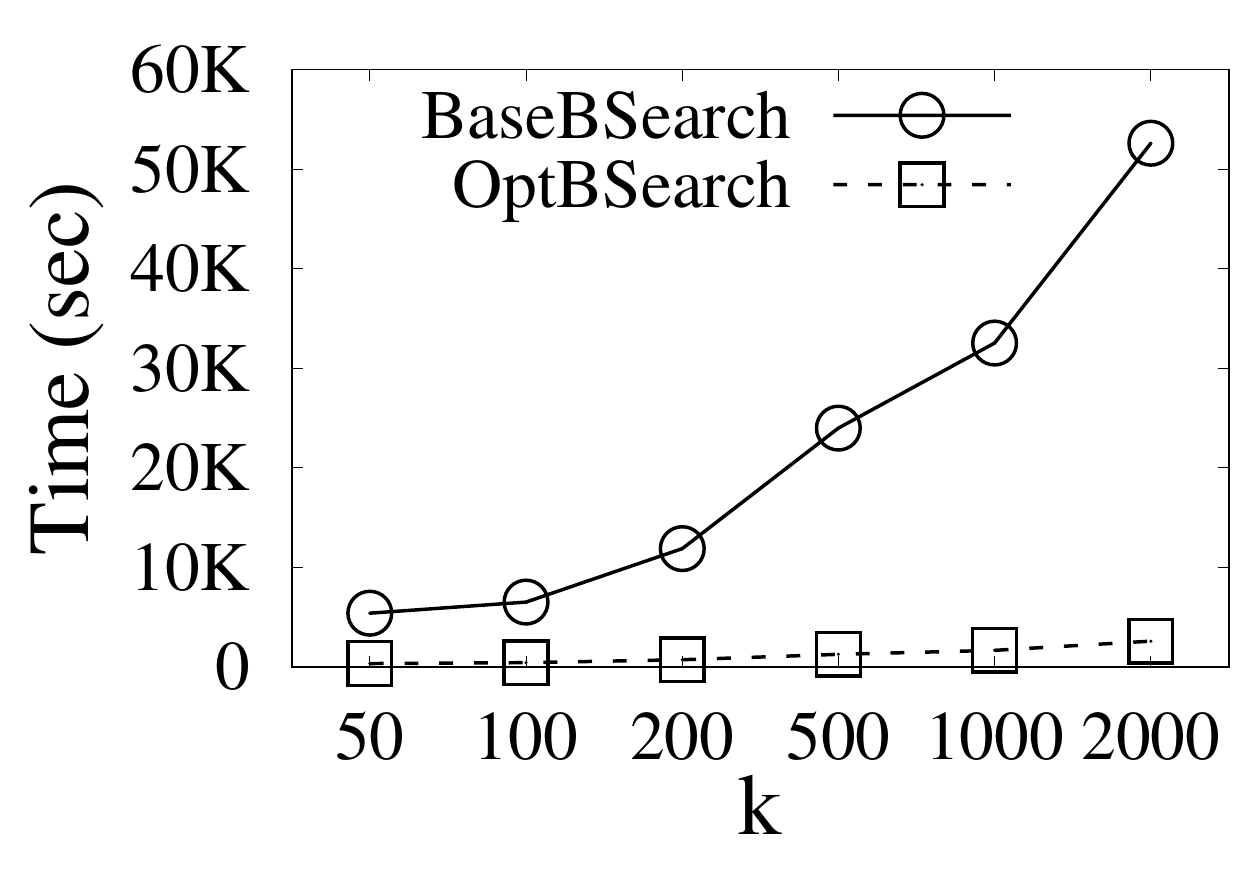}
  \end{minipage}
  }
  \vspace*{-0.2cm}
\caption{Comparisons of \baseboundalg and \optboundalg on various datasets}
\label{fig:exp-cmpbaseopt-time-varyk}\vspace*{-0.2cm}
\end{figure*}

In this section, we conduct extensive experiments to evaluate the efficiency and effectiveness of the proposed algorithms. We implement two top-$k$ ego-betweenness search algorithms, namely, \baseboundalg and \optboundalg (Algorithm \ref{alg:bboundalg} and Algorithm \ref{alg:optboundalg}). To maintain the ego-betweennesses for all vertices, we implement \insertedge (Algorithm~\ref{alg:insertuptalg}) and \deleteedge for handling edge insertion and edge deletion, respectively. We also implement \insertedgetop (Algorithm \ref{alg:insertupttopalg}) and \deleteedgetop to maintain the top-$k$ results for edge insertion and edge deletion, respectively. In addition, we implement two parallel algorithms, \vertexparaalg and \edgeparaalg, to calculate ego-betweennesses of all vertices using OpenMP. All algorithms are implemented in C++. All experiments are conducted on a PC with 2.10GHz CPU and 256GB memory running Red Hat 4.8.5. 

\stitle{Datasets.} We use 5 different types of real-life networks in the experiments, including social networks, communication networks and collaboration networks. The detailed statistics of the datasets are summarized in \tabref{tab:datasets}. In \tabref{tab:datasets}, $d_{\max}$ denotes the maximum degree of the graph.  All these datasets are downloaded from \url{snap.stanford.edu}. 



\stitle{Parameters.} The parameter $k$ in our algorithms is chosen from the set $\{50, 100, 200, 500, 1000, 2000\}$ with a default value of $k = 500$. The parameter $\theta$ in \optboundalg is selected from the set $\{1.05, 1.10, 1.15, 1.20, 1.25, 1.30\}$ with a default value 1.05. We will study the performance of our algorithms with varying $k$ and $\theta$. Unless otherwise specified, the value of a parameter is set to its default value when varying another parameter.

\subsection{Efficiency testing}\label{subsec:exp-efficiency}

\begin{table}[!t]\vspace*{-0.3cm}
\scriptsize
\centering
\caption{The number of vertices for exact computation}
\label{tab:cptcountcmp}
\vspace*{-0.2cm}
\resizebox{\linewidth}{!}
{
\begin{tabular}{c|c|c|c|c|c|c}
\hline
{ \multirow{2}*{Dataset} }& \multicolumn{2}{c|}{$k = 500$} &\multicolumn{2}{c|}{$k = 1000$} &\multicolumn{2}{c}{$k = 2000$}\\

\cline{2-7}
{}\rule{0pt}{8pt}&\bbsalg&\obsalg&\bbsalg&\obsalg&\bbsalg&\obsalg\\
\hline \hline
{\youtube}&564&\textbf{522}&1143&\textbf{1032}&2324&\textbf{2065}\\
{\wikitalk}&527&\textbf{508}&1052&\textbf{1013}&2098&\textbf{2013}\\
{\dblp}&557&\textbf{550}&1499&\textbf{1160}&3060&\textbf{2491}\\
{\pokec}&567&\textbf{552}&1230&\textbf{1168}&2498&\textbf{2367}\\
{\livejournal}&791&\textbf{615}&1723&\textbf{1282}&3406&\textbf{2413}\\
\hline
\end{tabular}
}
\end{table}

\begin{figure}[t!]
\centering
  \subfigure[\wikitalk (vary $\theta$)]{
  \label{fig:exp-opt-time-wikitalk}
  \begin{minipage}{3.2cm}
  \centering
  \includegraphics[width=\textwidth]{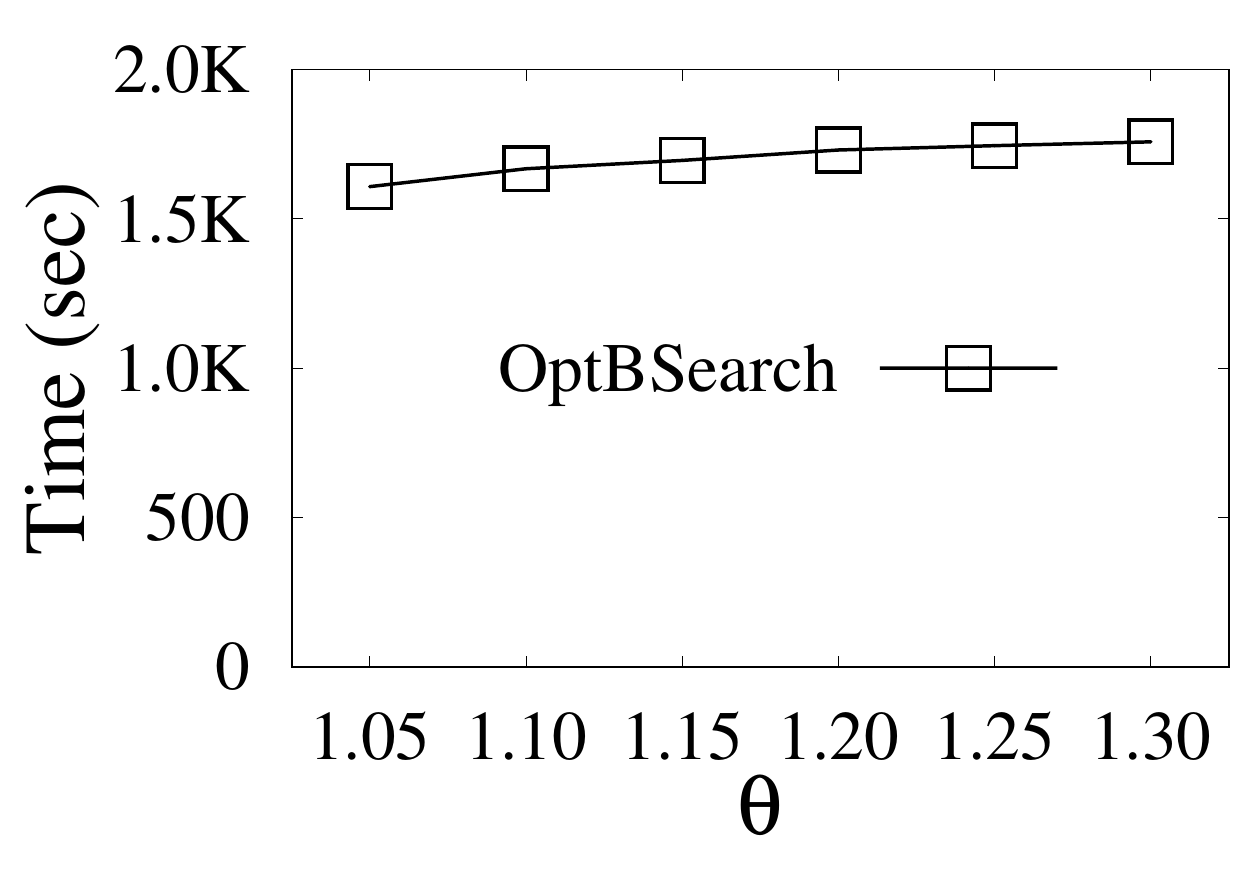}
  \end{minipage}
  }
  \subfigure[\livejournal (vary $\theta$)]{
  \label{fig:exp-opt-time-livejournal}
  \begin{minipage}{3.2cm}
  \centering
  \includegraphics[width=\textwidth]{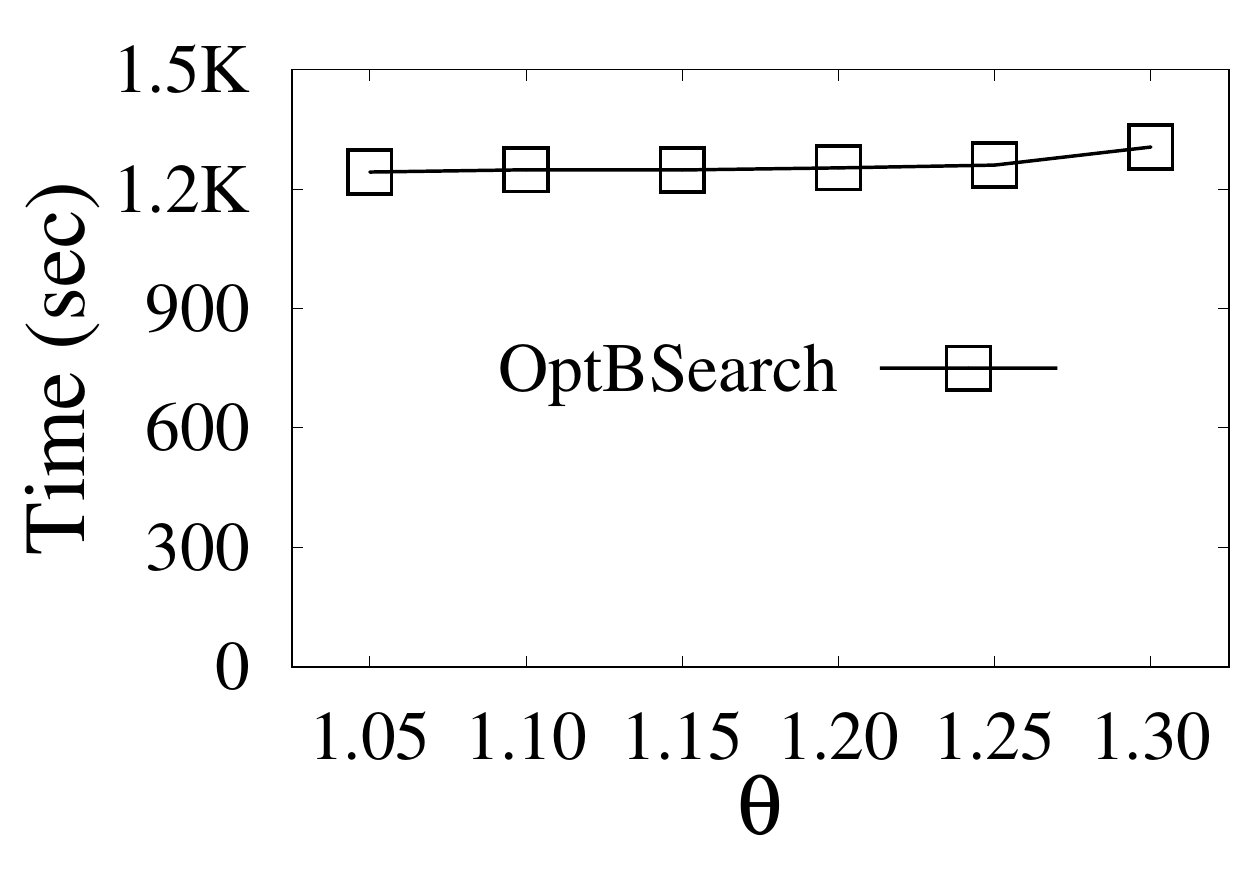}
  \end{minipage}
  }
  \vspace*{-0.2cm}
\caption{Evaluation of \optboundalg with varying $\theta$}
\label{fig:exp-opt-vary-theta}\vspace*{-0.2cm}
\end{figure}

\stitle{Exp-1: Comparison between \baseboundalg and \optboundalg.} \figref{fig:exp-cmpbaseopt-time-varyk} shows the runtime of \baseboundalg and \optboundalg with varying $k$ on all datasets. As expected, the runtime of both \baseboundalg and \optboundalg increases as $k$ increases. As can be seen, \optboundalg is around 6-23 times faster than \baseboundalg with all parameter settings. For example, on \dblp, \optboundalg takes 10.198 seconds, while \baseboundalg consumes 240.482 seconds to retrieve the top-50 results. In the case of $k = 2000$ on \livejournal, \optboundalg takes 2,558.002 seconds, while \baseboundalg consumes 52,599.764 seconds which is roughly 20 times slower than \optboundalg. This is because the dynamic upper bound is tighter than the static upper bound, thus it is more effective to prune the unpromising vertices that are not included in the top-$k$ results. We also record the number of vertices whose ego-betweennesses are computed exactly in \baseboundalg and \optboundalg. For brevity, we refer to \baseboundalg and \optboundalg as \bbsalg and \obsalg. \tabref{tab:cptcountcmp} illustrates the results of $k = 500, 1000, 2000$ on all datasets. Similar results can be observed for other $k$ values. As can be seen, the number of vertices computed by \optboundalg is significantly less than that computed by \baseboundalg on all datasets. For example,  to obtain the top-2000 results on \livejournal, \optboundalg only needs to compute the ego-betweennesses for 2,413 vertices, while \baseboundalg has to compute 3,406 vertices. These results further confirm our theoretical analysis in \secref{sec:onlinealg}.


\comment{
\begin{table*}[!htbp]
\scriptsize
\centering
\caption{The number of vertices for exact computation}
\label{tab:cptcountcmp}
\vspace*{-0.3cm}
\begin{tabular}{|c|c|c|c|c|c|c|c|c|c|c|c|c|}
\hline
{ \multirow{2}*{Dataset} }& \multicolumn{2}{c|}{$k = 50$} &\multicolumn{2}{c|}{$k = 100$} &\multicolumn{2}{c|}{$k = 200$} &\multicolumn{2}{c|}{$k = 500$} &\multicolumn{2}{c|}{$k = 1000$} &\multicolumn{2}{c|}{$k = 2000$}\\
\cline{2-13}
{}&\bbsalg&\obsalg&\bbsalg&\obsalg&\bbsalg&\obsalg&\bbsalg&\obsalg&\bbsalg&\obsalg&\bbsalg&\obsalg\\
\hline
{\astroph}&192&99&371&186&608&311&1364&700&2331&1235&4245&2334\\
\cline{1-13}
{\enron}&75&59&188&123&362&217&945&551&1759&1068&3178&2099\\
\cline{1-13}
{\epinions}&90&61&211&133&391&239&793&530&1477&1054&2642&2063\\
\cline{1-13}
{\euall}&54&54&109&108&213&207&537&517&1042&1008&2282&2007\\
\cline{1-13}
{\flickr}&1507&251&2231&526&3127&902&4693&1645&6286&2563&8309&3854\\
\cline{1-13}
{\gowalla}&72&62&138&105&327&240&840&593&1690&1195&3296&2366\\
\cline{1-13}
{\hepph}&84&78&177&141&394&281&1132&713&2300&1330&4117&2483\\
\cline{1-13}
{\notredame}&102&72&161&115&259&218&1985&689&2782&1855&7333&3829\\
\cline{1-13}
{\slashdot}&63&56&132&121&247&230&644&585&1129&1052&2138&2059\\
\cline{1-13}
{\wikivote}&186&78&302&140&502&242&972&552&1513&1041&2287&2004\\
\cline{1-13}
{\youtube}&58&54&112&103&229&204&564&522&1143&1032&2324&2065\\
\cline{1-13}
{\wikitalk}&50&50&102&101&209&203&527&508&1052&1013&2098&2013\\
\cline{1-13}
{\pokec}&55&54&110&109&218&215&567&552&1230&1168&2498&2367\\
\cline{1-13}
{\dblp}&52&52&107&107&213&213&557&550&1499&1160&3060&2491\\
\cline{1-13}
{\livejournal}&68&61&114&106&274&227&791&615&1723&1282&3406&2413\\
\hline
\end{tabular}
\end{table*}
}

\begin{figure}[t!]
\centering
\subfigure[edge insertion]{
  \label{fig:exp-insert-large}
  \begin{minipage}{3cm}
  \centering
  \includegraphics[width=\textwidth]{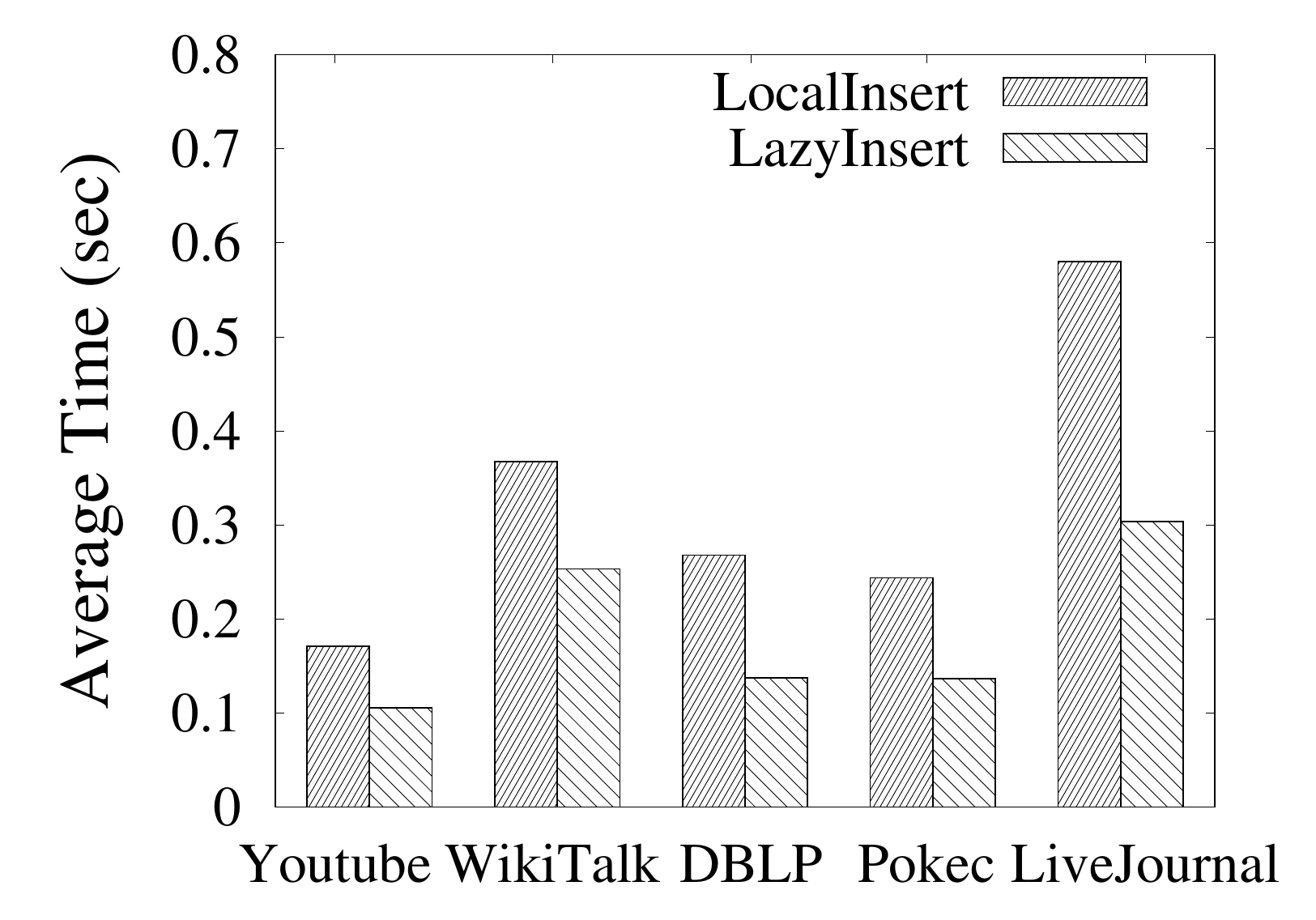}
  \end{minipage}
    }
  \subfigure[edge deletion]{
  \label{fig:exp-delete-large}
  \begin{minipage}{3cm}
  \centering
  \includegraphics[width=\textwidth]{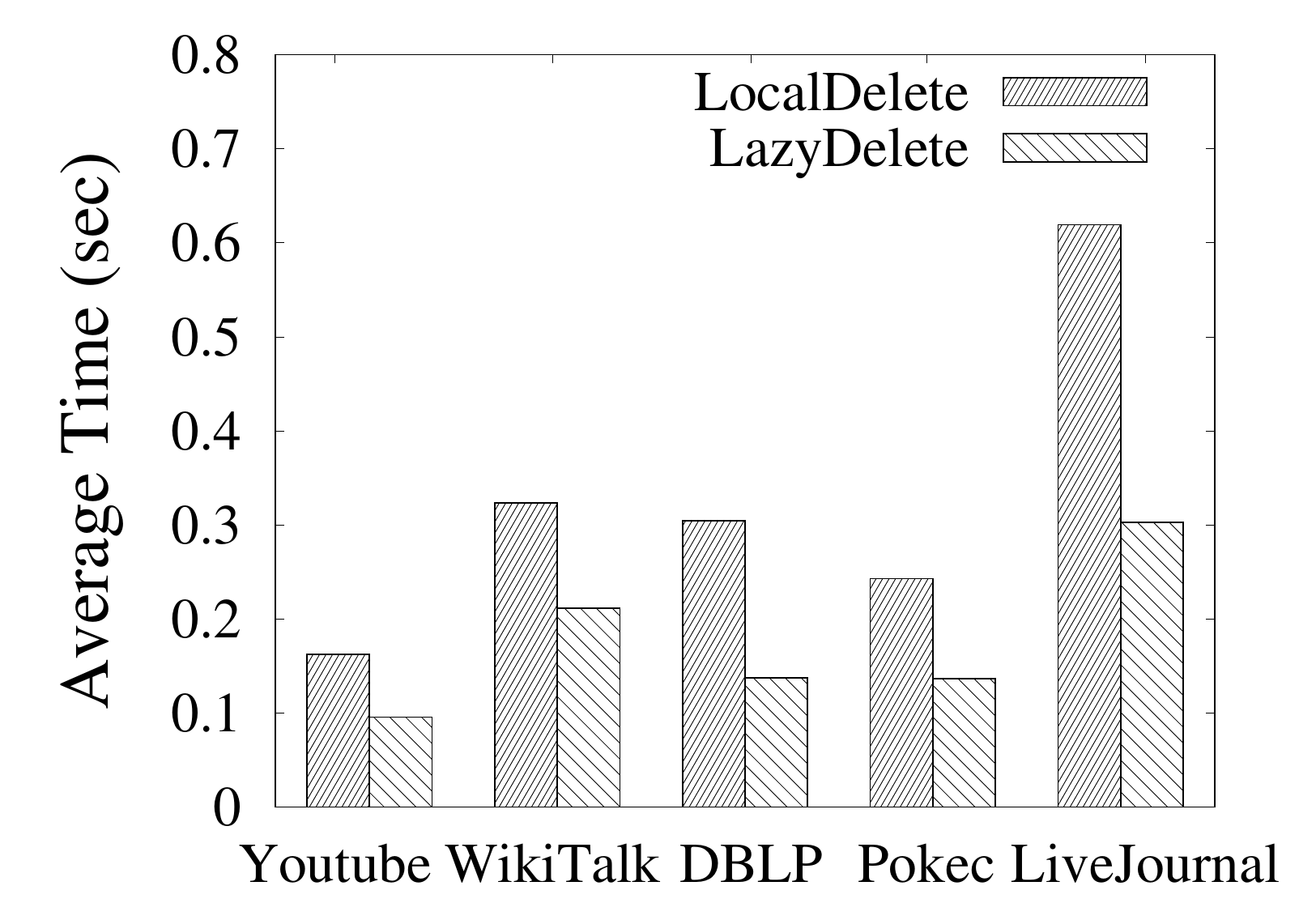}
  \end{minipage}
  }
  \vspace*{-0.2cm}
\caption{Average runtime of the updating algorithms}
\label{fig:exp-update-alldata}\vspace*{-0.08cm}
\end{figure}

\begin{figure}[t!]\vspace*{-0.2cm}
\centering
  \subfigure[\livejournal (vary $m$)]{
  \label{fig:exp-scala-varym-livejournal}
  \begin{minipage}{3.2cm}
  \centering
  \includegraphics[width=\textwidth]{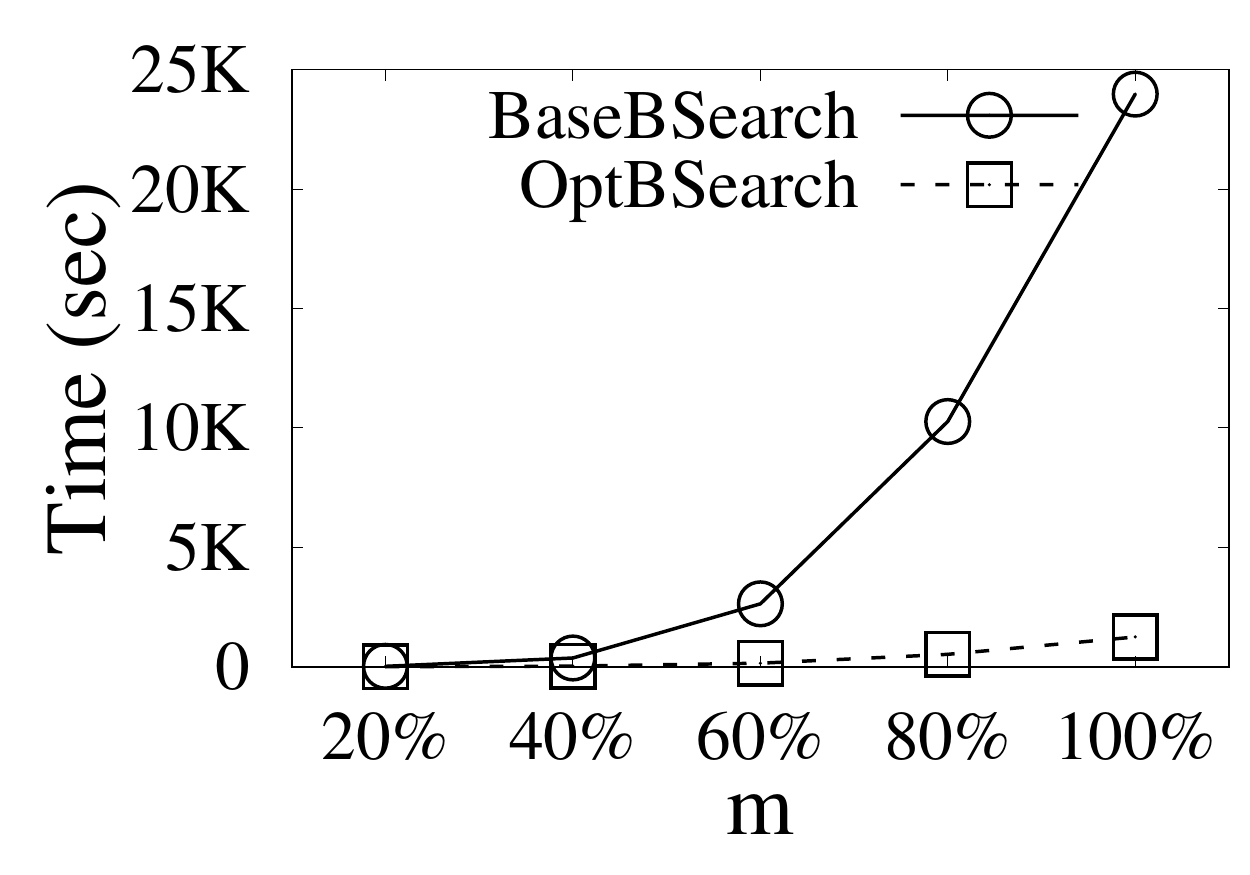}
  \end{minipage}
  }
  \subfigure[\livejournal (vary $n$)]{
  \label{fig:exp-scala-varyn-livejournal}
  \begin{minipage}{3.2cm}
  \centering
  \includegraphics[width=\textwidth]{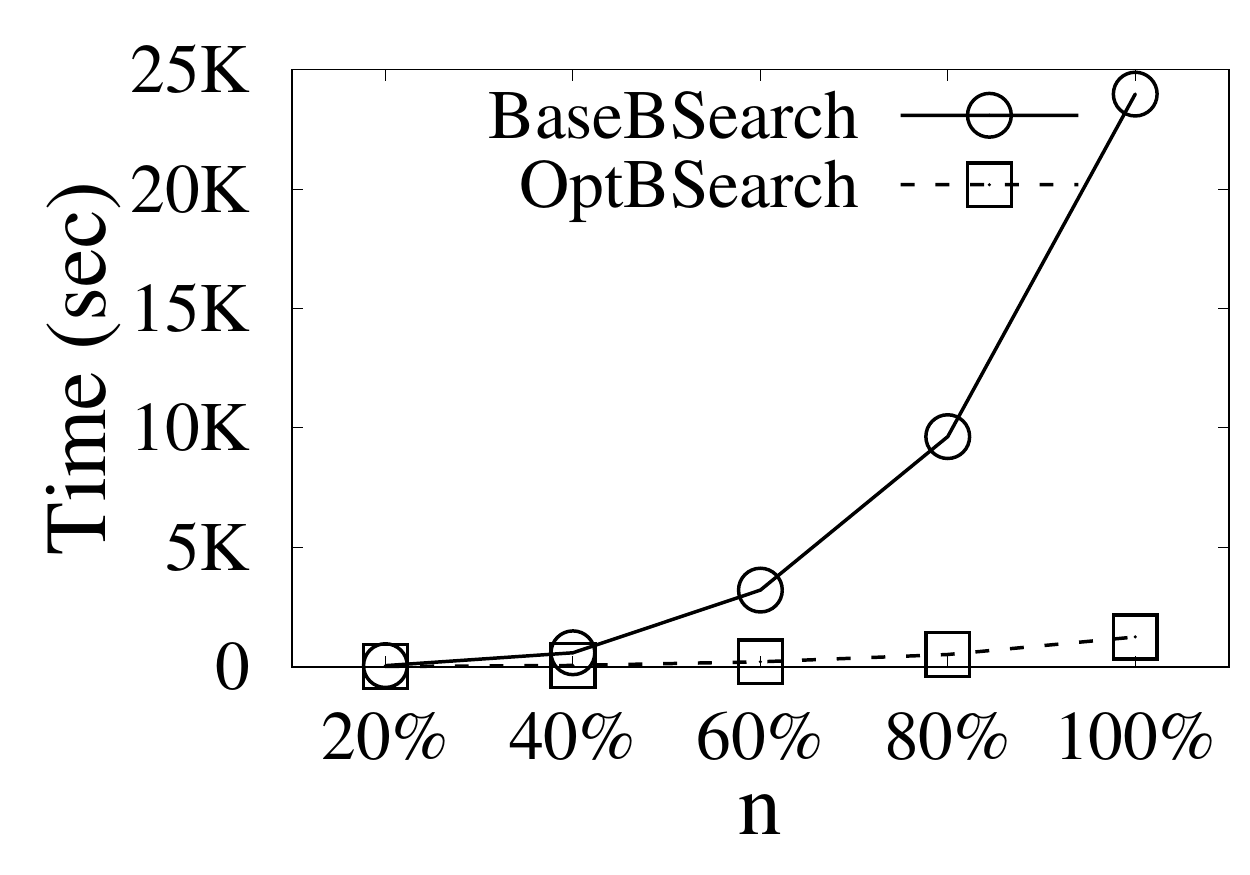}
  \end{minipage}
  }
  \vspace*{-0.2cm}
\caption{Scalability of \baseboundalg and \optboundalg}
\label{fig:exp-scala-varymn}\vspace*{-0.2cm}
\end{figure}

\stitle{Exp-2: The effect of $\theta$.} \figref{fig:exp-opt-vary-theta} reports the effect of parameter $\theta$ in \optboundalg on \wikitalk and \livejournal. The results on the other datasets are consistent. As can be seen, the runtime of \optboundalg varies slightly with different $\theta$ values. In general, \optboundalg performs slightly better with a relatively small $\theta$. For example, with $\theta= 1.05$, \optboundalg consumes the lowest runtime on both \wikitalk and \livejournal. Note that a large $\theta$ may increase the cost of computing the exact ego-betweennesses, while a small $\theta$ may increase the cost of updating the upper bounds in $H$. These results indicate that when $\theta= 1.05$, \optboundalg can achieve a good tradeoff between these two costs.

\stitle{Exp-3: Evaluation of the updating algorithms.} To evaluate the performance of our updating algorithms, we randomly select 1,000 edges for insertion and deletion on each dataset. \figref{fig:exp-update-alldata} shows the average runtime of \insertedge, \deleteedge, \insertedgetop and \deleteedgetop on all datasets. As expected, the update time of \insertedgetop is lower than that of \insertedge. For example, on \livejournal, the \insertedge consumes 0.578 seconds to maintain ego-betweennesses for all vertices, while \insertedgetop takes 0.304 seconds for updating the top-$k$ results. Similar results can also be observed for \deleteedge and \deleteedgetop. In addition, the average runtime of \insertedge (\insertedgetop) and \deleteedge (\deleteedgetop) is almost the same. Note that the runtime of all our updating algorithms is smaller than 0.7 seconds over all datasets. These results indicate that the proposed updating algorithms are very efficient on large real-life graphs.

\stitle{Exp-4: Scalability testing.} Here we evaluate the scalability of \baseboundalg and \optboundalg. To this end, we generate four subgraphs for each dataset by randomly picking 20\%-80\% of the edges (vertices), and evaluate the runtime of \baseboundalg and \optboundalg on these subgraphs. \figref{fig:exp-scala-varymn} illustrates the results on \livejournal. The results on the other datasets are similar. As can be seen, the runtime of \optboundalg increases very smoothly with increasing $m$ or $n$, while the runtime of \baseboundalg increases more sharply. Again,  we can see that \optboundalg is significantly faster than \baseboundalg with all parameter settings, which is consistent with our previous findings.

\begin{figure}[t!]\vspace*{-0.2cm}
\centering
  \subfigure[\livejournal]{
  \label{fig:exp-parallel-time-livejournal}
  \begin{minipage}{3.2cm}
  \centering
  \includegraphics[width=\textwidth]{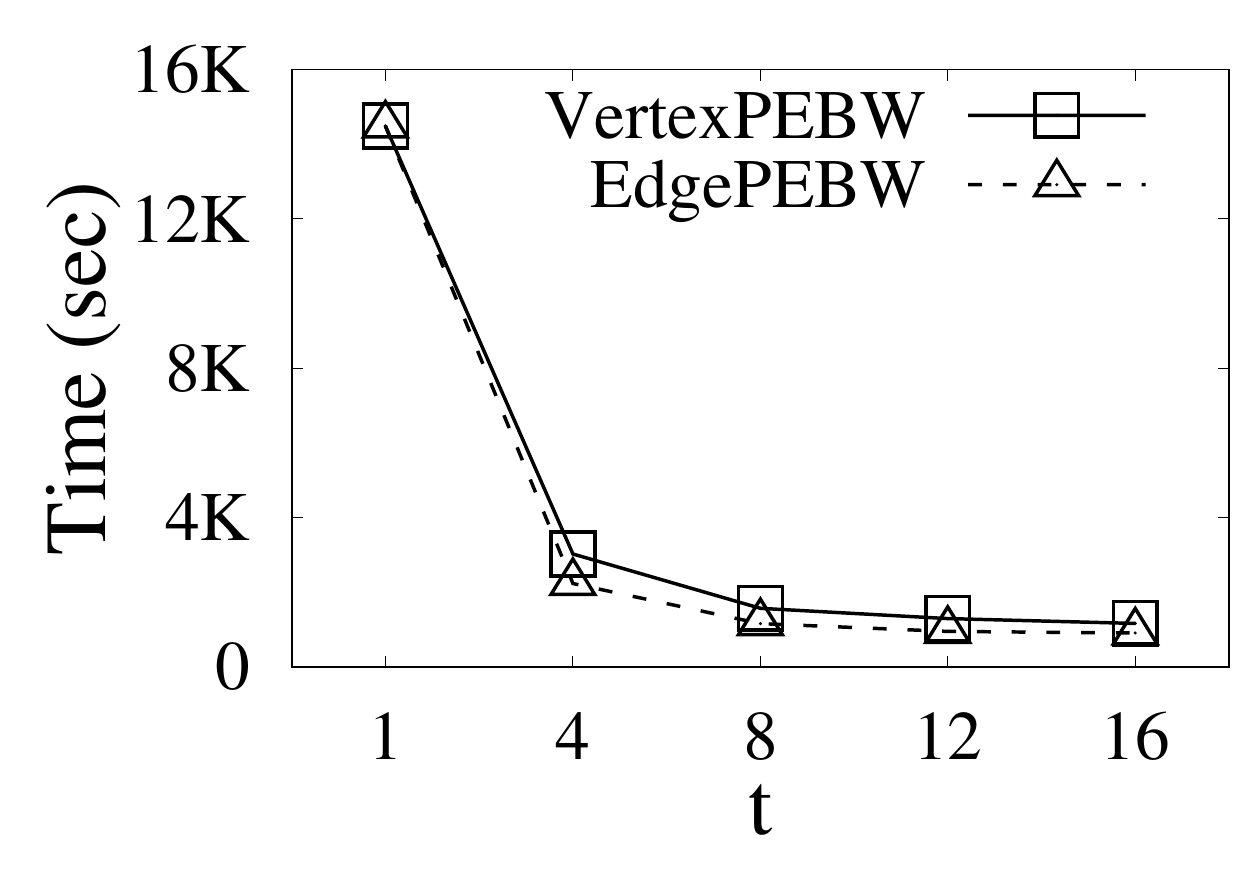}
  \end{minipage}
  }
  \subfigure[\livejournal]{
  \label{fig:exp-parallel-speedup-livejournal}
  \begin{minipage}{3.2cm}
  \centering
  \includegraphics[width=\textwidth]{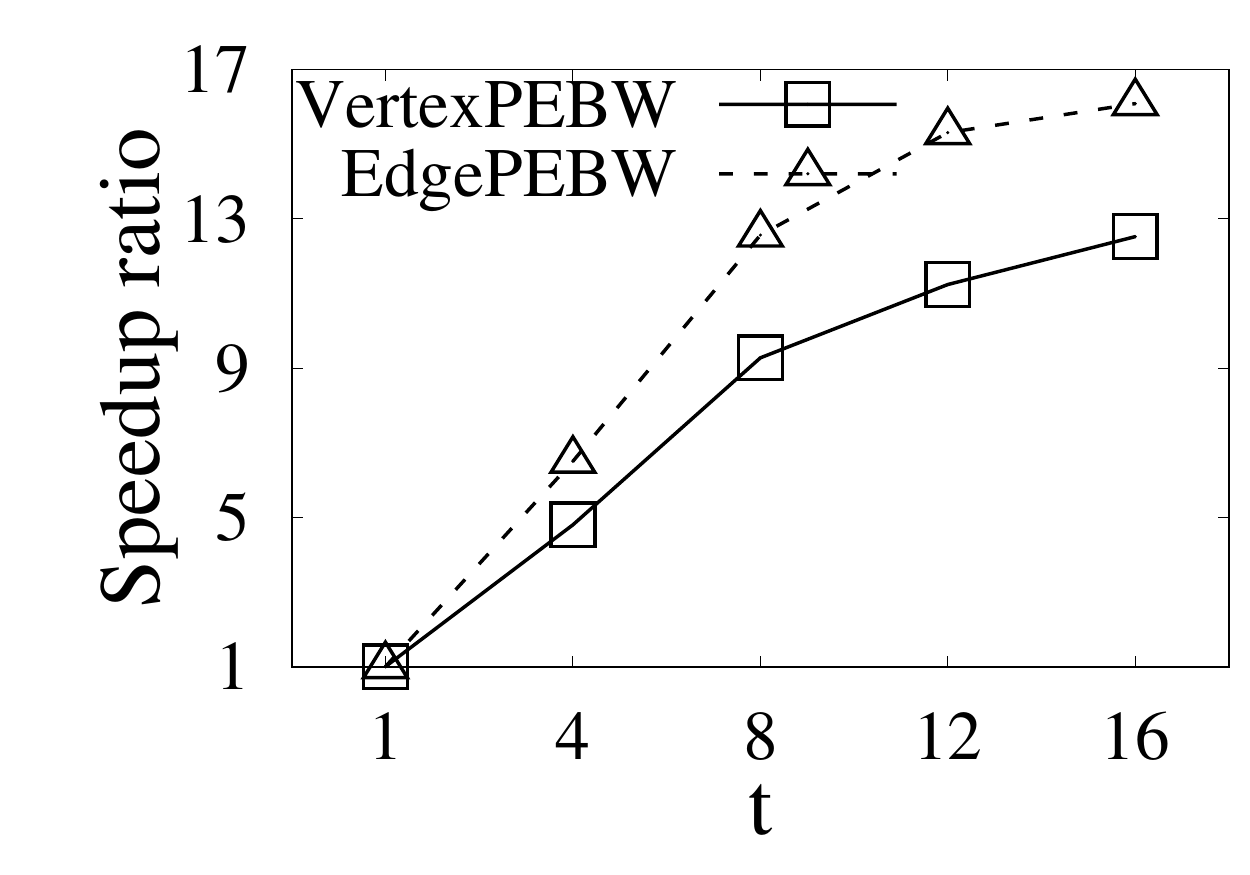}
  \end{minipage}
  }
  \vspace*{-0.2cm}
\caption{Evaluation of the parallel algorithms}
\label{fig:exp-parallel-time-vary-thread}\vspace*{-0.1cm}
\end{figure}

\begin{figure}[t!]\vspace*{-0.2cm}
\centering
\subfigure[{\wikitalk}]{
  \label{fig:exp-btebt-time-wikitalk}
  \begin{minipage}{3.2cm}
  \centering
  \includegraphics[width=\textwidth]{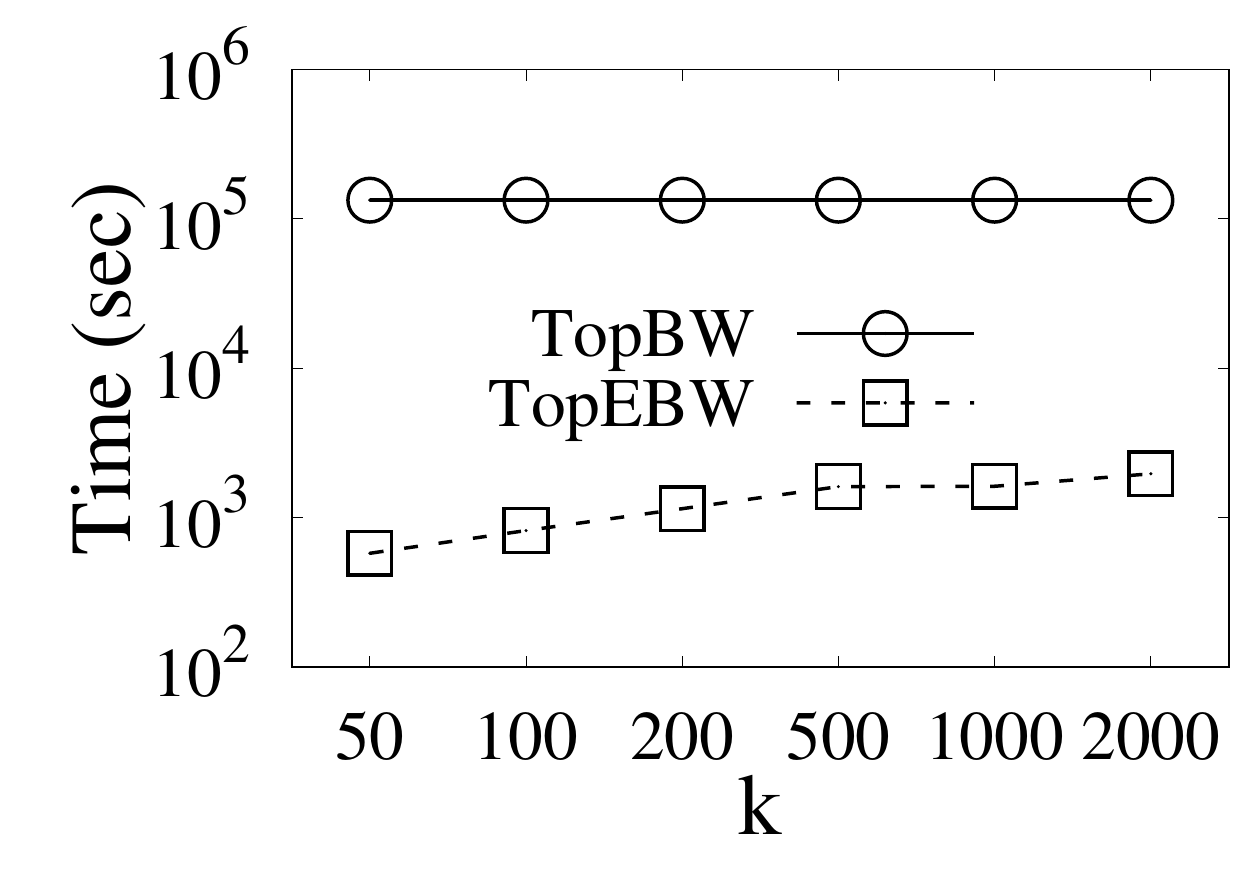}
  \end{minipage}
  }
  \subfigure[{\pokec}]{
  \label{fig:exp-btebt-time-pokec}
  \begin{minipage}{3.2cm}
  \centering
  \includegraphics[width=\textwidth]{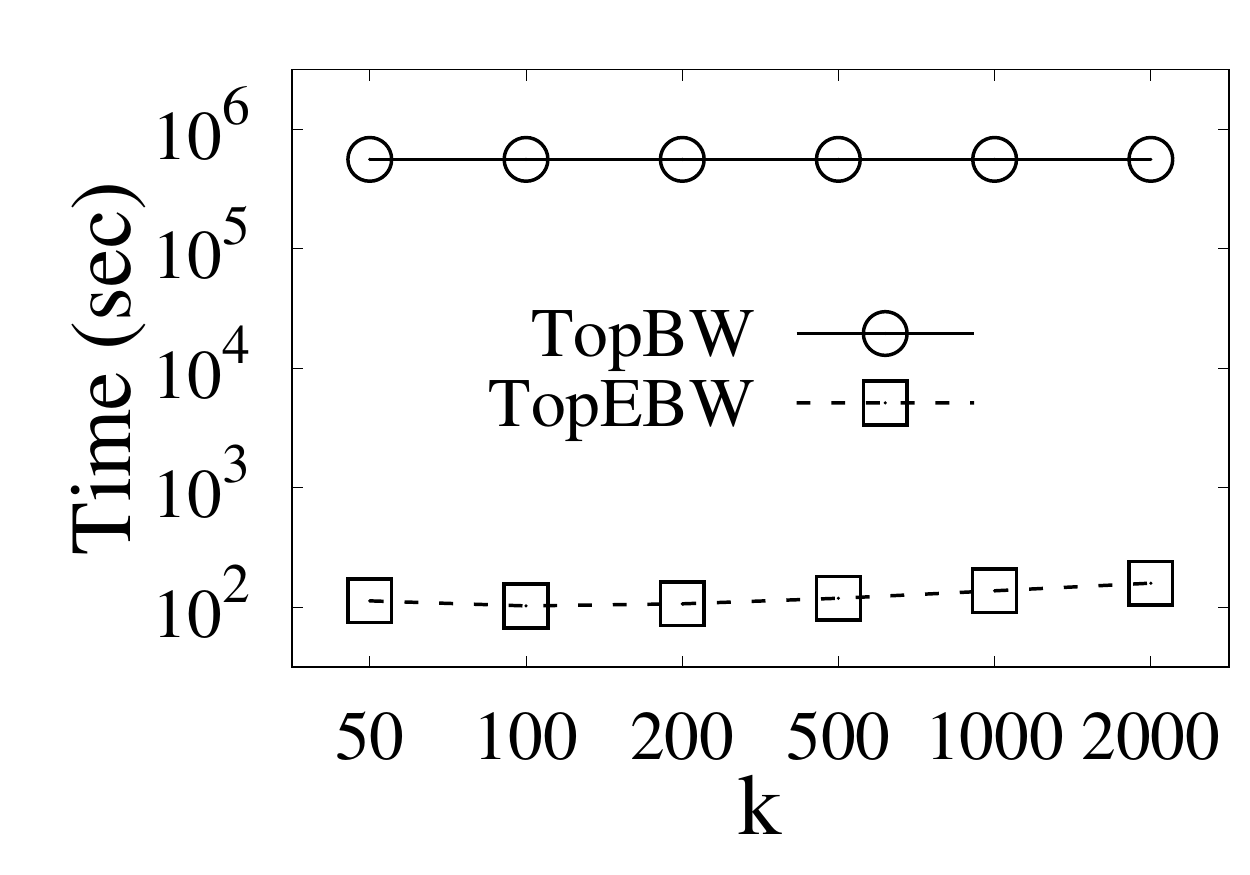}
  \end{minipage}
  }
  \vspace*{-0.3cm}

   \subfigure[{\wikitalk}]{
  \label{fig:exp-btebt-rpt-wikitalk}
  \begin{minipage}{3.2cm}
  \centering
  \includegraphics[width=\textwidth]{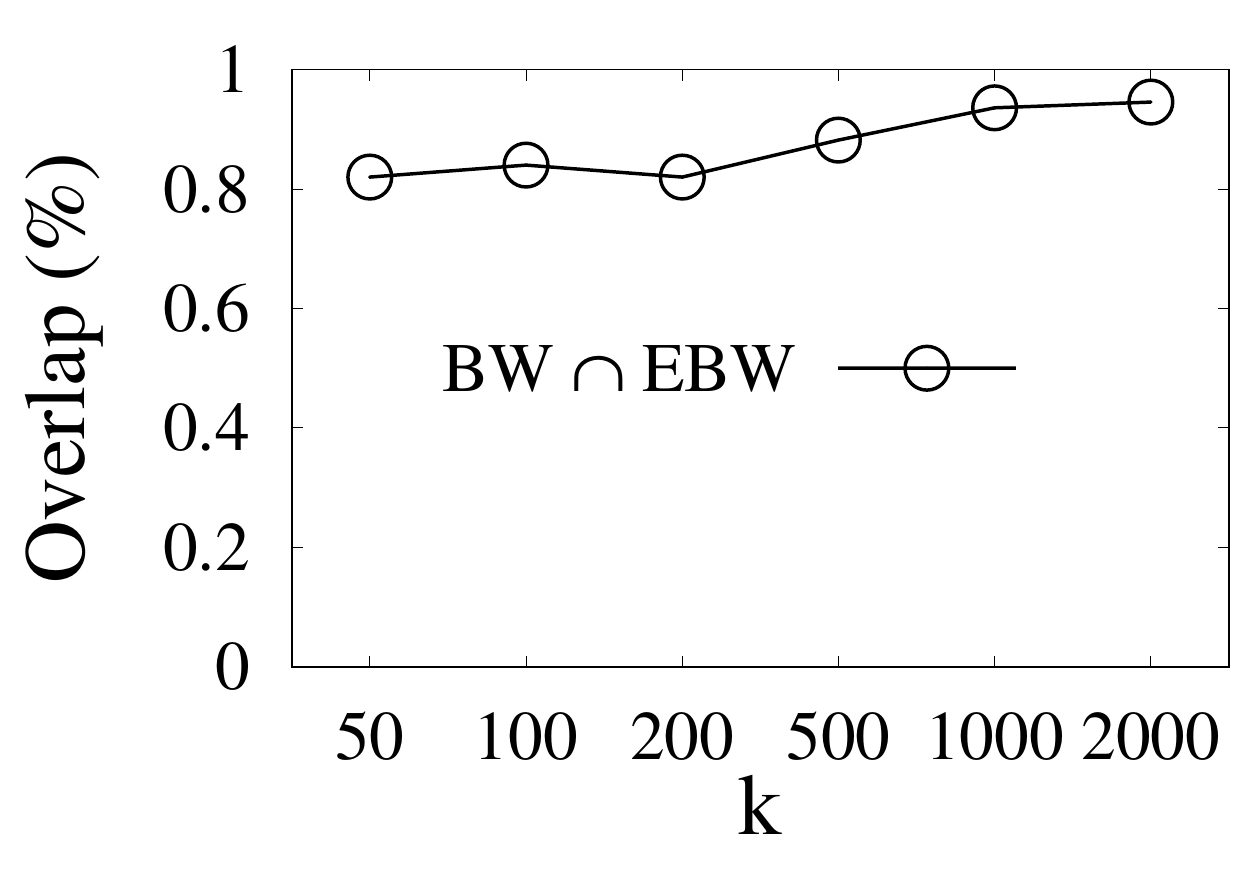}
  \end{minipage}
  }
  \subfigure[{\pokec}]{
  \label{fig:exp-btebt-rpt-pokec}
  \begin{minipage}{3.2cm}
  \centering
  \includegraphics[width=\textwidth]{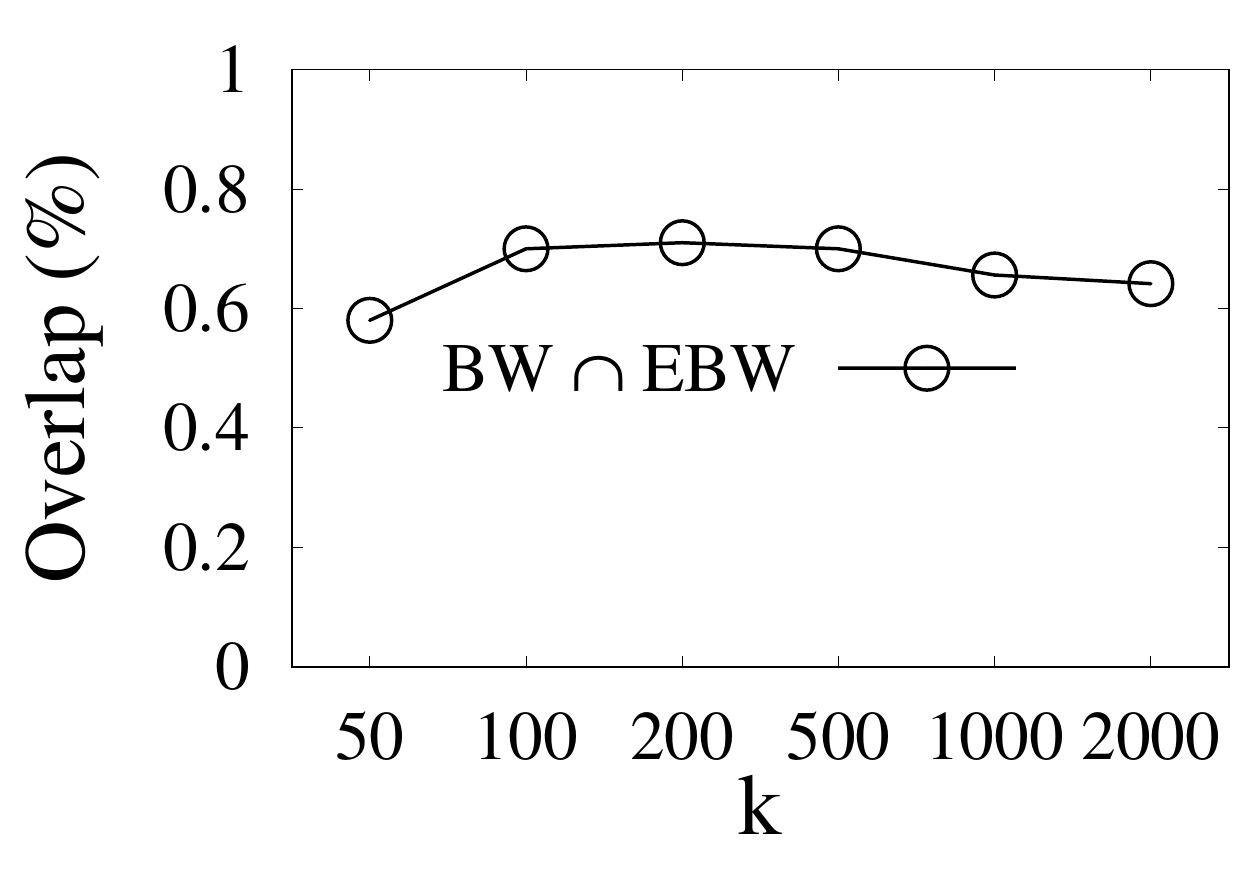}
  \end{minipage}
  }
  \vspace*{-0.2cm}
\caption{Comparison between \btalg and \ebtalg}
\label{fig:exp-btebt-alldata}\vspace*{-0.2cm}
\end{figure}

\stitle{Exp-5: Evaluation of parallel algorithms.} We vary the number of threads $t$ from 1 to 16, and evaluate two parallel algorithms, i.e., \vertexparaalg and \edgeparaalg, with an increasing $t$. We run \optboundalg with the parameter $k=n$ to compute ego-betweennesses as baseline for $t=1$. \figref{fig:exp-parallel-time-vary-thread} shows the results of runtime and speedup ratio on \livejournal. From \figref{fig:exp-parallel-time-vary-thread}, we can see that both \vertexparaalg and \edgeparaalg achieve very good speedup ratios. The runtime of \edgeparaalg is lower than \vertexparaalg with all parameter settings. For example, the running time of \optboundalg to calculate ego-betweennesses for all vertices is 14,487.840 seconds. When $t = 16$, \vertexparaalg takes 1,156.916 seconds and \edgeparaalg consumes 900.439 seconds to compute the results. The speedup ratios of \vertexparaalg and \edgeparaalg are roughly equal to 12 and 16, respectively. These results indicate that our parallel algorithms are very efficient on real-life graphs.

\subsection{Effectiveness testing} \label{subsec:effectiveness}

In this experiment, we evaluate the effectiveness of the proposed algorithms. For comparison, we make use of the state-of-the-art Brandes' algorithm \cite{brandes2001faster} to compute betweenness for each vertex and then identify the top-$k$ vertices with the highest betweennesses. We refer to this baseline algorithm as \btalg and our \optboundalg as \ebtalg for brevity. The top-$k$ results obtained by \btalg and \ebtalg are denoted as \kw{BW} and \kw{EBW} respectively.

\begin{figure}[t!]\vspace*{-0.2cm}
\centering
  \subfigure[\db]{
  \label{fig:exp-btebt-time-db}
  \begin{minipage}{3.2cm}
  \centering
  \includegraphics[width=\textwidth]{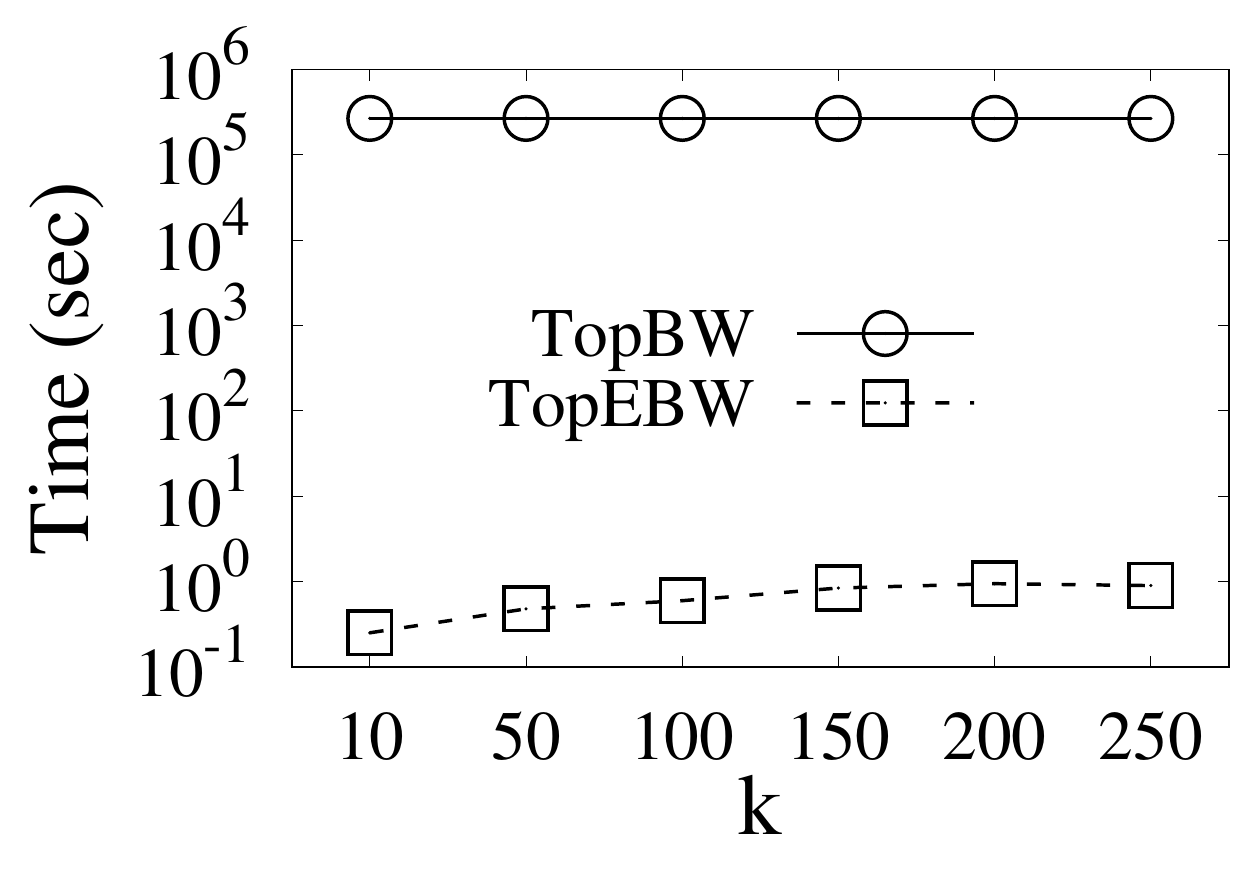}
  \end{minipage}
  }
  \subfigure[\ir]{
  \label{fig:exp-btebt-time-ir}
  \begin{minipage}{3.2cm}
  \centering
  \includegraphics[width=\textwidth]{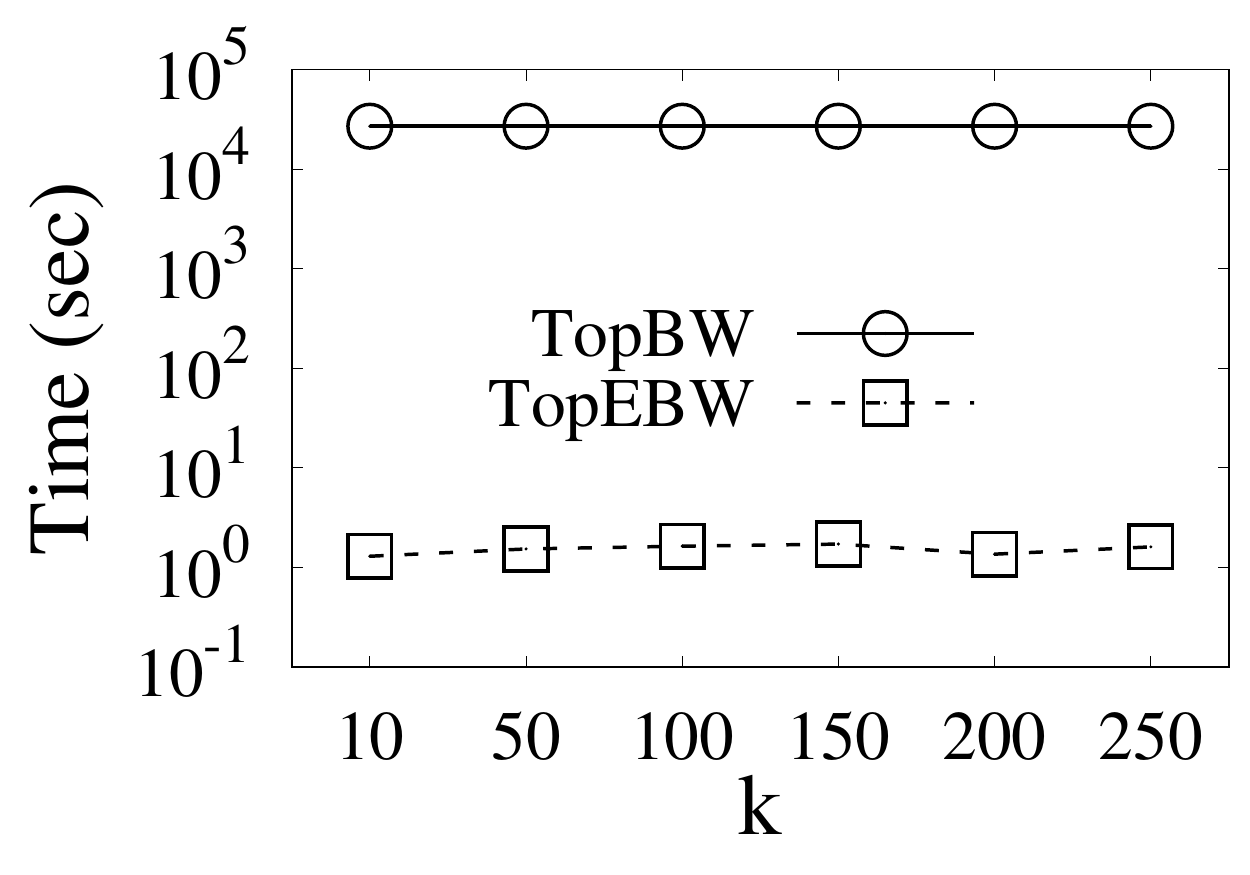}
  \end{minipage}
  }
  \vspace*{-0.3cm}

  \subfigure[\db]{
  \label{fig:exp-btebt-rpt-db}
  \begin{minipage}{3.2cm}
  \centering
  \includegraphics[width=\textwidth]{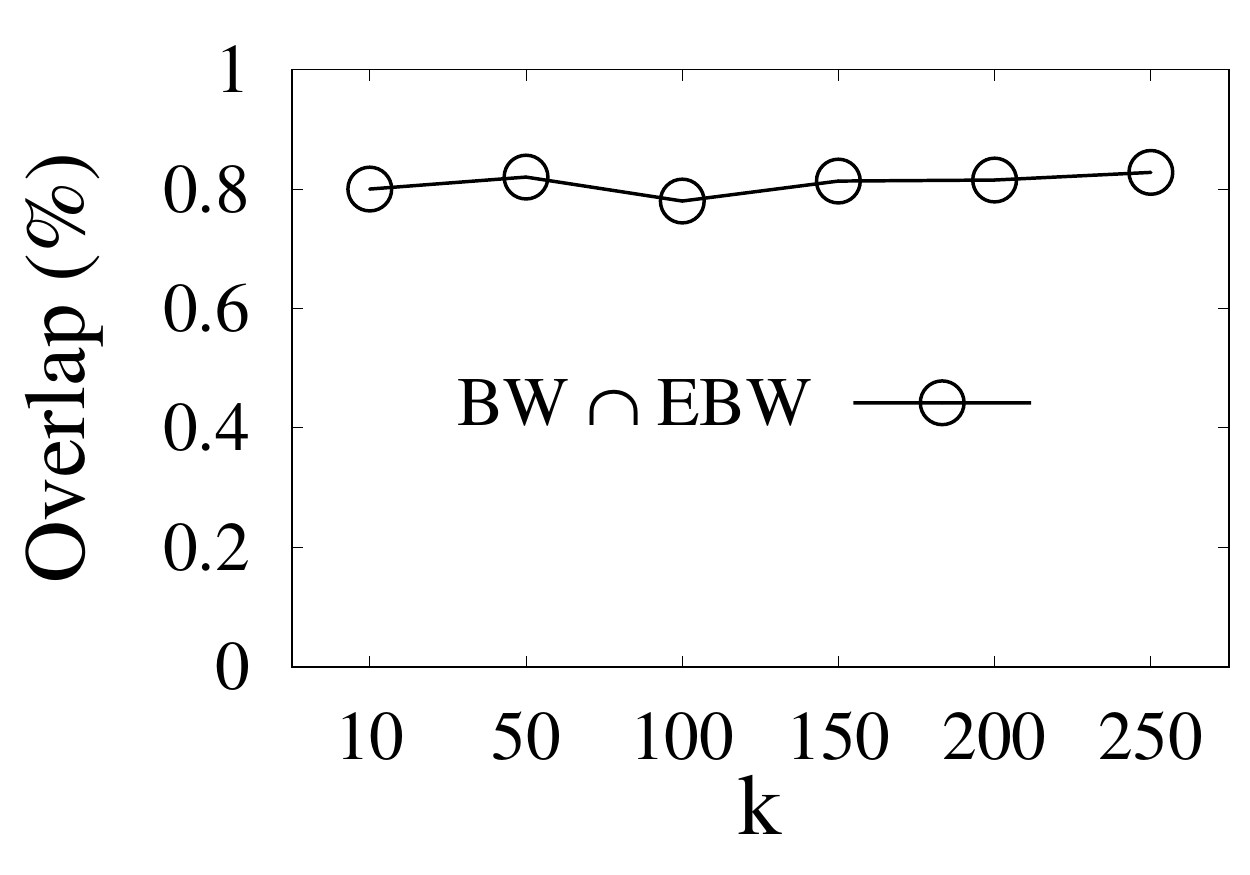}
  \end{minipage}
  }
  \subfigure[\ir]{
  \label{fig:exp-btebt-rpt-ir}
  \begin{minipage}{3.2cm}
  \centering
  \includegraphics[width=\textwidth]{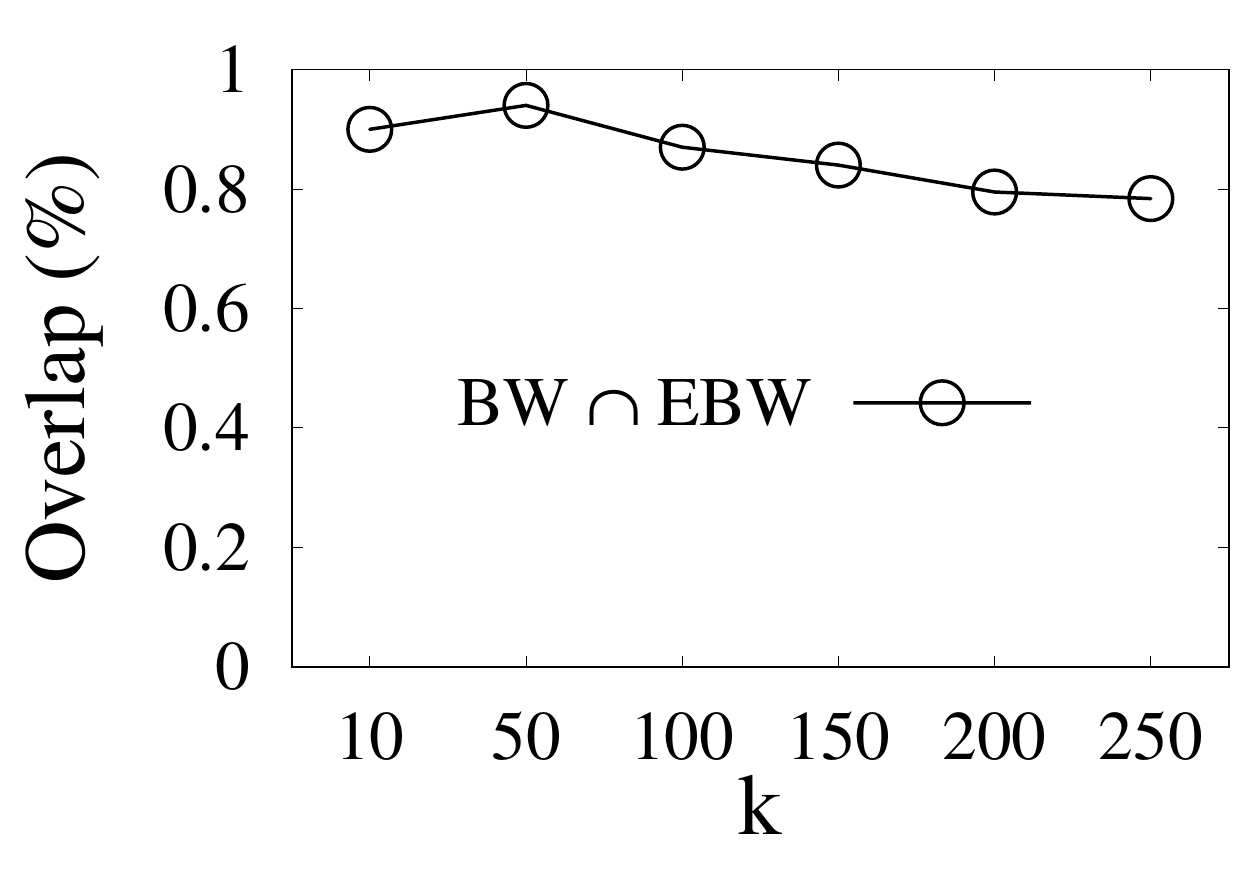}
  \end{minipage}
  }
  \vspace*{-0.2cm}
\caption{Comparison between \btalg and \ebtalg on \dblp}
\label{fig:exp-btebt-dblp}\vspace*{-0.2cm}
\end{figure}

\stitle{Exp-6: Comparison between \btalg and \ebtalg.} We compare \btalg and \ebtalg on \wikitalk and \pokec with $k \in \{50, 100, 200, 500, 1000, 2000\}$. The results on the other datasets are consistent. Note that to speed up the betweenness computation, we also implement a parallel version of \btalg for comparison. The running time of \btalg with 64 threads and \ebtalg is shown in \figref{fig:exp-btebt-alldata}(a-b). Clearly, \ebtalg is at least two orders of magnitude faster than the parallel \btalg within all parameter settings. For example, on \pokec, \ebtalg takes 112.369 seconds, while \btalg consumes 559,322.062 seconds to output the top-$50$ results.

\figref{fig:exp-btebt-alldata}(c-d) report the overlap of the top-$k$ results obtained by \btalg and \ebtalg. As can be seen, the overlap is generally higher than 60\% on all datasets. Particularly, on \wikitalk, the overlap is even more than 80\%. These results indicate that the ego-betweenness centrality is a very good approximation of the betweenness centrality. Moreover, compared to betweenness centrality, the ego-betweenness centrality is much cheaper to compute using the proposed algorithms.

\begin{table}[t!]\vspace*{-0.2cm}
    \scriptsize
	\centering
	\caption{Top-10 scholars in \db}
    \label{tab:topdbres}
	\vspace*{-0.2cm}
    \resizebox{\linewidth}{!}{
		\begin{tabular}{c|c|c|c|c|c}
			\hline
			{}\rule{0pt}{6pt}Top-10 EBW      & $d$       & $C_B$          & Top-10 BW     &   $d$ & $B_T$\\ \hline \hline
			{\bf{*Jiawei Han}}     &412	 &73,928.5	    & {\bf{*Philip S. Yu}}     &360	 &50,320,100\\
            {\bf{*Philip S. Yu}}   &360	 &58,834.1	    & {\bf{*Jiawei Han}}  &412	 &50,059,900\\
            {\bf{*Christos Faloutsos}}        &337	&52,192.9 & {\bf{*Christos Faloutsos}}        &337	&46,340,200\\
            {\bf{*Jian Pei}}       &215  &20,531.1  & {\bf{*Gerhard Weikum}}       &213  &26,232,700\\
            {\bf{*Gerhard Weikum}} &213	 &19,238.3	& {\bf{*Beng Chin Ooi}} &205	 &22,376,200\\
			{\bf{*Michael J. Franklin}}		&220 &17,867.5 & {\bf{*Jian Pei}}		&215 &21,470,900\\
            Michael Stonebraker		&210   &16,081.4  & {\bf{*Michael J. Franklin}}		&220  &20,809,000\\
            {\bf{*Raghu Ramakrishnan}}      &210		&15,930.1 & {\bf{*Raghu Ramakrishnan}}      &210		&18,481,900\\
            {\bf{*Beng Chin Ooi}}	 &205   	&14,848.2    & Haixun Wang	 &183   	&17,062,500\\
            Hector Garcia-Molina &197 	&14,664.8 & H. V. Jagadish &178 	&16,144,700\\
            \hline
		\end{tabular}
	}
\vspace*{-0.2cm}
\end{table}

\begin{table}[t!]
    \scriptsize
	\centering
	\caption{Top-10 scholars in \ir}
    \label{tab:topirres}
	\vspace*{-0.2cm}
    \resizebox{\linewidth}{!}{
		\begin{tabular}{c|c|c|c|c|c}
			\hline
			{}\rule{0pt}{6pt}Top-10 EBW     & $d$       & $C_B$           & Top-10 BW    &   $d$ & $B_T$\\ \hline \hline
			{\bf{*Jeffrey P. Bigham}}     &2441	 &1.4846e+06    &{\bf{*Taesup Moon}}     &2318	 &1.33948e+07\\
            {\bf{*Alex D. Wade}}   &2510	 &1.46767e+06	    & {\bf{*Jeffrey P. Bigham}}  &2441	 &1.1711e+07\\
            {\bf{*Adam Sadilek}}        &1993	&1.30844e+06 & {\bf{*Alex D. Wade}}        &2510	&1.10161e+07\\
            {\bf{*Taesup Moon}}       &2318  &1.25722e+06  & {\bf{*Adam Sadilek}}       &1993  &9.49158e+06\\
            {\bf{*Antonio Gulli}} &1951	 &1.16136e+06	& {\bf{*Antonio Gulli}} &1951	 &9.44098e+06\\
            {\bf{*Henry A. Kautz}}	&1731 &882,981  & {\bf{*Bob Boynton}}		&1618 &7.00364e+06\\
            {\bf{*Bob Boynton}}		&1618   &844,761   & {\bf{*Henry A. Kautz}}		&1731  &6.82747e+06\\
            {\bf{*Padmini Srinivasan}}      &1541		&822,131 & Linchuan Xu      &1834		&6.79258e+06\\
            {\bf{*Yelena Mejova}}	 &1210   	&580,116    & {\bf{*Padmini Srinivasan}}	 &1541   	&6.41121e+06\\
            Raymie Stata &796 	&224,422 &   {\bf{*Yelena Mejova}} &1210 	&5.82391e+06\\
			\hline
		\end{tabular}
	}
\end{table}

\stitle{Exp-7: Case study on \dblp.} We extract two subgraphs, namely, \db and \ir, from \dblp for case study. \db contains the authors in \dblp who had published at least one paper in the database and data mining related conferences. The \db subgraph contains 37,177 vertices and 131,715 edges. The \ir subgraph contains the authors who had published at least one paper in the information retrieval related conferences with 13,445 vertices and 37,428 edges. We invoke \btalg (\ebtalg) to find the top-$k$ highest (ego-)betweennesses scholars on \db and \ir with the parameter $k \in \{10, 50, 100, 150, 200, 250\}$. The results are shown in \figref{fig:exp-btebt-dblp}. Consistent with the previous findings, the running time of \ebtalg is significantly faster than \btalg. Moreover, the overlap of the top-$k$ results is significantly high. For example, on \db, \ebtalg takes 22.777 seconds, while \btalg consumes 27641.190 seconds to output the top-$100$ results. The overlap of the top-$100$ results on \db is 78\%. Similar results can also be observed on \ir.

We also illustrate the top-$10$ scholars on \db and \ir in \tabref{tab:topdbres} and \tabref{tab:topirres}. In both \tabref{tab:topdbres} and \tabref{tab:topirres}, $d$ denotes the number of co-authors of a scholar; $C_B$ and $B_T$ denote the ego-betweenness and betweenness of a scholar respectively. Clearly, the overlaps of the top-$10$ results are 80\% and 90\% on \db and \ir respectively. Moreover, we can see that the top-10 scholars with the highest ego-betweennesses are the most influential in the database, data mining, and information retrieval communities. Such scholars may play a bridge role in connecting different research groups. For example, in \tabref{tab:topdbres}, Professor Jiawei Han has 412 co-authors and maintains connections with many different research groups. Similarly, in \tabref{tab:topirres}, Taesup Moon is interested in diverse areas such as information retrieval, statistical machine learning, information theory, signal processing and so on, thus he plays an important role in promoting the interactions between different research communities. These results indicate that our algorithms can be used to find high influential vertices in a network that act as network bridges.

\section{Related work} \label{sec:relatedwork}
\stitle{Betweenness centrality.} Our work is closely related to betweenness centrality \cite{brandes2001faster, freeman1978centrality}. Betweenness centrality is an important measure of centrality in a graph based on the shortest path, which has been applied to a wide range of applications in social networks \cite{DBLP:conf/semco/Ostrowski15}, biological networks \cite{jeong2001lethalitybio}, computer networks \cite{baldesi2017usetrans}, road networks \cite{eigencent2008newman} and so on. The best-known algorithm for betweenness computation, proposed by Brandes \cite{brandes2001faster}, runs ${\mathcal O}(nm)$ time complexity for unweighted networks. Measuring the betweenness centrality scores of all vertices is notoriously expensive, thus many parallel and approximate algorithms have been developed to reduce the computation cost \cite{DBLP:journals/peerj-cs/Fan0Z17, DBLP:journals/snam/BeheraNRR20, DBLP:conf/cikm/Chehreghani13, DBLP:conf/complexnetworks/FurnoFSZ17, DBLP:conf/infocom/CrescenziFP20}. Fan \etal proposed an efficient parallel GPU-based algorithm for computing betweenness centrality in large weighted networks and integrated the work-efficient strategy to address the load-imbalance problem \cite{DBLP:journals/peerj-cs/Fan0Z17}. Furno \etal studied the performance of a parametric two-level clustering algorithm for computing approximate value of betweenness with an ideal speedup with respect to Brandes' algorithm \cite{DBLP:conf/complexnetworks/FurnoFSZ17}. In this paper, we focus on the ego-betweenness centrality which is first proposed by Everett \etal \cite{everett2005ego} as an approximation of betweenness centrality. Ego-betweenness centrality has gained recognition in its own right as a natural measure of a node’s importance as a network bridge \cite{marsden2002egocentric}. To the best of our knowledge, our work is the first to study the problem of finding top-$k$ ego-betweenness vertices in graphs.

\stitle{Top-$k$ retrieval.} Our work is also related to the top-$k$ retrieval problem, which aims to find $k$ results with the largest scores/relevances based on a pre-defined ranking function \cite{ilyas2008survey}. The general framework for answering top-$k$ queries is to process the candidates according to a heuristic order and prune the search space based on some carefully-designed upper bounds. An excellent survey can be found in \cite{ilyas2008survey}. There are many studies on top-$k$ query processing for heterogeneous applications, such as processing distributed preference queries \cite{chang2002minimal}, keyword queries \cite{luo2007spark}, set similarity join queries \cite{xiao2009top}. An influential algorithm was proposed by Fagin \etal \cite{fagin1999combining, fagin2003optimal}, which considers both random access and/or sequential access of the ranked lists. Recently, some studies take diversity into consideration in the top-$k$ retrieval in order to return diversified ranking results \cite{qin2012diversifying, li2013scalable, angel2011efficient, zhu2011unified, agrawal2009diversifying}. For instance, Li \etal proposed a scalable algorithm to achieve near-optimal top-$k$ diversified ranking with linear time and space complexity with respect to the graph size. Some studies have also been done which focus mainly on exploring influential communities, individuals, and relationships in different networks \cite{wang2010community, bi2018optimal, 15vldbjstrucdiv, chang2017scalable, zhang2020efficient}. For example, the study \cite{bi2018optimal} investigated an instance-optimal algorithm, which runs in linear time complexity without indexes, for computing the top-$k$ influential communities. In this paper, we develop two search frameworks to identify the top-$k$ vertices with the highest ego-betweennesses and propose efficient techniques to maintain top-$k$ results when the graph is updated.

\section{Conclusion} \label{sec:conclusion}
In this paper, we study a problem of finding the top-$k$ vertices in a graph with the highest ego-betweennesses. To solve this problem, we first develop two top-$k$ search frameworks with a static upper bound and a novel dynamic upper bound, respectively. Then, we propose efficient local maintenance algorithms to maintain the ego-betweenness for each vertex when the graph is updated. We also present lazy-update techniques to maintain the top-$k$ results in dynamic graphs. We conduct extensive experiments using five real-life datasets to evaluate the proposed algorithms. The results demonstrate the efficiency and scalability of our algorithms. Also, the results show that the top-$k$ ego-betweenness results are highly similar to the top-$k$ betweenness results, but they are much cheaper to compute by our algorithms.

\bibliographystyle{unsrt}
\bibliography{egobw-arxiv}

\end{document}